\documentclass[letterpaper,11pt]{article}
\usepackage[margin=1in]{geometry}
\usepackage{amsfonts,amsmath,amssymb,amsthm}
\usepackage{thmtools} \usepackage{thm-restate}
\usepackage[noadjust]{cite}
\usepackage{comment,enumerate,hyperref}
\usepackage[colorinlistoftodos,textsize=small,color=red!25!white,obeyFinal]{todonotes}
\newtheorem{theorem}{Theorem}[section]
\newtheorem{corollary}[theorem]{Corollary}

\newtheorem{lemma}[theorem]{Lemma}
\newtheorem{claim}[theorem]{Claim}

\newtheorem{oq}[theorem]{Open Question}

\usepackage{soul}
\theoremstyle{definition}
\newtheorem{definition}{Definition}

\usepackage{xcolor,xspace}
\usepackage{booktabs}

\usepackage[boxruled, noend]{algorithm2e}
\setlength{\algomargin}{2em}

\usepackage{placeins}

\usepackage{multirow}

\usetikzlibrary{decorations.pathmorphing}

\tikzset{snake it/.style={decorate, decoration=snake}}

\usepackage{csquotes}

\newcommand{\dist}{\text{dist}}
\newcommand{\Oish}{\widetilde{O}}

\newcommand{\Thetaish}{\widetilde{\Theta}}

\bibliographystyle{plainurl}

\title{Epic Fail: Emulators can tolerate polynomially many edge faults for free}

\author{Greg Bodwin\thanks{Supported in part by NSF award CCF-2153680}\\ University of Michigan\\ bodwin@umich.edu \and
Michael Dinitz\thanks{Supported in part by NSF award CCF-1909111.} \\ Johns Hopkins University\\ mdinitz@cs.jhu.edu\and
Yasamin Nazari \thanks{Supported in part by Austrian Science Fund (FWF) grant P 32863-N.}\\ University of Salzburg\\ ynazari@cs.sbg.ac.at}


\date{}

\begin{document}
\maketitle

\begin{abstract} 
A $t$-emulator of a graph $G$ is a graph $H$ that approximates its pairwise shortest path distances up to multiplicative $t$ error.
We study fault tolerant $t$-emulators, under the model recently introduced by Bodwin, Dinitz, and Nazari [ITCS 2022]  for \emph{vertex failures}.
In this paper we consider the version for \emph{edge failures}, and show that they exhibit surprisingly different behavior.

In particular, our main result is that, for $(2k-1)$-emulators with $k$ odd, we can tolerate a polynomial number of edge faults \emph{for free}.
For example: for any $n$-node input graph, we construct a $5$-emulator ($k=3$) on $O(n^{4/3})$ edges that is robust to $f = O(n^{2/9})$ edge faults.
It is well known that $\Omega(n^{4/3})$ edges are necessary even if the $5$-emulator does not need to tolerate \emph{any} faults. 
Thus we pay no extra cost in the size to gain this fault tolerance.
We leave open the precise range of free fault tolerance for odd $k$, and whether a similar phenomenon can be proved for even $k$.

\end{abstract}

\renewcommand{\arraystretch}{1.5} 

\section{Introduction}

A basic question in theoretical computer science is to sparsify an input graph $G$ while approximately preserving its shortest path distances.
These sparsifiers have applications in networking, algorithms, and more (see survey \cite{ahmed2020graph}). We will discuss two important types of sparsifiers: spanners and emulators.  

\begin{definition} [Spanners]
For an input graph $G = (V, E, w)$, a $t$-spanner is a subgraph $H$ such that for all $u, v \in V$, we have
$\dist_G(u, v) \le \dist_H(u, v) \le t \cdot \dist_G(u, v)$.
The parameter $t$ is called the stretch of the spanner. 
\end{definition}

\begin{definition} [Emulators] \label{def:emulators}
For an input graph $G = (V, E, w)$, a $t$-emulator is a graph $H = (V, E_H)$ with the following property.
If we set the weight of each edge $(u, v) \in E_H$ to be $\dist_G(u, v)$, then for all $u, v \in V$, we have
$\dist_G(u, v) \le \dist_H(u, v) \le t \cdot \dist_G(u, v).$
\end{definition}

Elsewhere in the literature, emulators are often defined slightly differently, as arbitrary weighted graphs $H = (V, E_H, w)$ that approximate distances of $G$.
However, once we choose the edge set $E_H$, the choice of edge weights is clear: for an emulator edge $(u, v)$, we must choose $w(u, v) \ge \dist_G(u, v)$ to satisfy the first inequality in Definition \ref{def:emulators}, and there is no advantage to choosing $w(u, v)$ strictly larger than $\dist_G(u, v)$.\footnote{We mean here that there is no advantage in terms of quality of distance approximation.  One could imagine an advantage to choosing larger weights for other reasons, e.g., perhaps we want to compute the emulator quickly and so we use a lossy approximation algorithm to set weights.  But these concerns will not arise in this paper.}
Hence, this optimal choice of edge weights is hardcoded into Definition \ref{def:emulators}.
(This hardcoding will be convenient in our discussion of fault tolerant emulators below.)
We also note that, for both spanners and emulators, the first inequality $\dist_G(u, v) \le \dist_H(u, v)$ is trivially implied by the fact that $H$ is a subgraph of $G$ (for spanners) or the way the weights are set (for emulators).

Spanners were first introduced by~\cite{PelegS:89,PelegU:89} in the context of distributed computing, and they have since found many uses throughout algorithms.

Emulators were first introduced in the context of shortest path algorithms \cite{dor2000all}, but they have close connections to distributed computing as well, e.g.\ they form a reasonable model of overlay networks \cite{BDR22}.
A priori, one could imagine that the additional flexibility of emulators allows for a better size/stretch tradeoff than the one available to spanners.
But this possibility was refuted in a classic 1993 work of Alth{\" o}fer, Das, Dobkin, Joseph, and Soares, which determined the same (up to logs) tradeoff for both spanners and emulators:
\begin{theorem} [\cite{AlthoferDDJS:93}] \label{thm:nonftspanners} \label{thm:tightspanners}
For any positive integer $k$, every $n$-node graph has a $(2k-1)$-\textbf{spanner} on $O(n^{1+1/k})$ edges.
Moreover, assuming the girth conjecture~\cite{erdHos1964extremal}, this is essentially best possible in the sense that there are $n$-node graphs that do not admit $(2k-1)$-\textbf{emulators} on $o(n^{1+1/k} / \log n)$ edges.
\end{theorem}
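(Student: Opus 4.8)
The plan is to prove the two halves with the classical tools. For the upper bound I would analyze the \emph{greedy spanner}: sort the edges of $G$ by nondecreasing weight, initialize $H$ with no edges, and insert an edge $(u,v)$ into $H$ exactly when its endpoints are currently more than $(2k-1)\cdot w(u,v)$ apart in the partial spanner built so far. The stretch bound is an edge-by-edge argument: if $(u,v) \in E(G)$ was inserted then it sits in $H$ with weight $w(u,v)$, and if it was skipped then at that moment the partial spanner already contained a $u$--$v$ path of weight at most $(2k-1)w(u,v)$, a property preserved as we add more edges; summing this over a shortest path of $G$ then yields $\dist_H(x,y) \le (2k-1)\dist_G(x,y)$ for every pair, so $H$ is a $(2k-1)$-spanner.

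For the size bound the key structural fact is that the greedy spanner has girth greater than $2k$: if it contained a cycle $C$ on at most $2k$ edges, the last edge $e = (u,v)$ of $C$ to be inserted would, at insertion time, have its two endpoints joined by the remaining $\le 2k-1$ edges of $C$, each of weight at most $w(e)$, hence at distance at most $(2k-1)w(e)$ --- contradicting the insertion rule. A graph on $n$ vertices with girth greater than $2k$ has $O(n^{1+1/k})$ edges by the standard Moore-type estimate: a BFS ball of radius $k$ around any vertex is a tree, forcing minimum degree $O(n^{1/k})$, and since every subgraph also has girth greater than $2k$, iteratively peeling off minimum-degree vertices bounds the total edge count. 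This gives a $(2k-1)$-spanner (hence $(2k-1)$-emulator) on $O(n^{1+1/k})$ edges.

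For the conditional optimality, I would assume the \Erdos{} girth conjecture and fix an $n$-node unit-weight graph $G$ with girth greater than $2k$ and $\Omega(n^{1+1/k})$ edges, then show every $(2k-1)$-emulator $H$ of $G$ has $\Omega(n^{1+1/k})$ edges. Fix once and for all a shortest $G$-path for each edge of $H$. First note that deleting any edge $e = (u,v)$ of $G$ leaves $\dist_{G-e}(u,v) \ge 2k$, since a shorter detour together with $e$ would form a cycle of length at most $2k$. Now for $e = (u,v) \in E(G)$ we have $\dist_H(u,v) \le 2k-1$, and since each emulator edge has weight at least $1$, a shortest $u$--$v$ path in $H$ uses at most $2k-1$ edges; replacing each of them by its chosen shortest $G$-path produces a $u$--$v$ walk in $G$ of length at most $2k-1$, which by the observation above must traverse $e$. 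Hence $e$ lies on the chosen $G$-path of some edge of $H$; charge $e$ to one such $H$-edge. Any $H$-edge appearing on such a path has $G$-distance at most $2k-1$, so its chosen $G$-path has at most $2k-1$ edges and is charged by at most $2k-1$ edges of $G$. Thus $|E(G)| \le (2k-1)|E(H)|$, so $|E(H)| = \Omega(n^{1+1/k})$ for constant $k$, i.e.\ no $(2k-1)$-emulator of $G$ has $o(n^{1+1/k})$ edges.

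I expect the only genuinely delicate point to be this last charging argument: one must be careful that fixing canonical $G$-expansions of emulator edges makes the ``expanded walk lives in $G-e$ when $e$ is unused'' reasoning airtight, and that the charging multiplicity is correctly bounded by the path length $2k-1$. Everything else --- the greedy stretch analysis, the greedy girth bound, and the girth--density extremal estimate --- is routine and textbook.
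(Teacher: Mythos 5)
The paper does not prove this theorem; it cites it directly from Alth\"ofer, Das, Dobkin, Joseph, and Soares, so there is no in-paper argument to compare against. Your reconstruction is the classical one, and it is correct.

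For the upper bound, the greedy insertion rule, the observation that the output has girth greater than $2k$, and the Moore-type density bound for high-girth graphs are exactly the Alth\"ofer et al.\ argument. For the conditional lower bound, the charging argument is also standard, and you have correctly identified and handled the one delicate point. The chain of facts is: for a unit-weight $G$ of girth $>2k$ and an edge $e=(u,v)\in E(G)$, one has $\dist_{G-e}(u,v)\ge 2k$; since $H$ is a $(2k-1)$-emulator and $\dist_G(u,v)=1$, a shortest $u$--$v$ path $\pi$ in $H$ has total weight at most $2k-1$ and therefore uses at most $2k-1$ edges, each of weight at most $2k-1$; expanding each such $H$-edge $h\in\pi$ into its fixed shortest $G$-path $P_h$ yields a $u$--$v$ walk in $G$ of length at most $2k-1<2k$, which must therefore traverse $e$; hence $e\in P_h$ for some $h\in\pi$ with weight at most $2k-1$. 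Charging $e$ to such an $h$ (crucially, to an $h$ \emph{on $\pi$}, which is what guarantees the recipient has small weight and hence $|P_h|\le 2k-1$) gives $|E(G)|\le(2k-1)|E(H)|$, so $|E(H)|=\Omega(n^{1+1/k})$ for fixed $k$. No gaps.
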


\subsection{Fault-Tolerance}

Distributed computing often deals with networks in which vertices or edges can temporarily fail.
Thus, when spanners and emulators are applied in distributed contexts, it is natural to ask for \emph{fault-tolerant} versions of these sparsifiers.  The precise definitions are deferred to Section~\ref{sec:FTdefs}, but at a high level we say that a spanner or emulator $H$ is $f$-fault tolerant if, for every set of at most $f$ failing parts (edges or vertices), what remains of $H$ is still a spanner/emulator of what remains of $G$.
This leads to a clear question: what is the ``price'' of fault-tolerance?
That is, if we want to build an $f$-fault-tolerant spanner or emulator with stretch $2k-1$, what factor (depending on $f$) must we pay in the sparsity bound, above the baseline size of $O(n^{1+1/k})$ determined by Theorem \ref{thm:nonftspanners} for the non-fault tolerant version?

There has been a fruitful line of work tackling this question for spanners and emulators (e.g., \cite{ChechikLPR:10,DinitzK:11,BDPW18,BP19,DR20,BDR21,BDR22,BDN22,Par22}).
This has produced optimal bounds on the price of fault tolerance in some settings, distinctions between edge and vertex faults, and a suite of algorithmic techniques that allow for efficient computation of sparse fault tolerant spanners/emulators (including in distributed and parallel models of computation).
In particular, following~\cite{ChechikLPR:10,DinitzK:11}, results in \cite{BDPW18, BDR22, BDN22} have established that for edge/vertex fault tolerant spanners and for vertex fault tolerant emulators, the price of fault tolerance is a sublinear polynomial in $f$.
Such a dependence can still be expensive for the important scenario where $f$ is large, i.e.~a (sublinear) polynomial in the number of nodes $n$.

This paper is the first to investigate the remaining setting of edge fault tolerant emulators, and we show a quantitative size bound that is surprisingly different from the ones for these related objects.
Our main result is that there is a parameter regime $f \le \text{poly}(n)$ in which the price of edge fault tolerance for emulators is \emph{free}.
That is, for fixed odd $k$, we construct $(2k-1)$-emulators that can handle $f$ edge faults and only have $O(n^{1+1/k})$ edges, the same size bound settled by Theorem \ref{thm:tightspanners} for the non-faulty setting.

\subsubsection{Fault-Tolerance Definitions and Previous Results} \label{sec:FTdefs}
We now define the objects that we study and reference.  For general graph spanners, fault-tolerance was initially studied by Chechik, Langberg, Peleg, and Roditty~\cite{ChechikLPR:10}, who introduced the following definition:
\begin{definition} [EFT Spanners] \label{def:eftspan}
For an input graph $G$, an $f$-edge fault tolerant (EFT) $t$-spanner is a subgraph $H$ of $G$ such that $\dist_{G \setminus F}(u,v) \leq \dist_{H \setminus F}(u,v) \leq t \cdot \dist_{G \setminus F}(u,v)$ for all $F \subseteq E$ with $|F| \leq f$ and for all $u,v \in V$. 
\end{definition}
Equivalently, $H \setminus F$ must be a $t$-spanner of $G \setminus F$ for all possible fault sets $F$.
Vertex fault-tolerance is defined analogously, with $F$ as a vertex set rather than an edge set.

After significant work following~\cite{ChechikLPR:10} (see in particular~\cite{BDPW18,BP19,DR20,BDR21,BDR22}), we now have a generally good understanding of both vertex- and edge-fault tolerant spanners.  Precise bounds are given in Table~\ref{tbl:ftprice}, but the high-level view is that the price of fault-tolerance is $f^{1-1/k}$ for vertex fault tolerant spanners, and even smaller for edge fault tolerant spanners (approximately $f^{1/2}$, although the exact bounds remain open).
With fault-tolerant spanners relatively well understood, fault-tolerant \emph{emulators} were first studied by Bodwin, Dinitz, and Nazari \cite{BDN22}.
They studied emulators under vertex failures, but their model under edge failures is the following:

\begin{definition}[EFT Emulators]
For an input graph $G = (V, E)$, we say that $H = (V, E')$ is an $f$-edge fault tolerant (EFT) $t$-emulator $H$ if, for all $F \subseteq E$ with $|F| \le f$, $H$ is a $t$-emulator of $G \setminus F$.
\end{definition}

Recall that Definition \ref{def:emulators} hardcodes the edge weights of emulators, and thus this definition specifically means that $H$ approximates the distances of $G \setminus F$ \emph{when the edge weights of $H$ are set according to distances in $G \setminus F$}.
That is, the following definition of EFT emulators is equivalent to the above.
Given a set $F \subseteq E$, let $H^F = (V, E', w_F)$ be the weighted version of $H$ where the weight of each edge $(u,v) \in E'$ is set to $w_F(u,v) = \dist_{G \setminus F}(u,v)$.
Then an $f$-EFT $t$-emulator must satisfy
\[\dist_{G \setminus F}(u,v) \leq \dist_{H^F}(u,v) \leq t \cdot \dist_{G \setminus F}(u,v)\]
for all $F \subseteq E$ with $|F| \leq f$ and $u,v \in V$.

Intuitively, this means a set of edge failures $F$ does not change the \emph{edge set} of $H$ at all; rather, its effect on $H$ is \emph{updated edge weights} due to increasing distances in $G \setminus F$.
Note that even when an edge $e$ is in both the emulator and the fault set, it remains as an edge in $H^F$; it just has a larger weight than it does in the pre-fault graph $H^{\emptyset}$.

This model may seem too strong: why shouldn't failures in $G$ affect the edges left in $H$, and why should the edges of $H$ automatically get reweighted to the shortest path in $G \setminus F$?

It turns out that this weight-updating behavior arises naturally in computer networking and distributed computing, particularly in overlay networks, which use virtual edges that inherit their weights from the routing topology of the underlying graph.
When pieces of the underlying graph fail, the virtual edges remain intact, but the underlying routing algorithm will automatically update to avoid failures and so the inherited lengths of the virtual edges will reconverge to the new shortest paths.
We refer the interested reader to~\cite{BDN22} for further practical and theoretical justification for this definition, and we also overview this discussion ourselves in Appendix~\ref{app:emulator-model}.

\subsection{Our Results}

\begin{table} \centering
\begin{tabular}{c|c|c}
& $(2k-1)$ Spanners & $(2k-1)$ Emulators\\
\hline
\multirow{4}{*}{Vertex Faults} & \multirow{4}{*}{$\Theta\left(f^{1-\frac{1}{k}}\right)$} & $\Theta\left(f^{\frac{1}{2}}\right)$ if $k=2$,\\
& & $\Thetaish_k\left( f^{\frac{1}{2} - \frac{1}{2k}}\right)$ if $k$ is odd,\\
& & $\Oish_k\left( f^{\frac{1}{2}} \right)$ and\\
& & $\Omega_k\left( f^{\frac{1}{2} - \frac{1}{2k}} \right)$ if $k \ge 4$ is even\\
\hline
\multirow{5}{*}{Edge Faults} & $\Theta\left(f^{\frac{1}{2}}\right)$ if $k=2$, & \color{blue} $\Theta\left(f^{\frac{1}{2}}\right)$ if $k=2$,\\
& $\Theta_k\left( f^{\frac{1}{2} - \frac{1}{2k}} \right)$ if $k$ is odd, & \color{blue} $ O\left(1 + f^{\frac{3}{5}} n^{-\frac{2}{15}}\right) $ if $k=3$,\\
& $O_k\left( f^{\frac{1}{2}} \right)$ and & \color{blue} $O_k\left(1 + f^{\frac{k}{k+2}} n^{- \frac{2}{k(k+2)}} + f^{\frac{1}{2}} n^{- \frac{1}{k(k+1)}}\right) $ if $k$ is odd\\
& $\Omega\left( f^{\frac{1}{2} - \frac{1}{2k}} \right)$ if $k \ge 4$ is even & \color{red} $\Theta(1)$ if $f = O\left(n^{\frac{2}{9}}\right), k=3$\\
& & \color{red} or $f = O\left( n^{\frac{2}{k(k+1)}} \right), k$ odd
\end{tabular}
\caption{\label{tbl:ftprice} Price of fault tolerance for edge and vertex fault tolerant spanners and emulators, proved in \cite{BDPW18,BP19,BDN22,BDR22}, including the new bounds proved in this paper (lower right).  All objects also have a lower bound of $\Omega(fn)$ edges, not reflected in this table.  We note that the upper bound of $O(f^{1/2})$ for EFT emulators when $k=2$ is \emph{not} proved in this paper. Rather, this is inherited from the other objects in this table; e.g., it follows from the definitions that any EFT spanner is also an EFT emulator.}
\end{table}

We show that EFT emulators allow for free fault-tolerance in a polynomial range of $f$.  
Our results, together with the previous work on fault-tolerant spanners and emulators, are summarized in Table~\ref{tbl:ftprice}.  
We will first state our main result in the special case $k=3$, for simplicity and because for a technical reason, our bound is slightly better here than for larger $k$.
\begin{theorem} [Main Result, $k=3$] \label{thm:3-upper-main}
Every $n$-node graph has an $f$-EFT $5$-emulator $H$ satisfying
$$|E(H)| = O\left(n^{4/3} + n^{6/5} f^{3/5} + nf\right).$$
\end{theorem}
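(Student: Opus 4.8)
The plan is to build a fault-tolerant refinement of the classical clustered $5$-emulator. Fix a sampling probability $p$ (to be optimized) and a second parameter $g$, and let $A \subseteq V$ be a random set containing each vertex independently with probability $p$, so that $|A| = O(np)$ with high probability. The emulator $H$ will consist of three groups of edges. First, for every $v \in V$, an edge from $v$ to each vertex of the ball $B_v$ obtained by growing around $v$ in $G$ until it first contains $g$ vertices of $A$; a routine concentration/construction argument gives $|B_v| = O(g/p)$. Second, for every $v\in V$, edges from $v$ to the $\le f+1$ vertices of $A$ reached by a greedy family of edge-disjoint paths from $v$ to $A$ (repeatedly take a shortest $v$-to-$A$ path in $G$ avoiding the edges used so far, stopping after $f+1$ paths or when no path remains); this contributes $O(nf)$ edges. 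Third, the complete graph on $A$, contributing $O((np)^2)$ edges. Altogether $|E(H)| = O\!\left(ng/p + nf + (np)^2\right)$, and the theorem will follow by choosing $p$ and $g$ well.

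The observation that makes free fault tolerance possible is that the third group of edges is automatically immune to edge faults: for every fault set $F$ and every $a,a' \in A$, the edge $(a,a')$ survives in $H$ and carries weight $\dist_{G\setminus F}(a,a')$ in $H^F$, so $\dist_{H^F}(a,a') \le \dist_{G\setminus F}(a,a')$ regardless of $F$. Consequently the only delicate part of the stretch analysis is the ``first/last hop'': we must show that for every $F$ with $|F|\le f$ and every vertex $v$, $v$ can reach $A$ in $H^F$ at a cost comparable to the distance we are trying to approximate. This is exactly the role of the first two edge groups. The greedy edge-disjoint paths guarantee that for \emph{any} $F$ with $|F|\le f$ at least one such path to $A$ is untouched by $F$ (unless $v$ is already $F$-disconnected from $A$, a case treated separately), so $v$ retains a cheap route into $A$; and the enlarged ball $B_v$ guarantees that when the relevant other endpoint of the pair is nearby it is spanned directly. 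The sizes of these two gadgets — controlled by $f$ and by $g,p$ — are what must be balanced.

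Granting a suitable first/last-hop statement, the stretch analysis mirrors the faultless $5$-emulator argument. Fix $u,v$ and $F$ with $|F|\le f$, write $d = \dist_{G\setminus F}(u,v)$, and let $P$ be a shortest $u$–$v$ path in $G\setminus F$. If $v\in B_u$ or $u\in B_v$ then $H$ contains the edge $(u,v)$, which has weight exactly $d$ in $H^F$, so the stretch is $1$. Otherwise one argues from $v\notin B_u$ (resp.\ $u\notin B_v$) that $\dist_{G\setminus F}(u,A)\le d$ (resp.\ $\dist_{G\setminus F}(v,A)\le d$): informally, $v$ is not $F$-strictly-closer to $u$ than every vertex of $A$. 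Combining this with the first/last-hop guarantee yields an $H^F$-path from $u$ to some $a_1\in A$ and from $v$ to some $a_2\in A$, each of length $O(d)$; then routing $u \leadsto a_1 \to a_2 \leadsto v$ through the complete graph on $A$, whose single edge $(a_1,a_2)$ has weight $\dist_{G\setminus F}(a_1,a_2) \le \dist_{G\setminus F}(a_1,u) + d + \dist_{G\setminus F}(v,a_2)$, gives total $H^F$-length $\le 5d$ once the constants are tracked, just as in the faultless case. (For larger odd $k$ the complete graph on $A$ is replaced by a recursively constructed fault-tolerant sub-emulator on $A$, which is why $k=3$ is slightly better.)

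It remains to optimize the size. Balancing $ng/p$, $nf$, and $(np)^2$ while respecting the constraint forced by the first/last-hop argument — which requires $g$ to grow with $f$ — gives $O\!\left(n^{4/3} + n^{6/5}f^{3/5} + nf\right)$, with the middle term dominated by $n^{4/3}$ exactly when $f = O(n^{2/9})$. I expect the first/last-hop statement to be the main obstacle. The clean part (the edge-disjoint paths) only guarantees a surviving route whose length could a priori be much larger than $d$, so one genuinely has to combine it with the ball and with a careful accounting of which vertices along $P$ retain surviving connections into $A$; moreover a ball containing many vertices of $A$ need not be highly edge-connected to its center, so bottleneck edges must be handled explicitly. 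Making this argument quantitatively tight in the interplay between $g$, $p$, and $f$ — rather than settling for a weaker polynomial dependence — is what produces the exponents above, and is the crux of the proof. A secondary technicality throughout is reconciling objects defined with respect to $G$ (the balls, the greedy paths) with distances in $G\setminus F$, for which one leans on the monotonicity $\dist_G \le \dist_{G\setminus F}$ together with concentration over the random choice of $A$.
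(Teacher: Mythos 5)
Your proposal takes a clustering/sampling route that is entirely different from the paper's, and the step you yourself flag as the crux --- the ``first/last-hop'' lemma --- is left unproven, with no concrete mechanism proposed to establish it. This is a genuine gap, not a technicality. The $\le f+1$ edge-disjoint $v \leadsto A$ paths are disjoint in $G$, so after removing a fault set $F$ with $|F|\le f$ at least one path $P_i$ survives and the emulator edge to its $A$-endpoint has weight $\dist_{G\setminus F}(v,a)\le |P_i|$; but $|P_i|$ bears no relation to $d=\dist_{G\setminus F}(u,v)$, and a surviving path can be arbitrarily long relative to $d$. The ball $B_v$ guarantees $g$ vertices of $A$ are near $v$ in $G$, but a cut of $\le f$ edges near $v$ can separate $v$ from all of $A\cap B_v$ in $G\setminus F$, making every emulator edge from $v$ into $A\cap B_v$ have huge weight in $H^F$. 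You acknowledge both obstacles, but nothing in the proposal shows how to combine the two devices into a bound of the form $\dist_{G\setminus F}(v,A)=O(d)$ realized by an edge of $H$, which is exactly what routing through the complete graph on $A$ requires. Absent that, the claimed exponent $n^{6/5}f^{3/5}$ is unsupported: the balancing in your last paragraph presupposes a quantitative constraint tying $g$ to $f$ and $p$ that is never derived, and getting \emph{free} fault tolerance would require this missing lemma to hold with essentially no slack --- a demanding property for a clustering-style argument.

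The paper proves Theorem~\ref{thm:3-upper-main} by a completely different, non-metric route: run a fault-tolerant greedy algorithm (process edges of $G$ in nondecreasing weight order; add $(u,v)$ to $H$ whenever some fault set $F$ with $|F|\le f$ makes $\dist_{H^F}(u,v)>5\, w(u,v)$), and then analyze the output combinatorially. The key structural fact is that the output admits a \emph{$6$ double-blocking set} of size at most $f|E(H)|$ --- a collection of edge pairs that blocks every cycle of length $\le 6$ once and, crucially, every cycle of length $\le 4$ \emph{twice}, where the double-blocking of short cycles is precisely what exploits the automatic reweighting of emulator edges. A Moore-bound-style argument over ``SCUM'' $3$-paths (a counting lemma plus a dispersion lemma) then shows that any $n$-node graph with such a double-blocking set has $O(n^{4/3}+n^{6/5}f^{3/5}+nf)$ edges. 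That analysis never needs a statement about how well a single vertex can reach a sampled hub set after faults, which is exactly the statement on which your approach is stuck; if you want to salvage the clustering plan you would need to actually prove the first/last-hop lemma, and the paper gives you no leverage to do so.
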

\begin{corollary}
For any $n$-node graph and parameter $f = O(n^{2/9})$, there is an $f$-EFT $5$-emulator $H$ satisfying $|E(H)| = O(n^{4/3})$.
\end{corollary}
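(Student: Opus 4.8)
Since the Corollary is just Theorem~\ref{thm:3-upper-main} with a parameter substituted, its ``proof'' is one line of arithmetic: for $f = O(n^{2/9})$ we have $n^{6/5}f^{3/5} = O\big(n^{6/5}\cdot n^{2/15}\big) = O(n^{4/3})$ and $nf = O(n^{11/9}) = o(n^{4/3})$, so the bound $O\!\left(n^{4/3} + n^{6/5}f^{3/5} + nf\right)$ collapses to $O(n^{4/3})$. The real content is therefore Theorem~\ref{thm:3-upper-main}, so that is what I would actually plan to prove; the corollary's regime is exactly the special case where the construction should need essentially no redundancy beyond the classical one.

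\textbf{Construction.} I would start from the textbook $O(n^{4/3})$ non-fault-tolerant $5$-emulator: a Thorup--Zwick-style hierarchical clustering with levels $V = A_0 \supseteq A_1 \supseteq A_2$, where $A_i$ is obtained by subsampling $A_{i-1}$, each vertex is joined to its pivot (nearest center) at each level and to its ``bunch'' of nearby vertices, and one adds a clique on the top set $A_2$. The structural fact I would lean on is that emulators \emph{self-heal}: a clique on $A_2$ is automatically robust to \emph{arbitrarily many} edge faults, since every clique edge $(a,a')$ is reweighted to $\dist_{G\setminus F}(a,a')$ and the triangle inequality forces the clique distance on $A_2$ to equal the true metric of $G\setminus F$ restricted to $A_2$. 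Hence only the pivot/bunch edges --- those joining a vertex to a nearby center --- need to be robustified. I would do this by replacing ``nearest center at level $i$'' with a \emph{robustly close family} $C_v$ of level-$i$ centers for each vertex $v$: a small set of nearby centers chosen so that the minimum edge cut in $G$ separating $v$ from $C_v$ exceeds $f$, so that by Menger's theorem $f+1$ edge-disjoint, short $v$--$C_v$ paths survive every $F$ with $|F|\le f$, leaving at least one center of $C_v$ still close to $v$ in $G\setminus F$; the bunches are enlarged correspondingly. The sampling probability $p$ of the center sets and the sizes $|C_v|$ are free parameters. Finally, vertices sitting behind small edge cuts (in particular vertices of degree $\le f$) are handled by a separate base case that simply retains $O(f)$ incident edges per vertex; this produces the additive $nf$ term and is forced by the $\Omega(fn)$ lower bound.

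\textbf{Analysis.} For a pair $u,v$ and a fault set $F$ with $|F|\le f$, take a shortest path in $G\setminus F$ and run the usual $5$-emulator routing --- $u$ hops to a nearby center, thence to a nearby center of $v$ through the (free) clique on $A_2$, thence down to $v$ --- but now drawing each ``nearby center'' from the robust family $C_u$ (resp.\ $C_v$) rather than a single pivot, and using the Menger guarantee together with the fact that a large ball $B_G(v,r)$ must contain a sampled center to certify that a \emph{surviving} center of $C_v$ lies within the radius the $5$-stretch argument needs. The quantitative heart is then a three-way balance: the per-vertex redundancy $|C_v|$ must be large enough for robustness yet small enough that the pivot edges, the enlarged bunches, the clique, and the base case together stay within $O(n^{4/3} + n^{6/5}f^{3/5} + nf)$, which fixes $p$ and $|C_v|$ after optimization.

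I expect the main obstacle to be showing that the required redundancy is only a \emph{small power} of $f$ --- small enough that, folded through the optimization, the price is $f^{3/5}n^{-2/15}$, hence $1$ once $f = O(n^{2/9})$ --- rather than the naive $\Theta(f)$. This forces a careful reconciliation of the static, $G$-defined cluster hierarchy with the dynamic $\dist_{G\setminus F}$ distances, together with a clean interface between the bottleneck/low-degree base case and the main clustering. The extension to odd $k>3$ should then be the same idea with more levels of the hierarchy, the slightly weaker exponents reflecting the extra levels.
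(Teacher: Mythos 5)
Your one-line arithmetic derivation of the corollary from Theorem~\ref{thm:3-upper-main} is correct and is exactly what the paper does (substitute $f=O(n^{2/9})$ and observe that both fault-dependent terms fall to $O(n^{4/3})$). But you explicitly take the ``real content'' to be Theorem~\ref{thm:3-upper-main} itself, and the route you sketch for it is both genuinely different from the paper's and has a gap at exactly the point you flag.

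The paper does not build the emulator from a Thorup--Zwick-style hierarchy. It runs the obvious fault-tolerant greedy (Algorithm~\ref{alg:greedy}), proves the output admits a small \emph{double-blocking set} (Lemma~\ref{lem:doubleblocking}), and then shows combinatorially that any graph with a small double-blocking set is sparse, via a Moore-bound-style count of SCUM paths and their meets (Sections~\ref{sec:analysis-main}--\ref{sec:limit3}). The sublinear $f^{3/5}$ exponent falls out of that path counting, not out of a tuned redundancy parameter. Your clustering sketch, by contrast, hinges on the robust center families $C_v$ having size a power of $f$ strictly less than $1$, a step you yourself defer, and I do not see how to close it. Concretely: (i) Menger gives $f+1$ edge-disjoint $v$--$C_v$ paths, not $f+1$ edge-disjoint \emph{short} paths, and there is no clean length-bounded Menger duality --- the paper's Section~\ref{sec:polytime} has to \emph{approximate} the related Length-Bounded Double Cut problem for exactly this reason --- so the min-cut condition on $C_v$ does not by itself guarantee that some surviving center stays close in $G\setminus F$; and (ii) the hierarchy (pivots, bunches) is fixed from $G$-distances while the stretch must hold against $G\setminus F$-distances, and it is unclear this reconciliation can be carried out with $|C_v| = o(f)$ rather than $\Theta(f)$, which would destroy the bound. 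Your observation that the top-level clique is free under automatic reweighting is correct and uses the emulator model nicely, but it does not touch the dominant cost, which sits in the bunches.
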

Recall that there are graphs that require $\Omega(n^{4/3})$ edges even for a non-fault-tolerant ($f=0$) $5$-emulator (see Theorem~\ref{thm:tightspanners}).
Thus in the parameter range $f = O(n^{2/9})$, the price of edge fault tolerance for $5$-emulators is $\Theta(1)$, i.e., \emph{free}.
We show a similar but slightly worse bound for general odd $k$:
\begin{theorem} [Main Result] \label{thm:k-upper-main}
For any fixed odd $k$, every $n$-node graph has an $f$-EFT $(2k-1)$-emulator $H$ satisfying
$$|E(H)| = O_k\left(n^{1+\frac{1}{k}} + n^{1+\frac{1}{k+2}} f^{\frac{k}{k+2}} + n^{1+\frac{1}{k+1}} f^{\frac{1}{2}} + fn\right).$$
\end{theorem}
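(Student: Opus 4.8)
The starting point is the observation that makes this result possible at all: in the EFT emulator model, an edge fault never deletes an emulator edge, it only reweights $(u,v)$ from $\dist_G(u,v)$ up to $\dist_{G\setminus F}(u,v)$. So an emulator edge stays useful after the faults precisely when its two endpoints remain close in $G\setminus F$, and this is automatic whenever there is no small cut separating $u$ from $v$ at the relevant distance scale. The plan is therefore to build $H$ so that every emulator edge we add is either \emph{robust by redundancy} (its endpoints are connected by enough short paths that $f$ faults cannot pull them apart) or \emph{cheap enough to keep explicitly}, and to pay only $O(n^{1+1/k})$ for the bulk of the structure exactly as in the non-faulty Thorup--Zwick / Baswana--Sen emulator.

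Concretely I would split the vertices by degree: call $v$ \emph{light} if $\deg_G(v)\le\tau$ with $\tau=\Theta(f)$, and heavy otherwise. Put every $G$-edge incident to a light vertex into $H$; since $\sum_{v\text{ light}}\deg_G(v)=O(\tau n)$ this contributes the $O(fn)$ term, and its purpose is that the first edge of any shortest path leaving a light vertex is literally present in $H$ with its correct $G\setminus F$ weight, so shortest paths through light vertices are reproduced step by step with no stretch loss. For the heavy part I would take the usual sampled hierarchy $V=A_0\supseteq A_1\supseteq\cdots\supseteq A_k$, each $A_i$ keeping a vertex of $A_{i-1}$ independently with probability $n^{-1/k}$ (so $|A_i|\approx n^{1-i/k}$), with pivot edges and bunch edges for the baseline $O(n^{1+1/k})$. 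The new ingredient is redundancy at each level: instead of connecting a heavy vertex only to its nearest level-$i$ center, connect it to its $\rho_i$ nearest ones, where $\rho_i$ is a function of $f$ tuned per level (more redundancy is affordable at higher, sparser levels). Because a heavy vertex already has more than $f$ incident edges, the faults cannot locally isolate it, and the $\rho_i$ extra representatives are meant to guarantee one that stays close in $G\setminus F$. Optimizing $\rho_i$ against the level sizes $n^{1-i/k}$ should produce the two middle terms $n^{1+\frac{1}{k+2}}f^{\frac{k}{k+2}}$ and $n^{1+\frac{1}{k+1}}f^{\frac{1}{2}}$; note these are the sizes of (larger-stretch) $(2k+3)$- and $(2k+1)$-emulators, which hints that the cleanest presentation may be recursive — build EFT emulators of larger stretch on contracted cluster graphs — and the parity of $k$ (hence of $k+2$) should enter through how the hierarchy levels line up, with $k=3$ avoiding the $(k+1)$-type term as a lucky base case.

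The heart of the argument is the stretch analysis. Fix any $F$ with $|F|\le f$ and any $s,t$; let $P$ be a shortest $s$--$t$ path in $G\setminus F$, and build a path of length at most $(2k-1)\dist_{G\setminus F}(s,t)$ in $H^F$. I would traverse $P$ maintaining a ``current level-$i$ representative''; stretches of $P$ that run through light vertices are copied exactly, and when $P$ enters a heavy region the redundant pivot edges let us lift to a level-$i$ center, move across, and descend, exactly as in the classical $(2k-1)$ analysis. The one delicate check is that the lifting and descending edges we use have \emph{$G\setminus F$} weight (not just $G$ weight) under control: this is where we invoke both that emulator edges only reweight, and the guarantee from the redundancy step that among the $\rho_i$ representatives of a heavy vertex there is always one that is still close after the faults.

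That guarantee is the main obstacle. The EFT requirement is a \emph{for all $F$} statement, so a randomized construction must be union-bounded over the roughly $n^{O(f)}$ possible fault sets, which forces the per-set failure probability to be tiny and hence the redundancy parameters $\rho_i$ to be not-too-small; at the same time the size bound forces them to be not-too-large. Threading this needle — choosing $\rho_i$ simultaneously small enough for the claimed edge count and large enough that, for every fixed $F$, \emph{every} heavy vertex retains a close representative at the level we need it — is the quantitative core of the proof, and is presumably exactly what constrains $f$ to the stated polynomial range, pins down the two $f$-exponents, and makes the $k=3$ case come out slightly better.
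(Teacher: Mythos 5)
The paper takes a fundamentally different route from yours. It runs a fault-tolerant greedy algorithm: consider edges of $G$ in nondecreasing weight order, and add $(u,v)$ to $H$ if some fault set $F$ of size at most $f$ would make $\dist_{H^F}(u,v) > (2k-1)\,w(u,v)$. Consequently the emulator $H$ is a \emph{subgraph} of $G$ --- no virtual long-range edges, no sampled hierarchy, no light/heavy split. The analysis is extremal rather than probabilistic: the output is shown to admit a small ``double-blocking set'' of edge pairs (every cycle of length $\le 2k$ is blocked once, and every cycle of length $\le k+1$ is blocked \emph{twice} --- and the reason short cycles get double-blocked is exactly the reweighting observation you open with: a single faulted edge on a short cycle just inherits a short detour weight, so it cannot by itself force the greedy algorithm to keep the last edge). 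Then a Moore-bound-style counting argument over a new class of paths (``SCUM'' paths: simple, chain-unblocked, middle-heavy) together with a dispersion lemma bounding meets of such paths gives the edge count. The two $f$-exponents $\tfrac{k}{k+2}$ and $\tfrac{1}{2}$ fall out mechanically from a dispersion bound of the form $O_k\!\left(n f^{k} d^{k-2} + n f^{(k+1)/2} d^{k-1}\right)$ on SCUM-path meets after a Cauchy--Schwarz rearrangement, with $(k+1)/2$ being the middle position of an odd-length path. Nothing resembling level-by-level optimization of sampling rates or redundancy parameters appears.

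Your proposal is instead a redundant Thorup--Zwick-style hierarchical construction. This is a natural first thing to try, but the gap you flag at the end is not a ``quantitative detail to be threaded'' --- it is the whole difficulty, and as sketched I do not see how to close it. The issue is not really the union bound over $\binom{m}{f}$ fault sets; it is that the event you need (for a fixed $F$, every heavy vertex retains a representative that is still \emph{close} in $G\setminus F$ at each level) is not controlled by sampling redundancy. The adversary picks $F$ after the sets $A_i$ are drawn, and even if $v$ is connected to its $\rho_i$ nearest level-$i$ centers, nothing in the construction prevents all of those shortest paths from sharing a small bottleneck that $F$ severs. Heaviness (degree $> \Theta(f)$) only rules out isolating $v$ entirely, not blowing up $\dist_{G\setminus F}(v, c)$ for every representative $c$. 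You would need an edge-disjointness or local expansion property connecting $v$ to its representatives, and that is precisely what is missing. This is also why your sketch offers no mechanism that would distinguish emulators from spanners, odd $k$ from even $k$, or reproduce the specific exponents $\tfrac{k}{k+2}$ and $\tfrac{1}{2}$: in the paper all of those distinctions live in the combinatorics of double-blocking sets and chain-unblocked paths, which your approach does not touch.
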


Since our theorem assumes $k$ is fixed (a constant independent of $n$ and $f$) we use $O_k$ notation to hide factors that depend only on $k$.  We do not carefully track or try to optimize these factors, but it is not hard to see that they are at most polynomial in $k$.

This theorem implies the following range of free fault tolerance:
\begin{corollary}
For any fixed odd $k$, $n$-node graph, and parameter $f = O(n^{2/(k(k+1))})$, there is an $f$-EFT $(2k-1)$-emulator $H$ satisfying $|E(H)| = O_k(n^{1+1/k})$.
\end{corollary}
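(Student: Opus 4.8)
The plan is to get the corollary as an immediate consequence of Theorem~\ref{thm:k-upper-main}. That theorem produces, for any $n$-node graph, an $f$-EFT $(2k-1)$-emulator $H$ with $|E(H)| = O_k(n^{1+1/k} + n^{1+1/(k+2)} f^{k/(k+2)} + n^{1+1/(k+1)} f^{1/2} + fn)$, so it suffices to check that once $f = O(n^{2/(k(k+1))})$ each of the four summands is already $O_k(n^{1+1/k})$.

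Fix odd $k$ and write $f \le c\,n^{2/(k(k+1))}$ for a constant $c$. I would verify the terms one at a time. The term $n^{1+1/k}$ is the target itself. For $fn$: substituting gives $fn \le c\,n^{1+2/(k(k+1))}$, and $2/(k(k+1)) \le 1/k$ for every $k \ge 1$ (it rearranges to $2/(k+1) \le 1$), so $fn = O_k(n^{1+1/k})$. For $n^{1+1/(k+1)} f^{1/2}$: since $f^{1/2} \le c^{1/2} n^{1/(k(k+1))}$, the exponent becomes $1 + \tfrac{1}{k+1} + \tfrac{1}{k(k+1)} = 1 + \tfrac{k+1}{k(k+1)} = 1 + \tfrac1k$, so this term is $O_k(n^{1+1/k})$ --- and it meets the bound with equality (up to the constant $c^{1/2}$), which is exactly why the threshold is $n^{2/(k(k+1))}$. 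For $n^{1+1/(k+2)} f^{k/(k+2)}$: since $f^{k/(k+2)} \le c^{k/(k+2)} n^{2/((k+1)(k+2))}$, the exponent is $1 + \tfrac{1}{k+2} + \tfrac{2}{(k+1)(k+2)} = 1 + \tfrac{k+3}{(k+1)(k+2)}$, and this is at most $1 + \tfrac1k$ because $k(k+3) = k^2+3k \le k^2+3k+2 = (k+1)(k+2)$; hence this term is $O_k(n^{1+1/k})$ too. Adding the four $O_k(n^{1+1/k})$ bounds gives $|E(H)| = O_k(n^{1+1/k})$, which proves the corollary.

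Given Theorem~\ref{thm:k-upper-main}, there is no real obstacle here; the only care needed is the exponent bookkeeping, and in particular noticing that among the three non-baseline terms it is the $f^{1/2}$-term that is binding --- this is precisely what sets the free-fault-tolerance range at $f = O(n^{2/(k(k+1))})$ rather than something larger. (For $k = 3$ one can push to $f = O(n^{2/9})$, but that comes from the sharper bound of Theorem~\ref{thm:3-upper-main}, not from this calculation.) The difficulty is all upstream, in Theorem~\ref{thm:k-upper-main} itself, whose proof must take a standard clustering / Thorup--Zwick-style $(2k-1)$-emulator and add only a $\poly(f)$ amount of redundancy, arguing through the edge-reweighting semantics of EFT emulators that $\poly(n)$ edge faults can merely reroute emulator edges onto nearby replacement paths and hence never inflate the stretch past $2k-1$.
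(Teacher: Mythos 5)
Your proof is correct and matches what the paper intends: the corollary is stated as an immediate consequence of Theorem~\ref{thm:k-upper-main}, and the paper gives no separate argument beyond the exponent bookkeeping you carry out. Your verification that the $n^{1+1/(k+1)}f^{1/2}$ term is the binding one (matching $n^{1+1/k}$ exactly at $f = \Theta(n^{2/(k(k+1))})$) while the $n^{1+1/(k+2)}f^{k/(k+2)}$ and $fn$ terms are slack is accurate.
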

Recall that, assuming the girth conjecture, there are graphs which require $\Omega(n^{1+1/k})$ edges even for a non-fault tolerant $(2k-1)$-emulator.  Thus we can build emulators that are robust to $O(n^{2/(k(k+1))})$ faults for free.

The exact reason why our main result holds only for odd $k$ is rather technical, and it is discussed in more detail at the beginning of Section \ref{sec:limit}.
Very roughly, our strategy to upper bound emulator density is based on counting a certain kind of $k$-path in the emulator (see our proof overview in Section \ref{sec:overview}).
Our counting strategy produces a better bound when there exists an edge exactly in the middle of these $k$-paths, which is the case if and only if $k$ is odd.
This is a similar technical issue to the one that causes inherently different upper bounds for even/odd $k$ in the settings of edge fault tolerant spanners \cite{BDR22} and vertex fault tolerant emulators \cite{BDN22}; thus, this paper adds to an emerging theme that problems in this area seem harder to understand when $k$ is even.

We prove all our upper bounds constructively, via an algorithm which constructs EFT emulators of the claimed size.
While the ``main'' algorithm we analyze does not run in polynomial time (the obvious implementation of it would take time at least $\Omega(n^f)$), in Section \ref{sec:polytime} we show a variant that runs in polynomial time, paying an additional $O(k)$ factor in emulator size. This algorithm is based on ideas from~\cite{DR20}.

At a technical level, our size analysis requires a change in strategy from prior work.
The (near-optimal) size bounds in prior work for edge fault tolerant spanners and for vertex fault tolerant emulators happen to both be achieved by the method of analyzing \emph{alternating paths} \cite{BDR22, BDN22}; hence, this method has roughly been pushed to its limit.
In order to \emph{simultaneously} benefit from the focus on edge rather than vertex faults \emph{and} from the focus on emulators rather than spanners, we need a different approach entirely, which we overview in Section \ref{sec:overview}.

On the lower bounds side, we prove two limitations on the quality of EFT emulators.
Both bounds are based on constructions that have been repeatedly used in prior work \cite{ChechikLPR:10,BDPW18,BDR22}, but which need to be reanalyzed in the setting of EFT emulators.
\begin{theorem} [Tight Bounds for $k=2$] \label{thm:twoembounds}
Every $n$-node graph has an $f$-EFT $3$-emulator $H$ satisfying $|E(H)| = O(n^{3/2} f^{1/2})$.
This is best possible, in the sense that there exist graphs for which any $f$-EFT $3$-emulator $H$ has $|E(H)| = \Omega(n^{3/2} f^{1/2})$ 
\end{theorem}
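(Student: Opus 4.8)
I would first record the elementary fact that \emph{every $f$-EFT $t$-spanner is an $f$-EFT $t$-emulator}: if $H\subseteq G$ and $F$ is any fault set, then each edge of $H\setminus F$ survives in $H^F$ with weight $\dist_{G\setminus F}(\cdot,\cdot)$ no larger than its original weight, so $\dist_{H^F}(u,v)\le \dist_{H\setminus F}(u,v)\le t\cdot\dist_{G\setminus F}(u,v)$ for all $u,v$, while the lower inequality $\dist_{G\setminus F}\le\dist_{H^F}$ is automatic. Hence the known $O(n^{3/2}f^{1/2})$-edge $f$-EFT $3$-spanner constructions (see Table~\ref{tbl:ftprice}) already give the upper bound. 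Alternatively — and perhaps cleaner, since it exploits the emulator's extra freedom directly — one can argue it from scratch: add to $H$ every edge of $G$ incident to a vertex of degree at most $\sqrt{nf}$ (at most $n\sqrt{nf}$ edges), and add every pair $\{u,s\}$ with $u\in V$ and $s$ ranging over a random set $S$ of $\Theta(\sqrt{nf})$ vertices as an emulator edge (at most $|S|\cdot n=O(n^{3/2}f^{1/2})$ edges). For any $F$ with $|F|\le f$ it suffices to $3$-approximate the single edges of $G\setminus F$, and then sum along a shortest path; for an edge $(x,y)$ of $G\setminus F$, if an endpoint has low degree we kept $(x,y)$ itself, and otherwise $x$ has more than $f$ neighbours in $S$ whp, so some $s\in S$ survives $F$ and the route $x-s-y$ has weight $\dist_{G\setminus F}(x,s)+\dist_{G\setminus F}(s,y)\le 1+2=3$. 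A standard derandomization (or a greedy variant along the lines of~\cite{DR20}) removes the logarithmic slack in the sampling.

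\textbf{Lower bound.} The plan is to revisit the hard instances already used for EFT $3$-spanner lower bounds in~\cite{ChechikLPR:10,BDPW18,BDR22} — a bipartite graph of large girth whose redundancy is tuned so that $\Theta(n^{3/2}f^{1/2})$ edges are individually ``critical'' under some fault set of size at most $f$ — and to verify that criticality persists when the spanner $H$ is replaced by the reweighted emulator $H^F$. Concretely, I would produce a family $\mathcal{D}$ of $\Theta(n^{3/2}f^{1/2})$ witnesses, each a pair $(u,v)$ together with a fault set $F_{u,v}$ of size at most $f$, such that (i) $\dist_{G\setminus F_{u,v}}(u,v)$ is small, so the emulator must provide a short $u$--$v$ route in $H^{F_{u,v}}$, and (ii) every such route must use an edge of $H$ lying ``locally'' near $(u,v)$ in $G\setminus F_{u,v}$. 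A charging argument then shows each edge of $H$ is local to only $O(1)$ witnesses, giving $|E(H)|=\Omega(|\mathcal{D}|)=\Omega(n^{3/2}f^{1/2})$.

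\textbf{Main obstacle.} The delicate point — and the reason the construction must be re-analyzed rather than imported verbatim from the spanner setting — is step (ii). For spanners one simply argues that certain edges of $G$ must belong to the subgraph $H$; for emulators this can fail, because the reweighting lets $H$ replace a missing edge by a detour whose component edges have been made cheap by the failures. To preclude this I would use the large girth of $G\setminus F_{u,v}$ around $u$ and $v$: any detour cheap enough to fit in a weight-$3$ budget projects to a short closed walk through $u$ and $v$ in $G\setminus F_{u,v}$, which the girth forbids unless the detour in fact hugs a local edge. Balancing this requirement (the neighbourhood of each witness must stay girth-rich) against the constraint that $|F_{u,v}|\le f$ while every competing route is still inflated past length $3$ is precisely what pins the redundancy at $\sqrt f$ rather than $f$, and making this bookkeeping close is where I expect the real work to lie.
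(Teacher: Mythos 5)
Your upper bound is fine: the observation that an $f$-EFT $t$-spanner is automatically an $f$-EFT $t$-emulator is correct (every edge of $H\setminus F$ persists in $H^F$ with weight $\dist_{G\setminus F}(\cdot,\cdot)\le w(\cdot,\cdot)$, and the lower inequality is the triangle inequality), and both that reduction and your direct hitting-set construction reproduce the $O(n^{3/2}f^{1/2})$ bound the paper inherits from prior work. The lower bound, however, has a genuine gap in exactly the place you flag as the ``main obstacle,'' and the remedy you sketch does not close it. The hard instance is the standard cloud blow-up of a girth-$>4$ bipartite graph, with each vertex replaced by a cloud of $\Theta(f)$ copies and each original edge replaced by a complete bipartite graph between clouds. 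That blown-up graph has girth only $4$: any two parallel edges between adjacent clouds $C_u, C_v$ already form a $4$-cycle. So after the obvious fault set $F$ isolating $u$ from $C_v$ and $v$ from $C_u$ (except for $(u,v)$), there remain many length-$2$ walks from $u$ and from $v$ into \emph{other} clouds in $G\setminus F$, so many candidate emulator edges $(u,x)$ or $(t,v)$ get reweighted to $2$ in $H^F$ and, combined with one surviving edge, certify $\dist_{H^F}(u,v)\le 3$. No argument based on ``the neighbourhood of $(u,v)$ staying girth-rich'' survives the blow-up, which is precisely why the spanner analysis does not port over.

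What actually makes the paper's lower bound go through is that the fault set is \emph{adaptive} in $H$. Beyond the roughly $2f$ isolating edges, $F$ also includes, for each emulator edge $(u,x)\in E(H)$ supported by a $2$-path $(u,v,x)$ in $G$, the edge $(v,x)$ (a set $Z_u$), and symmetrically for each $(t,v)\in E(H)$ supported by a $2$-path $(t,u,v)$, the edge $(t,u)$ (a set $Z_v$). These reactive faults are exactly what kill the cheap detours you worry about. The argument is then a dichotomy, not a per-witness $O(1)$-charging: if $(u,v)\notin E(H)$ and both $|Z_u|,|Z_v|\le f$ then $|F|\le 4f$ yet $\dist_{H^F}(u,v)>3$, contradicting the $\Theta(f)$-EFT guarantee; hence for each missing edge $(u,v)$, either $(u,v)\in E(H)$ or $|Z_u|>f$ or $|Z_v|>f$, and one charges $\Omega(f^2)$ emulator edges to each of the $\Omega\bigl((n/f)^{3/2}\bigr)$ adjacent cloud pairs, with each emulator edge charged at most a constant number of times. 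Two further mismatches with your sketch: the fault budget must be $\Theta(f)$ rather than $f$ (a harmless constant, but it needs to be said), and the charging is a per-cloud-pair, two-sided counting argument rather than a locality bound on individual edges. You correctly located the crux, but the $H$-dependent fault sets $Z_u,Z_v$ are the missing idea, and without them the claim that a fixed fault set of size $\le f$ forces a ``local'' emulator edge is simply false for the blown-up graph.
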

The upper bound in Theorem \ref{thm:twoembounds} is inherited directly from prior work \cite{BDN22}, while the lower bound analysis is new in this paper.
A consequence of this theorem is that EFT $3$-emulators do \emph{not} confer free fault tolerance in any nontrivial range of parameters.
We also have:
\begin{theorem}
For any $k$, there are $n$-node graphs for which any $f$-EFT $(2k-1)$-emulator has $\Omega(fn)$ edges.
\end{theorem}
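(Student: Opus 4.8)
The plan is to exhibit a single graph $G$ in which \emph{every} edge is forced to appear in \emph{every} $f$-EFT $(2k-1)$-emulator, and which itself has $\Omega(fn)$ edges. The point is that although emulators are normally free to ``detour'' around an edge (by routing through a nearby vertex and inheriting appropriate weights), this freedom disappears completely once a fault set isolates the two endpoints of an edge into their own connected component: in that situation, the direct emulator edge on that pair is the only edge at those vertices that can even have finite weight after reweighting, so it must be present.

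Concretely, set $d := \lfloor (f+2)/2 \rfloor$ and let $G$ be a $d$-regular graph on $n$ vertices --- for instance a disjoint union of copies of the complete bipartite graph $K_{d,d}$, padded with at most $2d-1 = O(f)$ isolated vertices if $n$ is not a multiple of $2d$ (isolated vertices impose no emulator constraint, so this is harmless as long as $n$ is large relative to $f$). Then $|E(G)| = nd/2 = \Theta(fn)$ and the maximum degree is $d$, so in particular $2(d-1) \le f$. The construction is deliberately chosen so that deleting all edges incident to the two endpoints of a fixed edge costs at most $f$ faults; this is the only place the parameter $f$ enters.

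Now fix any edge $e = (u,v) \in E(G)$ and let $F_e := (\delta_G(u) \cup \delta_G(v)) \setminus \{e\}$, where $\delta_G(x)$ denotes the edges incident to $x$. Since $G$ is simple, $e$ is the only edge incident to both $u$ and $v$, so $|F_e| = (d-1) + (d-1) = 2d-2 \le f$ and $F_e$ is a legal fault set. In $G \setminus F_e$ both $u$ and $v$ have degree exactly $1$, so $\{u,v\}$ is a connected component of $G \setminus F_e$ and $\dist_{G \setminus F_e}(u,v) = 1$. Let $H = (V, E')$ be any $f$-EFT $(2k-1)$-emulator of $G$. By definition, $\dist_{H^{F_e}}(u,v) \le (2k-1)\cdot \dist_{G \setminus F_e}(u,v) = 2k-1 < \infty$. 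But every edge of $H$ incident to $u$ has weight in $H^{F_e}$ equal to its $(G\setminus F_e)$-distance to $u$, which is finite only if its other endpoint lies in $u$'s component $\{u,v\}$, i.e.\ equals $v$; hence the unique possible finite-weight $H$-edge at $u$ is $e$ itself. For $\dist_{H^{F_e}}(u,v)$ to be finite we therefore need $e \in E'$. Since this applies to every $e \in E(G)$, we get $E(G) \subseteq E(H)$ and hence $|E(H)| \ge |E(G)| = \Omega(fn)$, which proves the theorem --- and note the argument only used finiteness of the stretch, so it holds for every $k$ (indeed for every finite stretch), uniformly.

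I do not expect a genuine obstacle here: the argument is short. The only things requiring (minor) care are choosing $d \approx f/2$ so that isolating an edge's endpoints stays within the fault budget, and the routine existence of a suitable $d$-regular graph on (essentially) $n$ vertices, which the disjoint-union-of-$K_{d,d}$ construction supplies; if a connected example is desired one can instead use a single $d$-regular bipartite graph of girth more than $2k$, though connectivity is irrelevant to the statement.
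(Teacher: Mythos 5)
Your argument is correct and is essentially identical to the paper's: the paper also takes an $f/2$-regular graph, faults all edges incident to $u$ or $v$ except $(u,v)$ itself to isolate $\{u,v\}$ as its own component, and observes that every other $H$-edge at $u$ or $v$ then has infinite weight in $H^F$, forcing $(u,v) \in E(H)$. Your write-up is a bit more explicit about the constant in the degree and the existence of the regular graph, but the key idea and the use of finiteness rather than the specific stretch are the same.
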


This lower bound is simple to prove and follows by considering any $f/2$-regular graph.
It implies that, for any $k$, the maximum conceivable range of free fault tolerance is $f = O(n^{1/k})$.

\subsection{Open Problems}

The main open question left by this work is to pin down the range of $f$ in which EFT emulators exhibit free fault tolerance.
There are two main sub-questions.
The first, and perhaps most pressing, is to determine whether free fault tolerance exists for even $k \ge 4$:
\begin{oq} [Is There Free Fault Tolerance for Even $k$?]
Is there any even constant $k \ge 4$ and constant $c > 0$ such that, for every $n$-node graph and parameter $f = O(n^c)$, there is an $f$-EFT $(2k-1)$-emulator $H$ satisfying $|E(H)| = O(n^{1+1/k})$?
\end{oq}
Our current arguments do not confer \emph{any} advantage for even $k$, relative to the size bounds inherited from EFT spanners and VFT emulators.
The technical barrier to improved bounds for even $k$ is discussed briefly in Section \ref{sec:limit}.
While it seems entirely plausible to apply our techniques more carefully and obtain improved edge bounds, we think that a significant new idea will be needed to prove (or refute) \emph{free} fault tolerance for any even $k$ in any polynomial range of $f$.

The second question is to determine the \emph{range} in which free fault tolerance exists for odd $k$:
\begin{oq} [What is the Range of Free Fault Tolerance for Odd $k$?]
For odd $k$, what is the largest constant $c_k$ such that, for every $n$-node graph and parameter $f = O(n^{c_k})$, there is an $f$-EFT $(2k-1)$-emulator $H$ satisfying $|E(H)| = O(n^{1+1/k})$?
\end{oq}
For this second question, this paper shows that
$$\Omega\left(\frac{1}{k^2}\right) \le c_k \le \frac{1}{k}.$$
It would be interesting to asymptotically improve the lower bound on $c_k$, improve the upper bound on $c_k$ at all, or settle the value of $c_3$ (we show that $2/9 \le c_3 \le 1/3$).
It would also be interesting to obtain ``nearly-free'' fault tolerance, e.g., subpolynomial price of fault tolerance in a polynomial range of $f$.

\section{Technical Overview and Outline} \label{sec:overview}
We begin with a high-level overview of our algorithm and proof strategy for constructing $f$-EFT $(2k-1)$-emulators.  

\subsection{Algorithm}
Algorithmically, we will do something simple that may nevertheless be surprising: we will run the obvious fault-tolerant variant of the greedy algorithm of Alth\"ofer et al.~\cite{AlthoferDDJS:93}.  That is, we will consider the edges of $G$ in nondecreasing weight order, and when considering an edge $(u, v)$ we will add it to our emulator if it is ``necessary''.  In the traditional non-fault tolerant setting, ``necessary'' means that $\dist_H(u,v) > (2k-1) \cdot w(u,v)$, i.e., the current spanner/emulator does not adequately preserve the distance between the endpoints $u, v$.  In the fault-tolerant spanner setting, where the greedy algorithm has also been used and analyzed~\cite{BDPW18,BP19,BDR22}, ``necessary'' means that there is some appropriate fault set $F$
such that $\dist_{H \setminus F}(u,v) > (2k-1) \cdot w(u,v)$ (i.e., the current solution does not adequately preserve the distance between the endpoints under fault set $F$).  Analogously, in our algorithm, ``necessary'' will mean that there is some fault set $F \subseteq E$ with $|F| \leq f$ (and not containing $(u,v)$) such that $\dist_{H^F}(u,v) > (2k-1) \cdot w(u,v)$.

Although this fault-tolerant greedy algorithm is standard in prior work, the reason it is surprising in our setting is that the emulator $H$ that we construct is actually a subgraph of the input graph $G$!
We note this does \emph{not} mean that $H$ is a fault tolerant spanner of $G$.

In particular, consider what happens if there is some edge $e=(u,v) \in E$ which we include in our emulator $H$, and then $e$ fails (is part of a fault set $F$).  In a fault-tolerant spanner, this would mean that $e$ also fails in the spanner: we would need $H \setminus F$ to be a spanner of $G \setminus F$, and so $e$ would be removed from both $G$ and $H$.
But in a fault-tolerant emulator, $e$ still exists in $H^F$: it just gets reweighted to $\dist_{G \setminus F}(u,v)$, and hence becomes more costly to use.
In other words, the relevant distance comparisons occur between $H^F$ and $G \setminus F$, and $H^F$ is not typically a subgraph of $G \setminus F$.
Thus even though our emulator $H$ consists only of edges that are also in $G$, we strongly use the fact that we are considering emulators rather than spanners.  

\subsection{Analysis}
The classical analysis of the non-fault tolerant greedy spanner algorithm works as follows.  

\begin{itemize}
\item First, one argues that $H$ has \emph{high girth}: in particular, all cycles in $H$ will have $>2k$ edges.
This is a fairly straightforward consequence of the greedy algorithm.

\item The focus of the proof then shifts from the particular output graph of the greedy algorithm to the more general class of graphs with high girth.
One applies a counting argument from extremal combinatorics called the \emph{Moore bounds}, stating that any $n$-node graph $H$ of girth $>2k$ can only have $O(n^{1+1/k})$ edges.
The Moore bounds are a counting argument over the \emph{non-backtracking $k$-paths} of $H$, and they are proved in two steps:
\begin{itemize}
\item (Counting Lemma) Letting $d$ be the average degree in $H$, and assuming $d$ is at least a large enough constant, we prove that $H$ has $n \cdot \Omega(d)^k$ non-backtracking $k$-paths.
\item (Dispersion Lemma) Say that a \emph{path meet} is a pair of distinct $k$-paths with the same endpoints.
We then prove that any $n$-node graph of girth $>2k$ cannot have any meets between two non-backtracking $k$-paths. 
\end{itemize}
The Moore bounds now follow roughly by arguing that the average degree $d$ must be only $O(n^{1/k})$, or else $H$ will have so many non-backtracking $k$-paths that (by the pigeonhole principle) two of them would share endpoints and thus form a meet.
\end{itemize}

In fault-tolerant spanners and emulators, it is unfortunately not true that the output graph $H$ has high girth.  However, one might intuitively believe that $H$ ``almost'' has high girth, and that this might be enough to prove some type of sparsity bound.  This proof strategy was first formalized by~\cite{BP19}, who introduced a notion of a ``blocking set'': informally, a collection of \emph{pairs} (either (edge, edge) pairs or (node,edge) pairs) that hits all cycles of length at most $2k$.  Intuitively, a graph that has a small blocking set ``almost'' has girth larger than $2k$, and so it is sometimes possible to prove similar sparsity bounds through appropriate generalizations of the counting lemma and the dispersion lemma.

These generalized lemmas often analyze more complicated paths: not just non-backtracking, but also satisfying a number of other technical conditions.  The precise type of paths that are considered is closely related to the precise version of blocking sets that are used.

We follow this high-level approach.  Our first technical innovation is a new kind of blocking set, which we call a \emph{double-blocking set}.  The precise definition is given later in Definition~\ref{def:doubleblocking}, but the idea is to require not just that cycles of length at most $2k$ are blocked, but also that cycles of length at most $k+1$ are blocked \emph{twice}.  We can show that the greedy emulator $H$ has a small double-blocking set (unlike fault-tolerant spanners).  This is slightly more difficult than the analogous part of previous work on spanners and VFT emulators, but is still the more straightforward part of our proof.

The goal is then to show that any graph with this stronger kind of blocking set is particularly sparse.  Following the outline for the Moore bounds, we need to define some type of path and prove a counting lemma and dispersion lemma.  We introduce a new type of path, which we call SCUM paths (Simple, Chain Unblocked, Middle-Heavy), and prove the associated counting and dispersion lemmas (see Definition~\ref{def:scum} for the precise definition).  The chain unblocked property is our main new technical idea, as it differs substantially from the technical conditions on paths used in prior work.  Once we have the definition of these paths, the counting lemma is reasonably straightforward, but the dispersion lemma is more technically involved.

\subsection{Outline}
We begin in Section~\ref{sec:algorithm} by formally stating our algorithm, proving that the graph $H$ it returns is an $f$-EFT $(2k-1)$-emulator, and then proving that $H$ has a double-blocking set of size at most $f|E(H)|$.  Then in Section~\ref{sec:analysis-main}, we begin the proof that \emph{any} graph with a small double-blocking set is sparse enough to satisfy our main theorem statement.  We do this by proving a counting lemma for SCUM paths in Section~\ref{sec:counting}, stating but not proving dispersion lemmas for $k=3$ and for general odd $k$ in Section~\ref{sec:dispersion}, and then showing in Section~\ref{sec:wrapup} how to combine the counting and dispersion lemmas to prove that $H$ is sparse.
We then prove our dispersion lemmas for $k=3$ in Section~\ref{sec:limit3}, and for general odd $k$ in Section~\ref{sec:limit}.  In Section~\ref{sec:polytime} we show how to modify the algorithm to run in polynomial time without significantly affecting the sparsity bounds.  Finally, in Section~\ref{sec:lower} we construct our lower bounds on the size of EFT emulators.

\section{Algorithm and Double-Blocking} \label{sec:algorithm}

We begin by formally giving our algorithm, proving that it does output an EFT emulator, and then proving that the EFT emulator it outputs has a small double-blocking set.

\subsection{Algorithm and Correctness}

We construct our emulators using the following greedy algorithm.  Recall that $H^F$ is the version of the graph $H$ with weights updated according to distances in $G \setminus F$.  So $H^F$ is defined with respect to an ambient graph $G$, which will typically be clear from context.

\FloatBarrier
\begin{algorithm} [ht]
\textbf{Input:} Graph $G = (V, E, w)$, positive integers $f, k$\;

Let $H \gets (V, \emptyset)$ be the initially-empty emulator\;
\ForEach{edge $(u, v) \in E$ in order of nondecreasing weight $w(u, v)$}{
    \If{there is $F \subseteq E$ of size $|F| \le f$ with $\dist_{H^F}(u, v) > (2k-1) \cdot w(u, v)$}{
        Add $(u, v)$ to $H$\;
    }
}
\Return{$H$};
\caption{Algorithm for $f$-EFT $(2k-1)$-emulators}
\label{alg:greedy}
\end{algorithm}
\FloatBarrier

If two edges in the input graph have the same weight, then the algorithm considers them in arbitrary order.  

Correctness of the output emulator is a standard argument:
\begin{theorem} [Correctness of Algorithm \ref{alg:greedy}]
The graph $H$ returned by Algorithm \ref{alg:greedy} is an $f$-EFT $(2k-1)$ emulator of the input graph $G$.
\end{theorem}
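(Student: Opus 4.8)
The plan is to verify both inequalities in the definition of an $f$-EFT $(2k-1)$-emulator directly from the structure of the greedy algorithm. Fix an arbitrary fault set $F \subseteq E$ with $|F| \le f$; recall that $H^F$ denotes $H$ with each edge $(a,b) \in E(H)$ reweighted to $w_F(a,b) = \dist_{G \setminus F}(a,b)$. We must show $\dist_{G \setminus F}(u,v) \le \dist_{H^F}(u,v) \le (2k-1) \cdot \dist_{G \setminus F}(u,v)$ for all $u,v$.

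The lower bound is the ``free'' direction: any path in $H^F$ can be converted into a walk in $G \setminus F$ of exactly the same total length, by replacing each edge $(a,b)$ of the $H^F$-path with a shortest $a$--$b$ path in $G \setminus F$, which has length exactly the weight $w_F(a,b)$ assigned to that edge in $H^F$. Hence $\dist_{G \setminus F}(u,v) \le \dist_{H^F}(u,v)$.

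For the upper bound, if $u$ and $v$ lie in different connected components of $G \setminus F$ the claim is vacuous, so assume a shortest $u$--$v$ path $P$ exists in $G \setminus F$, of length $\dist_{G \setminus F}(u,v) = \sum_{(x,y) \in P} w(x,y)$ (the edges of $P$ carry their original weights, as $G \setminus F$ inherits $w$). Concatenating shortest $H^F$-paths between consecutive vertices of $P$ gives $\dist_{H^F}(u,v) \le \sum_{(x,y)\in P} \dist_{H^F}(x,y)$, so it suffices to establish the key claim: for every edge $(x,y) \in E \setminus F$ we have $\dist_{H^F}(x,y) \le (2k-1)\, w(x,y)$; summing this over the edges of $P$ yields $\dist_{H^F}(u,v) \le (2k-1)\dist_{G\setminus F}(u,v)$. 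To prove the key claim, fix $e = (x,y) \in E \setminus F$ and consider the iteration in which Algorithm \ref{alg:greedy} processes $e$; let $H_e$ be the partial emulator at the start of that iteration. If $e$ was added to $H$, then $e \in E(H)$, and in $H^F$ the edge $e$ has weight $\dist_{G \setminus F}(x,y) \le w(x,y)$, so $\dist_{H^F}(x,y) \le w(x,y) \le (2k-1) w(x,y)$. If $e$ was not added, then the algorithm's condition failed at that iteration, i.e.\ $\dist_{H_e^{F'}}(x,y) \le (2k-1)w(x,y)$ for every $F' \subseteq E$ with $|F'| \le f$; applying this with $F' = F$ gives $\dist_{H_e^{F}}(x,y) \le (2k-1)w(x,y)$. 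Since $E(H_e) \subseteq E(H)$ and the reweighting defining $(\cdot)^F$ assigns each edge the same weight $\dist_{G\setminus F}$ independent of the emulator, $H_e^F$ is a weighted subgraph of $H^F$, hence $\dist_{H^F}(x,y) \le \dist_{H_e^F}(x,y) \le (2k-1) w(x,y)$, proving the claim.

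The argument is essentially routine; the only points requiring a moment's care are the \emph{monotonicity} step --- observing that the edge weights of $H^F$ depend only on $G$ and $F$, not on the current emulator, so that adding edges over the course of the algorithm can only decrease distances in the reweighted graph --- and the implicit fact that an edge lying on a shortest path of $G \setminus F$ is never itself in $F$, which is what lets us apply the failed-condition inequality with $F' = F$.
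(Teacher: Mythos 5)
Your proposal is correct and follows essentially the same route as the paper's proof: reduce to bounding $\dist_{H^F}(x,y)$ for each edge $(x,y) \in G \setminus F$, then case-split on whether $(x,y)$ was added to $H$, using the reweighting for the first case and the failed algorithm conditional (instantiated at $F' = F$) for the second. The one place you are more careful than the paper is in spelling out the monotonicity step -- that the conditional is evaluated against the \emph{partial} emulator $H_e$, and that $\dist_{H^F}(x,y) \le \dist_{H_e^F}(x,y)$ because the $(\cdot)^F$-weights are determined by $G$ and $F$ alone, so $H_e^F$ is a weighted subgraph of $H^F$; the paper's proof uses this implicitly without comment.
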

\begin{proof}
Let $F$ be an arbitrary set of $|F| \le f$ failing edges.
Our goal is to prove that $H^F$ is a $(2k-1)$-emulator of $G \setminus F$.
By a standard observation, it suffices to prove that
$\dist_{H^F}(u, v) \le (2k-1) \cdot w(u, v)$
for all edges $(u, v) \in G \setminus F$ (by considering shortest paths, this implies the desired stretch bound for all pairs).
There are two cases.  First, if $(u, v) \in H$, then we immediately have $\dist_{H^F}(u, v) = \dist_{G\setminus F}(u, v) \le w(u, v)$, due to the reweighting of $(u, v)$ in $H^F$.  Second, suppose $(u, v) \notin H$.
We must have considered the edge $(u, v)$ at some point in the algorithm, and then elected not to add it to $H$.
By the algorithm conditional, this implies that for every $F$ we have $\dist_{H^F}(u, v) \le (2k-1) \cdot w(u, v)$, completing the proof. 
\end{proof}

\subsection{Double-Blocking Sets}

In the following analysis, it will frequently be important to consider a subset of edges within $H$ (typically a cycle or path) and consider the last among these edges to be processed by the main loop of our algorithm.
We will use the phrase ``the latest (or last) edge (in the ordering of $H$)'' to identify this edge.  Note that if all weights are distinct then this will be the highest-weight edge in the subset, but if some edges have the same weight then it specifically refers to the highest-weight edge under the same tiebreaking used by the algorithm.

We begin with our new definition of double-blocking sets.

\begin{definition} [Double-Blocking Sets (see Figure \ref{fig:doubleblock} for $k=3$)] \label{def:doubleblocking}
Let $H = (V, E)$ be a graph with a total ordering on its edges, let $k$ be a positive integer, and let $B \subseteq \binom{E}{2}$.
We say that a cycle $C$ in $H$ is \emph{blocked by} $\{e_1, e_2\} \in B$ if we have $e_1, e_2 \in C$ and $e_1$ is the last edge in $C$ in the ordering.
The set $B$ is called a \emph{$2k$ double-blocking set} for $G$ if:
\begin{itemize}
\item Every cycle $C$ on $|C| \le 2k$ edges is blocked by some $b \in B$, and

\item Every cycle $C$ on $|C| \le k+1$ edges is blocked by two distinct $b_1, b_2 \in B$.
\end{itemize}
The edge pairs in a blocking set are called \emph{blocks}.
\end{definition}

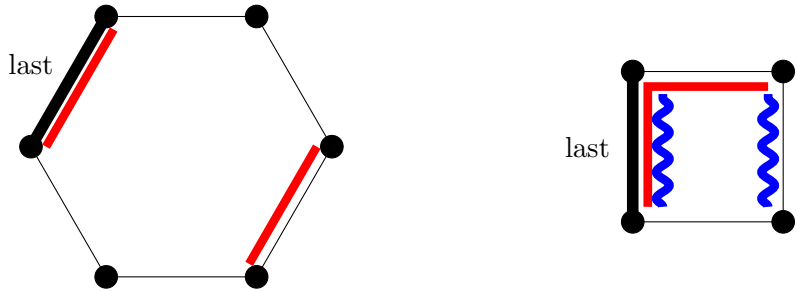
\begin{figure} [ht] \centering
\begin{tikzpicture}
\draw [fill=black] (2, 0) circle [radius=0.15];
\draw [fill=black] (1, 1.732) circle [radius=0.15];
\draw [fill=black] (1, -1.732) circle [radius=0.15];
\draw [fill=black] (-2, 0) circle [radius=0.15];
\draw [fill=black] (-1, 1.732) circle [radius=0.15];
\draw [fill=black] (-1, -1.732) circle [radius=0.15];

\draw (-2, 0) -- (-1, 1.732) -- (1, 1.732) -- (2, 0) -- (1, -1.732) -- (-1, -1.732) -- cycle;

\node at (-2.0, 1.1) {last};
\draw [line width = 0.4em] (-2, 0) -- (-1, 1.732);

\draw [red, line width = 0.3em] (-1.8, 0) -- (-0.9, 1.559);
\draw [red, line width = 0.3em] (1.8, 0) -- (0.9, -1.559);

\begin{scope}[shift={(7, 0)}]
\draw [fill=black] (1, 1) circle [radius=0.15];
\draw [fill=black] (-1, -1) circle [radius=0.15];
\draw [fill=black] (1, -1) circle [radius=0.15];
\draw [fill=black] (-1, 1) circle [radius=0.15];

\draw (1, 1) -- (1, -1) -- (-1, -1) -- (-1, 1) -- cycle;

\draw [line width = 0.4em] (-1, 1) -- (-1, -1);
\node at (-1.6, 0) {last};
\draw [red, line width = 0.3em] (-0.8, -0.8) -- (-0.8, 0.8) -- (0.8, 0.8);
\draw [blue, snake it, line width = 0.3em] (-0.6, -0.8) -- (-0.6, 0.7);
\draw [blue, snake it, line width = 0.3em] (0.8, -0.8) -- (0.8, 0.7);

\end{scope}
\end{tikzpicture}
\caption{\label{fig:doubleblock} (Left) For any cycle $C$ with $\le 6$ edges, a $6$ double-blocking set $B$ blocks $C$ at least once, e.g., by the pair of red edges.  (Right) Additionally, for any cycle $C$ with $\le 4$ edges, $B$ blocks $C$ at least twice, e.g., by the pair of red edges and also the pair of wavy blue edges.}
\end{figure}

We note that the ``short'' cycles $C$ of length $k+1$ have blocks of the form $\{e_1, e_2\}, \{e_1, e_3\}$; that is, both blocks must contain the last edge in $C$ and thus they intersect on an edge.  We now want to show that the output emulator $H$ of the greedy algorithm has a small double-blocking set.

\begin{lemma} \label{lem:doubleblocking}
The emulator $H$ returned by Algorithm \ref{alg:greedy} has a $2k$ double-blocking set $B$ of size $|B| \le f |E(H)|$ (with respect to the order that the edges are added to $H$ in Algorithm \ref{alg:greedy}).
\end{lemma}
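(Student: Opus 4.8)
The plan is to construct the double-blocking set $B$ directly from the run of Algorithm \ref{alg:greedy}, one block per edge added, charging at most $f$ blocks to each edge. When the algorithm adds an edge $e = (u,v)$, it does so because there is a witnessing fault set $F_e \subseteq E$ with $|F_e| \le f$ and $\dist_{H^{F_e}}(u,v) > (2k-1)w(u,v)$ at that moment. The key observation (standard in greedy-spanner/emulator analyses) is that this inequality forces $F_e$ to ``hit'' every short cycle through $e$ that is present in the \emph{final} graph $H$. Concretely: suppose $C$ is a cycle in $H$ with $e = (x,y)$ its last edge, and suppose $C \setminus \{e\}$ is an $x$–$y$ path of length $\le 2k-1$ using edges processed before $e$. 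If $F_e$ contained no edge of $C \setminus \{e\}$, then — because edge weights are nondecreasing in processing order and because emulator edges of $H^{F_e}$ can only have smaller or equal weight than the corresponding $G$-edges — the path $C \setminus \{e\}$ would already give $\dist_{H^{F_e}}(x,y) \le (2k-1)w(x,y)$, contradicting the witness. (One has to be slightly careful: for the emulator, a surviving edge of $C \setminus \{e\}$ contributes weight $\dist_{G \setminus F_e}(\cdot,\cdot)$, which is at most its original weight, so the bound only improves. This is exactly the place where being an emulator rather than a spanner is used, but for this direction it only helps.) So for every cycle $C$ of length $\le 2k$ with last edge $e$, the fault set $F_e$ contains some edge $e' \in C \setminus \{e\}$, and I can take $\{e, e'\} \in B$ as a block for $C$. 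Defining $B := \bigcup_{e \in E(H)} \{\, \{e, e'\} : e' \in F_e \cap E(H) \,\}$ gives $|B| \le \sum_e |F_e| \le f|E(H)|$, and every cycle of length $\le 2k$ is blocked at least once.

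The remaining — and main — obstacle is the \emph{double}-blocking requirement for cycles of length $\le k+1$: such a cycle $C$ with last edge $e$ needs two distinct blocks, necessarily of the form $\{e, e'\}$ and $\{e, e''\}$ with $e', e'' \in C$ distinct. By the argument above, $F_e$ hits $C \setminus \{e\}$ at least once; I need it to hit $C \setminus \{e\}$ in (at least) two distinct edges. The idea is that a single fault on $C \setminus \{e\}$ is not enough to destroy the short alternate route: if only one edge $e'$ of the path $C \setminus \{e\}$ (length $\le k$) were in $F_e$, then consider the graph $H^{F_e}$. The two sub-paths of $C \setminus \{e\}$ on either side of $e'$ together with a direct emulator edge are still available — more precisely, the endpoints of $e'$ are still connected in $G \setminus F_e$ (one can argue $\dist_{G \setminus F_e}$ of $e'$'s endpoints is at most the rest of the cycle $C$, i.e. at most $(k)\cdot w(e)$-ish), and chaining this through the surviving part of $C\setminus\{e\}$ still yields an $x$–$y$ walk in $H^{F_e}$ of total weight $\le (2k-1)w(e)$, again contradicting the witness. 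So $F_e$ must contain at least two edges of $C \setminus \{e\}$, giving the two required blocks $\{e,e'\},\{e,e''\}$. Making this ``chaining'' estimate tight enough — keeping the accumulated stretch at or below $2k-1$ while routing around a single fault using emulator edges reweighted by $G \setminus F_e$ distances, and crucially using that the faulted edge $e'$ was processed before $e$ so its $G$-weight is $\le w(e)$ — is the delicate part and the step I expect to require the most care.

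I would organize the write-up as: (1) fix the notation $F_e$ for the witness fault set of each added edge $e$ and define $B$ as above, immediately giving the size bound; (2) prove the ``single hit'' claim for cycles of length $\le 2k$ via the shortest-path/weight-monotonicity argument, establishing one block per such cycle; (3) prove the ``double hit'' claim for cycles of length $\le k+1$ by the rerouting-around-one-fault argument; (4) observe that in case (3) both blocks contain $e$, matching the form noted after Definition \ref{def:doubleblocking}. Throughout, the two facts doing the work are: edges of $H$ are processed in nondecreasing weight order, so any edge of $C$ other than the last has $G$-weight at most $w(e)$; and in $H^{F_e}$ every edge weight equals a $G \setminus F_e$ distance, which is no larger than the original $G$-weight, so alternate routes in $H^{F_e}$ can only be cheaper than their $G$-counterparts.
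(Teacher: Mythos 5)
Your proposal is correct and takes essentially the same approach as the paper: define $B$ from the witness fault sets $F_e$, get the size bound for free, use the greedy ordering to get one block per $\le 2k$ cycle, and for cycles of length $\le k+1$ argue by contradiction that if $F_e$ hit $C\setminus\{e\}$ only once at $e'$, then the reweighted edge $e'$ (weight $\le k\cdot w(e)$ via the surviving rest of the cycle) together with the other $\le k-1$ cycle edges gives a $u\leadsto v$ route in $H^{F_e}$ of length $\le (2k-1)w(e)$, contradicting the algorithm's decision. The step you flag as delicate does go through cleanly with exactly the arithmetic you sketch.
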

\begin{proof}
Each time we add an edge $(u, v)$ to $H$, we do so because of an edge subset $F_{(u, v)}$ with $\dist_{H^{F_{(u, v)}}}(u, v) > (2k-1) \cdot w(u, v)$.  Note that $F_{(u,v)}$ contains only edges earlier in the ordering that $(u,v)$, and hence in particular does not contain $(u,v)$. 
Define
$$B := \left\{ \{e, f\} \ \mid \ e \in E(H), f \in F_e \right\}.$$
Since $|F_e| \le f$ for all $e$, we have $|B| \le f|E(H)|$.
We now show that $B$ is a $2k$ double-blocking set.
\begin{itemize}
\item Let $C$ be a cycle in $H$ of length $\le 2k$, and let $(u, v) \in C$ be its last edge.
When we add $(u, v)$ to $H$, the other $2k-1$ edges in $C$ form a $u \leadsto v$ path of length $\le (2k-1) \cdot w(u, v)$.
Thus, we must include one of these edges $f \in C \setminus \{(u, v)\}$ in $F_{(u, v)}$, to fault this short $u \leadsto v$ path.
We thus have $\{(u, v), f\} \in B$, which blocks $C$ as required.

\item (See Figure \ref{fig:doublecase}) Now let $C$ be a cycle in $H$ of length $\le k+1$, and let $(u, v) \in C$ be its last edge.
By the previous item, there exists at least one edge $(x, y) \in F_{(u, v)} \cap C$.
Suppose towards contradiction that $(x, y)$ is the \emph{only} edge in $F_{(u, v)} \cap C$.
Then:
\begin{itemize}
\item In $G \setminus F_{(u, v)}$, we have $\dist(x, y) \le k \cdot w(u, v)$.
This is due to the $x \leadsto y$ path that uses the other $\le k$ edges in $C$, each of which has weight $\le w(u, v)$.  Note that this crucially uses the ``automatic edge updating'' in our model.

\item We have
$$\dist_{H^{F_{(u, v)}}}(u, v) \le (2k-1) \cdot w(u, v).$$
This is due to the $u \leadsto v$ path that uses the other $\le k$ edges in $C$.
The $\le k-1$ edges besides $(x, y)$ each have weight $\le w(u, v)$, and by the previous item, the edge $(x, y)$ has weight $\le k \cdot w(u, v)$ in $H^{F_{(u,v)}}$. 
\end{itemize}
The latter point implies that the algorithm conditional is violated and the fault set $F_{(u, v)}$ would \emph{not} cause us to add $(u, v)$ to $H$, completing the contradiction.  Hence there are at least two distinct edges $f_1, f_2 \in F_{(u,v)} \cap C$.  Since $(u,v) \not \in F_{(u,v)}$, this means that both $\{(u,v), f_1\}$ and $\{(u,v), f_2\}$ are in $B$ by definition, so $C$ is double-blocked as required. \qedhere
\end{itemize}
\end{proof}

\begin{figure} [ht] \centering
\begin{tikzpicture}
\draw [fill=black] (1, 1) circle [radius=0.15];
\draw [fill=black] (-1, -1) circle [radius=0.15];
\draw [fill=black] (1, -1) circle [radius=0.15];
\draw [fill=black] (-1, 1) circle [radius=0.15];

\draw (1, 1) -- (1, -1) -- (-1, -1) -- (-1, 1) -- cycle;

\draw [line width = 0.4em] (-1, 1) -- (-1, -1);

\node at (-1.4, 1) {$u$};
\node at (-1.4, -1) {$v$};
\node at (1.4, 1) {$x$};
\node at (1.4, -1) {$y$};
\node [red] at (1, 0) {\bf \huge $\times$};
\draw [red, ultra thick, ->] (1, 0.9) arc (45:315:1.2);
\node [red] at (2.5, 0) {$\le 3 \cdot w(u, v)$};

\begin{scope}[shift={(10, 0)}]
\draw [fill=black] (1, 1) circle [radius=0.15];
\draw [fill=black] (-1, -1) circle [radius=0.15];
\draw [fill=black] (1, -1) circle [radius=0.15];
\draw [fill=black] (-1, 1) circle [radius=0.15];

\draw (1, 1) -- (1, -1) -- (-1, -1) -- (-1, 1) -- cycle;

\draw [line width = 0.4em] (-1, 1) -- (-1, -1);

\node at (-1.4, 1) {$u$};
\node at (-1.4, -1) {$v$};
\node at (1.4, 1) {$x$};
\node at (1.4, -1) {$y$};
\node [red] at (1, 0) {\bf \huge $\times$};
\draw [blue, ultra thick, ->] (-1, 0.9) arc (135:-135:1.2);
\node [blue] at (-2.5, 0) {$\le 5 \cdot w(u, v)$};

\node [red] at (2.5, 0) {$\le 3 \cdot w(u, v)$};

\end{scope}
\end{tikzpicture}
\caption{ \label{fig:doublecase} In the proof of Lemma \ref{lem:doubleblocking}, for the case $k=3$, we argue double-blocking of $4$ cycles as follows.  Suppose our greedy algorithm is currently considering whether to add $(u, v)$, the heaviest edge in the cycle, to the emulator.  (Left) If there is only a single edge $(x, y)$ on the cycle included in the set of failing edges, then $w(x, y)$ increases to at most $3 \cdot w(u, v)$.  (Right) This implies $\dist(u, v) \le 5 \cdot w(u, v)$, so we would not choose to add $(u, v)$ to the emulator in the first place.}
\end{figure}
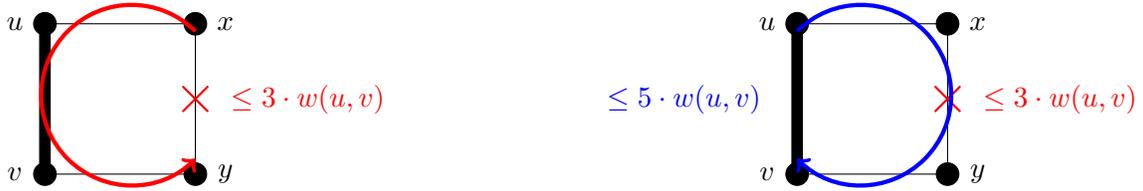

\section{Main Size Analysis (assuming dispersion)} \label{sec:analysis-main} 
We now shift perspective.
Instead of talking about the greedy algorithm or fault tolerance, our goal in the rest of the proof is simply to limit the maximum number of edges that can appear in an $n$-node graph $H$ with a $2k$ double-blocking set of size $|B| \le f|E(H)|$.  In particular, we will prove the following theorems:

\begin{theorem}[Sparsity for $k=3$] \label{thm:3-blocked-upper}
Let $H$ be an $n$-node graph with an ordering on its edges which has a $6$ double-blocking set of size at most $f|E(H)|$.  Then $|E(H)| = O(n^{4/3} + n^{6/5} f^{3/5} +fn)$.
\end{theorem}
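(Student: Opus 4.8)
The plan is to follow the Moore-bound template sketched in Section~\ref{sec:overview}: reduce to bounding the (average) degree, lower-bound the number of a suitable class of paths (the counting lemma), upper-bound the number of pairs of such paths sharing endpoints (the dispersion lemma), and balance the two by convexity. Write $m = |E(H)|$ and $d = 2m/n$. First I would dispose of two easy regimes. If $d \le C_0 f$ then $m \le (C_0/2)fn$ and we are done, so assume $d > C_0 f$ for a large constant $C_0$; and if $d < C_1 n^{1/3}$ then $m = O(n^{4/3})$ and we are done, so assume $d \ge C_1 n^{1/3}$. By repeatedly deleting vertices of degree below $d/4$ I pass to a subgraph $H'$ on $n' \le n$ vertices with minimum degree $\ge d/4$ and $m' \ge m/2$ edges; the restriction $B' = B \cap \binom{E(H')}{2}$ is still a $6$ double-blocking set for $H'$ (cycles of $H'$ are cycles of $H$ with the same latest edge, and both required blocks of a short cycle of $H'$ already lie inside $E(H')$), and $|B'| \le |B| \le fm \le 2fm'$. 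It now suffices to show $d = O(n^{1/3} + n^{1/5}f^{3/5})$.

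Next I would make the SCUM paths precise for $k=3$: a SCUM $3$-path is a path on four distinct vertices with edges $e_1,e_2,e_3$ in path order such that $e_2$ is the latest of the three in the ordering (\emph{middle-heavy}) and neither $\{e_2,e_1\}$ nor $\{e_2,e_3\}$ lies in $B$ (\emph{chain unblocked}). The \textbf{counting lemma} I would prove is: if $H'$ has minimum degree $\delta$ and $\delta$ is at least a constant times $|B'|/m'$, then $H'$ contains $\Omega(n'\delta^3)$ SCUM $3$-paths. This is the routine part. One first shows that there are $\Omega(n'\delta^3)$ middle-heavy simple $3$-paths, by summing over each edge in its role as the middle edge and using the total order on edges to account for the middle-heavy constraint (with careful tiebreaking, exactly as the edge order is exploited in the standard greedy-spanner counting); then one subtracts the chain-blocked paths: a block $\{e,e'\}$ with $e$ latest can serve as the $\{e_2,e_1\}$- or $\{e_2,e_3\}$-pair of only $O(\delta)$ middle-heavy $3$-paths (both block edges are forced, leaving one free endpoint extension), so at most $O(|B'|\delta)$ paths are excluded, which is at most half the total once $\delta$ exceeds a constant times $|B'|/m'$.

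The \textbf{dispersion lemma} is the crux, and I expect its sharp form to be the main obstacle. It must bound the number of \emph{meets} --- unordered pairs of distinct SCUM $3$-paths with the same two endpoints --- by a quantity that, balanced against the counting lemma, forces $d = O(n^{1/3} + n^{1/5}f^{3/5})$; roughly something of the shape $O(n^2 + |B'|\cdot\poly(d))$ with the $d$-polynomial of degree well below $4$. The structural starting point is clean. If $P \ne Q$ are SCUM $3$-paths with common endpoints, then $E(P) \triangle E(Q)$ is a nonempty even subgraph on at most $6$ edges, hence contains a cycle $C$ with $|C| \le 6$, which is blocked; moreover if $P$ and $Q$ share an interior vertex or an interior edge then $E(P) \triangle E(Q)$ contains a cycle with $|C| \le 4$, which is blocked \emph{twice}. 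Let $e_1$ be the latest edge of $P \cup Q$; it lies in one path, say $P$, and since $P$ is middle-heavy and $e_1$ is the heaviest edge of $P$, it is the \emph{middle} edge of $P$. The block $\{e_1,e_2\}$ of $C$ has $e_2 \ne e_1$, and chain-unblockedness of $P$ forbids $e_2$ from being a neighbor of $e_1$ in $P$, so $e_2 \in Q$. Now one charges the meet $(P,Q)$ to the block $\{e_1,e_2\}$, noting that $P$ is one of $O(\deg(x)\deg(y))$ $3$-paths with middle edge $e_1=(x,y)$ and that, given $P$'s endpoints and $e_2$, the path $Q$ is determined up to $O(1)$. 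The naive version of this charging gives only $O(|B'|d^2)$ meets, which combined with the counting lemma yields merely $m = O(n^{4/3}f^{1/3})$ --- strictly weaker than the target whenever $f < \sqrt{n}$. The real work is to sharpen it: one observes that the endpoints of $P$ are constrained to common neighborhoods of the endpoints of $e_1$ and of $e_2$, so the number of meets per block is governed by codegrees rather than degrees, and one must treat the degenerate intersection patterns (where $C$ is short and blocked \emph{twice}, furnishing a second block to charge against) separately; it is precisely this use of the ``blocked twice'' guarantee that improves the bound enough to reach $n^{6/5}f^{3/5}$. (For general odd $k$ this argument becomes substantially heavier, which is why the paper isolates $k=3$.)

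Finally I would combine. Let $S$ be the number of SCUM $3$-paths in $H'$; by the counting lemma $S = \Omega(n'\delta^3)$ with $\delta \ge d/4$, and since $d \ge C_1 n^{1/3}$ we have $S = \Omega((n')^2)$. Distributing these $S$ paths over the at most $(n')^2$ endpoint pairs and applying convexity (Cauchy--Schwarz), the number of meets is $\Omega(S^2/(n')^2) = \Omega(d^6)$. Comparing this with the dispersion upper bound and using $|B'| \le fnd$ forces $d \le C_3(n^{1/3} + n^{1/5}f^{3/5})$. Together with the earlier case $d \le C_0 f$, this gives $d = O(n^{1/3} + n^{1/5}f^{3/5} + f)$, i.e.\ $m = |E(H)| = O(n^{4/3} + n^{6/5}f^{3/5} + fn)$, as claimed.
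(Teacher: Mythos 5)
Your framework---graph cleaning, a SCUM-path counting lemma, a dispersion lemma for meets, Cauchy--Schwarz to balance---is exactly the paper's, and your counting lemma and wrap-up are sound (your subtraction-based version of the counting lemma is a fine alternative to the paper's direct extension count). But you correctly flag the dispersion lemma as ``the crux'' and then do not actually prove it, and this is a genuine gap rather than a routine detail. The bound you need is $O(ndf^3)$ meets; your naive charging gives $O(|B'|d^2) = O(fnd^3)$, and the exponents on $f$ and $d$ are swapped in exactly the way that matters. You gesture at the right ideas (``codegrees rather than degrees,'' ``use the blocked-twice guarantee'') but do not isolate the two structural lemmas that make them work. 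The paper proves: (i) \emph{internal vertex disjointness} (Lemma~\ref{lem:3scumdisjoint}): the two paths of a SCUM $3$-path meet cannot share an interior vertex, because sharing an endpoint edge would force a $4$-cycle that the chain-unblocked condition on $\pi^\ell$ prevents from being blocked twice; and (ii) a \emph{codegree bound} (Lemma~\ref{lem:2pathlim}): any two vertices have at most $f+1$ common neighbors, because each pair of length-$2$ paths between them is a $4$-cycle whose double-blocking forces each to share a block with the latest such edge. Together with Lemma~\ref{lem:3midedge} these give the clean generation $O(nd)$ choices of $e^*$, $\times\ O(f)$ choices of the blocked partner $e'$, $\times\ O(f)^2$ choices of the two connecting $2$-paths, i.e.\ $O(ndf^3)$.

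Note also a conceptual difference: you propose to \emph{handle} degenerate intersection patterns by charging against the second block of a short cycle, whereas the paper uses the double-blocking guarantee together with chain-unblockedness to \emph{rule them out} entirely (Lemma~\ref{lem:3scumdisjoint}). This distinction matters---without vertex disjointness you cannot reduce to two independent $2$-path choices, and it is unclear your alternative charging yields the same $f$-dependence without further work. So while your outline is on the right track and correctly diagnoses where the difficulty lies, the proof as written does not establish the theorem; the missing content is precisely Lemmas~\ref{lem:3midedge}--\ref{lem:2pathlim} and their assembly into the $O(ndf^3)$ dispersion bound.
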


\begin{theorem}[Sparsity for general odd $k$] \label{thm:k-blocked-upper}
Let $k$ be a positive odd integer and let $H$ be an $n$-node graph with an ordering on its edges which has a $2k$ double-blocking set of size at most $f|E(H)|$. Then
\[|E(H)| = O_k\left(n^{1+\frac{1}{k}} + n^{1+\frac{1}{k+2}} f^{\frac{k}{k+2}} + n^{1+\frac{1}{k+1}} f^{\frac{1}{2}} + fn\right).\]
\end{theorem}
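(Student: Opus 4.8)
The plan is to forget the greedy/fault-tolerant origin of $H$ entirely and treat it as an abstract $n$-node graph carrying an edge-ordering together with a $2k$ double-blocking set $B$ with $|B|\le f|E(H)|$, and then push through a generalized Moore-bound argument. Write $d:=|E(H)|/n$ for the average degree. If $d$ is below a large constant $C_k$, or if $d\le C_k f$, then $|E(H)|=nd$ is already $O_k(n)+O(fn)$ and Theorem~\ref{thm:k-blocked-upper} holds, so assume $d\ge C_k$ and $d\ge C_k f$. I would then apply the standard peeling reduction: repeatedly delete every vertex of degree $<d/2$; this deletes fewer than $nd/2$ edges overall, and since it only removes edges it can only shrink $B$ while preserving the double-blocking property, so at the cost of a constant factor in $d$ we may assume minimum degree $\ge d/2$. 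The goal now reduces to showing $d=O_k(n^{1/k}+n^{1/(k+2)}f^{k/(k+2)}+n^{1/(k+1)}f^{1/2})$.

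Following the Moore-bound template, I work with SCUM $k$-paths (Definition~\ref{def:scum}): simple $k$-edge paths whose last edge in the ordering lies near the middle and which are in addition ``chain unblocked'', i.e.\ no block of $B$ is swallowed by the relevant consecutive sub-paths. The \emph{counting lemma} (Section~\ref{sec:counting}) should be the easy direction: the usual greedy count yields $\Omega_k(nd^k)$ simple $k$-paths, requiring middle-heaviness costs only an $O_k(1)$ factor, and a single block lies chain-internal to at most $O_k(d^{k-2})$ $k$-paths (choose its two edges, then complete the path), so chain-unblockedness excludes at most $|B|\cdot O_k(d^{k-2})\le O_k(fnd^{k-1})$ of them, which is at most half once $d\ge C_k f$. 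Hence $H$ has $P=\Omega_k(nd^k)$ SCUM $k$-paths. Now run the Moore dichotomy: either $P=O(n^2)$, in which case $nd^k=O_k(n^2)$ and we are done with $d=O_k(n^{1/k})$; or, writing $M$ for the number of \emph{meets} (unordered pairs of distinct SCUM $k$-paths with the same two endpoints), convexity over the at most $n^2$ endpoint pairs gives $M=\Omega(P^2/n^2)=\Omega_k(d^{2k})$.

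The remaining and decisive step is to upper bound $M$ using the double-blocking set --- the \emph{dispersion lemma} (stated in Section~\ref{sec:dispersion}, proved in Section~\ref{sec:limit}), then assembled in Section~\ref{sec:wrapup}. Given a meet, two distinct SCUM $k$-paths $\pi_1,\pi_2$ with common endpoints, their union contains a cycle of length $\le 2k$, which $B$ blocks, and every sub-cycle of length $\le k+1$ is blocked \emph{twice}; chain-unblockedness of $\pi_1,\pi_2$ forces the blocking pairs not to be internal to either path, so each involved block must ``bridge'' the two paths near a divergence or reconvergence point, and middle-heaviness constrains where along the paths those points can lie. I would then charge each meet to the block structure witnessing its short cycles, bound for a fixed such structure how many meets it can account for (it is exactly the \emph{double} blocking of short cycles that kills the configurations that would otherwise dominate), and sum over $B$, aiming for a dispersion bound of roughly the form $M=O_k(f^{k}\,n\,d^{k-2}+f^{(k+1)/2}\,n\,d^{k-1})$; here $(k+1)/2$ is an integer precisely because $k$ is odd, which is where oddness enters (an even-length short cycle $k+1$ splits into two equal halves). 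Combining with $M=\Omega_k(d^{2k})$ gives $d^{k+2}=O_k(f^{k}n)$ or $d^{k+1}=O_k(f^{(k+1)/2}n)$, i.e.\ $d=O_k(n^{1/(k+2)}f^{k/(k+2)})$ or $d=O_k(n^{1/(k+1)}f^{1/2})$, which together with the two earlier branches is exactly Theorem~\ref{thm:k-blocked-upper}. (For $k=3$ the second dispersion regime turns out to be absorbed, which is the ``technical reason'' that Theorem~\ref{thm:3-blocked-upper} has one fewer $f$-dependent term.)

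The main obstacle is unquestionably the dispersion lemma for general odd $k$. In prior analyses of EFT spanners and VFT emulators, the conditions placed on the counted paths (non-backtracking plus a single blocking condition) sufficed to make meets essentially impossible or cheap to enumerate; here one genuinely needs both the new ``chain unblocked'' condition and the upgrade to \emph{double} blocking of short cycles, and the real work is to understand how a collection of $|B|$ blocks can be distributed along the union of two middle-heavy, chain-unblocked $k$-paths while still doubly blocking every short sub-cycle --- a fairly intricate case analysis whose output exponents ($k/(k+2)$ and $1/2$) are sensitive to the precise definition of SCUM paths. So the delicate part is to choose that definition so that the counting lemma still delivers $\Omega_k(nd^k)$ paths \emph{and} the dispersion argument goes through; the surrounding steps (peeling, the convexity dichotomy, and the final algebra extracting $d$) are routine.
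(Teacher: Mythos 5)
Your plan mirrors the paper's architecture exactly --- clean the graph, lower-bound the number of SCUM $k$-paths, upper-bound meets via a dispersion lemma of the form $O_k(nf^kd^{k-2}+nf^{(k+1)/2}d^{k-1})$, and close with Cauchy--Schwarz --- and your closing algebra (including absorbing the $fn$ term when $d=O(f)$) is correct. But there are two genuine gaps. First, you assert that requiring middle-heaviness costs only an $O_k(1)$ factor in the counting lemma. This is unjustified and in general false: nothing prevents the latest edge of essentially every simple $k$-path from landing near an endpoint rather than the middle. The paper handles this via \emph{core edges} (Claim~\ref{clm:halfcore}): when $e$ is added, one peels the subgraph of already-present edges to minimum degree $d/4$; if $e$ survives, one extends outward from $e$ using only earlier edges, so $e$ is automatically the latest on every such path, and at least half of all edges are core. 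Your subtractive accounting has no substitute for this step, and without it the counting lemma fails.

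Second, and more substantially, the dispersion lemma (Lemma~\ref{lem:oddmeetbound}) is only stated, not proved, and that is where essentially all of the technical content of the theorem lives. The actual proof needs Lemma~\ref{lem:faulthold} (the block-partners of the latest edge must hit the other path of the meet), Lemma~\ref{lem:shortpathcount} (a bound on non-double-blocked short paths between two nodes, which is where the \emph{double} blocking of cycles of length $\le k+1$ is actually spent), and above all Lemma~\ref{lem:doubleflexmeet}, which counts pairs of paths with a common endpoint when one is chain-unblocked and must handle the case where the two paths of the meet share a nontrivial prefix or suffix --- precisely the case that produces the extra term $nf^{(k+1)/2}d^{k-1}$ and makes general odd $k$ weaker than $k=3$. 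Saying ``charge each meet to the block structure witnessing its short cycles'' does not engage with any of this; what you have is a correct skeleton of the argument, not a proof.
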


Theorem~\ref{thm:3-blocked-upper} and Lemma~\ref{lem:doubleblocking} imply Theorem~\ref{thm:3-upper-main} (our main bound for $k=3$), and Theorem~\ref{thm:k-blocked-upper} and Lemma~\ref{lem:doubleblocking} imply Theorem~\ref{thm:k-upper-main} (our main bound for general odd $k$).  So it just remains to prove these two theorems.

\subsection{Graph Cleaning} \label{sec:cleaning}
We first use some standard combinatorial arguments to make some additional assumptions on $H$ without loss of generality.  In particular, since our goal is to prove an upper bound on $|E(H)|$ based on the fact that it has $2k$ double-blocking set of size $|B| \le f|E(H)|$, we may assume without loss of generality that the following are true:
\begin{itemize}
    \item Every edge in $H$ participates in at most $f$ pairs in $B$.
    \item If $d$ is the average degree of a node in $H$, then every node has degree $\Theta(d)$.
    \item $f \leq d/(Ck)$ for some large constant $C$. Note that we are assuming $k$ is a constant, so this is equivalent to saying that $f \leq d/C'$ for a large constant $C'$.  
\end{itemize}

The proof that we can assume these without loss of generality is relatively standard, so we defer it to Appendix~\ref{app:cleaning}.  From now on, we will make these assumptions on $H$.
We also carry forward the notation $d$ for the average degree in $H$.

\subsection{SCUM Path Counting} \label{sec:counting}

As discussed in Section~\ref{sec:overview}, we introduce a new type of path to analyze, called SCUM paths. 

\begin{definition}[SCUM $k$-paths] \label{def:scum}
Let $H$ be a graph with an ordering of its edges, and let $B \subseteq \binom{E(H)}{2}$ be a $2k$ double-blocking set.
A SCUM $k$-path is a path $\pi$ with $k$ edges that satisfies all of the following properties.
Let $(e_1 = (v_0, v_1), \dots, e_k = (v_{k-1}, v_k))$ be the edge sequence of $\pi$, and let $m := (k+1)/2$,\footnote{We will focus on odd $k$, where this is integral.  For even $k$, one would round up or down arbitrarily.} such that $e_m$ is the middle edge of $\pi$.
\begin{itemize}
\item (Simple) $\pi$ does not repeat nodes,
\item (Chain Unblocked) Let us say that an oriented path $q$ is \emph{chain unblocked} (with respect to $B$) if, for any edge $e \in q$ and node $v \in q$ where $v$ strictly follows $e$ (in particular $v \notin e$), there is no block of the form $\{e, e'\} \in B$ where $e'$ is incident on $v$.
We require:
\begin{itemize}
\item The oriented subpath of $\pi$ with edge sequence $(e_m, e_{m-1}, \dots, e_1)$ is chain-unblocked, and
\item The oriented subpath of $\pi$ with edge sequence $(e_m, e_{m+1}, \dots, e_k)$ is chain-unblocked.
\end{itemize}
\item (Middle-Heavy) the latest edge of $\pi$ (in the edge ordering of $H$) is $e_m$.

\end{itemize}
\end{definition}

The main new ingredient in this definition is the chain unblocked property.
Prior work has often focused on \emph{unblocked paths} instead, e.g., where there is no $\{e, e'\} \in B$ with $e, e' \in \pi$ \cite{BDR22, BDN22}.

The chain-unblocked property is qualitatively different and not directly comparable: it is possible for a path to be chain-unblocked and yet blocked in the traditional sense, and it is also possible for a path to be chain-blocked and yet unblocked in the traditional sense.

We now want to prove a counting lemma for our SCUM $k$-paths.
The first technical step is to classify the edges $e \in E(H)$ into \emph{core edges} and \emph{non-core edges}.
At the moment an edge $e \in E(H)$ is added to $H$, we consider the induced subgraph $H_e \subseteq H$ obtained by iteratively deleting any remaining nodes of degree $\le d/4$.
If $e$ survives in $H_e$, then it is a \emph{core edge}, otherwise it is a \emph{non-core edge}.
We have:
\begin{claim} \label{clm:halfcore}
At least half the edges of $H$ are core edges.
\end{claim}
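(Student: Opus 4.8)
The plan is to use a standard "peeling" or "discharging" argument reminiscent of how one proves that a graph with average degree $d$ contains a subgraph of minimum degree $\geq d/2$. Recall that the non-core edges are exactly those that, at the moment they were added to $H$, fail to survive the iterative deletion of all vertices of degree $\le d/4$ inside the current graph $H_e$. I would like to bound the total number of such non-core edges by $|E(H)|/2$, which gives the claim.

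First I would set up the accounting carefully. Let the edges of $H$ be added in the order $e_1, e_2, \dots, e_M$ (the ordering referenced in the algorithm / double-blocking definition), and for each $i$ let $H_i$ denote the graph on the first $i$ edges; then $H_{e_i}$ in the definition is exactly $H_i$ (the subgraph present when $e_i$ is added, including $e_i$ itself), and $e_i$ is a non-core edge iff $e_i$ is destroyed in the iterative $d/4$-degree peeling of $H_i$. The key point is that the peeling process of $H_i$ deletes a set of edges, and each deleted edge is "charged" to a deletion of one of its endpoints: when a vertex $v$ of current degree $\le d/4$ is removed, it kills at most $d/4$ edges. I would argue that across the whole peeling of $H_i$, the number of edges destroyed is at most (number of vertices) $\cdot d/4 \le n d / 4 = |E(H)|/2$, since $|E(H)| = nd/2$. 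However, this bounds the non-core edges \emph{of a single $H_i$}, not the total count over all $i$, so a little more care is needed — the cleanest route is to observe that it suffices to run the argument once, on the final graph $H = H_M$, because if $e_i$ is non-core, i.e. destroyed in the peeling of $H_i$, then it is also destroyed in the peeling of $H_M$. This monotonicity holds because $H_i \subseteq H_M$, degrees in $H_i$ are at most degrees in $H_M$, and the peeling operation is monotone: anything peeled from a subgraph is peeled from the supergraph. So the set of non-core edges is contained in the set of edges removed when we iteratively delete degree-$\le d/4$ vertices from $H$ itself.

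Hence it remains to bound the number of edges removed in the $d/4$-peeling of $H$. Each removed edge is destroyed at the moment one of its endpoints is deleted, and that endpoint had current degree $\le d/4$ at that time; assign each removed edge to (one of) its endpoint(s) deleted first. Every vertex is deleted at most once and is charged at most $d/4$ edges, so the number of removed edges is at most $n \cdot d/4 = (nd/2)/2 = |E(H)|/2$. Therefore at most half the edges of $H$ are non-core, so at least half are core edges.

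The main obstacle I anticipate is the monotonicity claim — making sure that "non-core at insertion time (inside $H_i$)" really implies "removed by peeling of the final $H$." One must be slightly careful about the order in which the peeling removes vertices (the final subgraph of minimum degree $> d/4$ is unique, so the order is irrelevant for \emph{which} vertices/edges survive), and about whether $H_e$ in the definition includes $e$ and its endpoints (it does, which only makes $e$ \emph{more} likely to survive, so the inclusion $\{\text{non-core edges}\} \subseteq \{\text{edges peeled from } H\}$ still holds). Once that set-containment is pinned down, the counting is the elementary $nd/4$ bound above. Using the cleaning assumptions from Section~\ref{sec:cleaning} (in particular $f = o(d)$ and near-regularity) is not needed for this claim, though they guarantee $d$ is large enough that the notion is non-vacuous.
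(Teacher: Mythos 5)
Your set-containment step --- ``if $e_i$ is non-core, i.e.\ destroyed in the peeling of $H_i$, then it is also destroyed in the peeling of $H_M$'' --- has the monotonicity backwards, and is in fact false. Since $H_i \subseteq H_M$, the subgraph that \emph{survives} the peeling of $H_i$ is contained in the one that survives the peeling of $H_M$; equivalently, the set of vertices (and hence edges) deleted when peeling the sparser prefix graph $H_i$ is a \emph{superset} of the set deleted when peeling the denser final graph $H_M$. So an edge destroyed in the peeling of $H_i$ may well survive the peeling of $H$, once later insertions have raised degrees. Concretely: if the first edges added form a long path, they all fail the $d/4$-peeling at insertion time and are non-core, yet if the finished graph $H$ is near-regular of degree well above $d/4$ (which the cleaning assumptions make plausible), \emph{nothing} is peeled from $H$ at all --- so the non-core edges are not a subset of the edges peeled from $H$, and your bound on the latter says nothing about the former. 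You correctly flagged this monotonicity as the crux, but resolved it in the wrong direction.

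The paper sidesteps this by peeling a different graph. Let $H_{nc}$ be the subgraph of $H$ consisting of exactly the non-core edges. One shows that the $d/4$-peeling of $H_{nc}$ deletes \emph{every} edge of $H_{nc}$, via a last-edge argument (this is the direction in which monotonicity is actually available): if some nonempty subgraph $S \subseteq H_{nc}$ of minimum degree $> d/4$ survived, take the last-inserted edge $e$ of $S$; then $S$ is already present at the moment $e$ is added, so $S$ certifies that $e$ survives the peeling used to define core edges, making $e$ a core edge --- contradicting $e \in H_{nc}$. Once $H_{nc}$ is known to peel to nothing, your charging argument applies verbatim to $H_{nc}$: each deleted edge is charged to whichever endpoint is removed first, each of the $\le n$ vertices absorbs at most $d/4$ charges, so $|E(H_{nc})| \le n d/4 = |E(H)|/2$. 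So the numerics of your proof are sound; what is missing is identifying $H_{nc}$ (not $H$) as the graph to peel, together with the extremal last-edge observation that makes that peeling exhaustive.
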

\begin{proof}
Let $H_{nc}$ be the subgraph that contains exactly the non-core edges of $H$.
Notice that, by definition of non-core edges, if we iteratively delete nodes in $H_{nc}$ of degree $\le d/4$ then every edge will eventually be deleted from $H_{nc}$.
By unioning over the $n$ nodes in $H_{nc}$, we delete at most $n \cdot d/4 \le |E(H)|/2$ edges in this process.
Thus, $\le |E(H)|/2$ of the edges in $H$ are non-core edges, and the claim follows.
\end{proof}

\begin{lemma} [SCUM $k$-path Counting Lemma] \label{lem:scumcount}
There are $n \cdot \Omega(d)^k$ SCUM $k$-paths in $H$.
\end{lemma}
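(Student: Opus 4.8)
The plan is to build SCUM paths greedily from the middle outward, growing the two "arms" $(e_m, e_{m-1}, \dots, e_1)$ and $(e_m, e_{m+1}, \dots, e_k)$ one edge at a time, and at each step count the number of valid ways to extend while maintaining all three SCUM properties. First I would restrict attention to the core subgraph: by Claim~\ref{clm:halfcore} at least half the edges are core, and by the cleaning assumptions every node has degree $\Theta(d)$, so there are $\Omega(nd)$ choices for the middle edge $e_m = (v_{m-1}, v_m)$. Orienting $e_m$ arbitrarily (both orientations are fine, they just give distinct SCUM paths), I will grow the "forward" arm from $v_m$ and the "backward" arm from $v_{m-1}$. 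Suppose we have built an oriented prefix $(e_m, \dots, e_j)$ ending at a node $u$; I want to bound the number of edges $e_{j-1} = (u, u')$ that we may append. There are $\Theta(d)$ edges at $u$ to begin with; I must subtract the ones that violate Simple (hitting an already-used node — at most $k = O(1)$ of them, negligible against $\Theta(d)$), Middle-Heavy (being later in the ordering than $e_m$ — handled below), or Chain Unblocked. Dividing the final count by $(2k-1)! \cdot k^k = O_k(1)$ to account for the ordering/orientation of the $k$ edges along $\pi$ yields $n \cdot \Omega(d)^k$ SCUM paths.

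The two nontrivial bookkeeping steps are Middle-Heavy and Chain Unblocked, and I expect Chain Unblocked to be the main obstacle. For Middle-Heavy: I will declare $e_m$ to be the latest edge of $\pi$ at the very end, not the start — i.e., I count pairs (path $\pi$, distinguished edge $e^\star \in \pi$) where $e^\star$ is the edge we "center" on, and observe that each simple $k$-path gets counted exactly once with $e^\star$ equal to its genuinely latest edge. Equivalently, run the greedy growth process but only from the latest edge; since we're summing $\Omega(d)$ choices at each of $k$ steps regardless of which edge ends up latest, this just reorganizes the count and costs another $O_k(1)$ factor. (A cleaner alternative: grow arms in order of *decreasing* edge-timestamp relative to $e_m$ so that Middle-Heavy is automatic; but either framing works.)

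For Chain Unblocked: when extending the forward arm from node $u$ by an edge $e_{j-1}$, the new constraint is that for the freshly exposed edge $e_{j-1}$ and every node $v$ strictly later on the arm, there is no block $\{e_{j-1}, e'\} \in B$ with $e'$ incident to $v$ — but since those later nodes don't exist yet, this is a constraint that future extensions must respect, not a restriction on the current choice. The restriction on the *current* choice comes from the *earlier* edges $e_m, e_{m-1}, \dots, e_j$: appending a new node $u'$ is forbidden if some already-placed edge $e_i$ (with $i$ between $j$ and $m$) forms a block $\{e_i, e'\} \in B$ with $e'$ incident to $u'$. Here I would use the cleaning assumption that every edge participates in at most $f$ blocks: each of the $\le k$ earlier edges on the arm kills at most $f$ choices of the next node $u'$ (one per block it's in), so at most $kf$ nodes $u'$ are forbidden; combined with degree $\Theta(d)$ at $u$, this forbids at most $kf = o(d)$ edges by the assumption $f = o(d)$. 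So $\Theta(d) - o(d) - O(1) = \Omega(d)$ valid extensions survive at every step, and the two arms together contribute $\Omega(d)^{k-1}$, for a total of $\Omega(nd) \cdot \Omega(d)^{k-1} / O_k(1) = n \cdot \Omega(d)^k$. The subtle point to get right — and where I'd slow down in the full write-up — is verifying that the chain-unblocked condition really does decompose arm-by-arm in this "only earlier edges constrain the next node" way, which is exactly why the definition orients each arm *outward from $e_m$*: the quantifier "$v$ strictly follows $e$" along the oriented arm means each edge only ever constrains nodes placed after it, so no constraint is ever retroactively violated by a choice already made.
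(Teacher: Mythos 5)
Your middle-out growth strategy is the right shape, and your accounting for the Simple and Chain-Unblocked properties is essentially correct (about $k$ forbidden vertices from Simple and $O(kf)=o(d)$ from Chain-Unblocked per step, leaving $\Omega(d)$ choices). But the Middle-Heavy property has a genuine gap that your proposal doesn't close. When you extend the arm from a node $u$, you need not merely $\Omega(d)$ edges at $u$ but $\Omega(d)$ edges at $u$ that precede $e_m$ in the ordering. The cleaning assumption gives degree $\Theta(d)$ in all of $H$, which says nothing about how many of $u$'s edges are earlier than $e_m$ --- for a generic $u$, potentially none are. Your two suggested fixes both elide this. The counting-pairs argument (``each simple $k$-path gets counted once with $e^\star$ its latest edge'') counts simple paths, not SCUM paths; nothing forces the latest edge of a path you build to land in the middle position, and there's no reason a constant fraction of them do. The ``grow in decreasing timestamp'' alternative would work \emph{if} you knew there were $\Omega(d)$ decreasing-timestamp edges available at every step, but that is precisely the unproven claim.

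The paper's resolution is the per-edge subgraph $H_{e_m}$ built into the definition of core edge: take the subgraph of $H$ consisting only of $e_m$ and edges preceding it in the ordering, then iteratively delete vertices of degree $\le d/4$; $e_m$ is \emph{core} iff it survives. Choosing $e_m$ to be a core edge (Claim~\ref{clm:halfcore} guarantees $\Omega(nd)$ such choices, not merely that half the edges are core in some global ``core subgraph'') and then doing all arm-growth strictly inside $H_{e_m}$ makes Middle-Heavy automatic and simultaneously supplies the degree bound: every vertex of $H_{e_m}$ has degree $> d/4$ by construction, so each step still offers $\Omega(d) - O(k) - O(kf) = \Omega(d)$ valid extensions. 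Your opening line ``restrict attention to the core subgraph'' suggests you had a single subgraph in mind rather than one $H_{e_m}$ per choice of middle edge; that distinction is exactly where the argument lives. (As a minor aside, the $(2k-1)! \cdot k^k$ overcount correction is unnecessary --- the generation process here produces each SCUM path at most twice, once per orientation of $e_m$, and the $\Omega$ absorbs it.)
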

\begin{proof}
We may generate a SCUM $k$-path $\pi$ as follows:
\begin{itemize}
\item Choose a core edge $e_m = (u, v)$ to act as the middle edge of $\pi$.  Since $\pi$ will be a $k$-path and $k$ is odd, this means that $m = (k+1)/2$.
Using Claim \ref{clm:halfcore}, there are $\Omega(nd)$ choices for $e_m$.
Let $H_{e_m} \subseteq H$ be the subgraph witnessing that $e_m$ is a core edge: that is, consider the subgraph of $H$ that contains only $e_m$ and preceding edges in the ordering, and then $H_{e_m}$ is the induced subgraph obtained by iteratively deleting nodes of degree $\le d/4$.

\item Sequentially choose $ m-1 $ edges $e_{m-1}, e_{m-2}, \dots, e_1 \in E(H_{e_m})$ that extend $e_m$ into a path of length $m$.
If $e_i = (x, y)$, the next edge $e_{i-1}$ may be selected among the $\Omega(d)$ edges incident to $y$ in $H_{e_m}$, disallowing (1) the $\le k$ edges that connect $y$ to a node that already appears in $\pi$ (to enforce simplicity), and (2) the $\le fk$ edges that connect $y$ to nodes $z$, such that an edge already in $\pi$ is blocked with an edge incident to $z$ (to enforce that $\pi$ is chain unblocked). 
We have $fk \leq d/C$ for a large constant $C$, since we assume $f \leq d/(Ck)$ (see Section~\ref{sec:cleaning}). 
So there are $\Omega(d)$ choices for the next edge at each step, for a total of $\Omega(d)^{m-1}$ choices.

\item Repeat the above process to sequentially choose edges in the suffix $e_{m+1}, \dots, e_k$.
By identical logic, there are $\Omega(d)$ choices at each step, for a total of $\Omega(d)^{m-1}$ choices.
\end{itemize}
The SCUM properties follow from the construction.
Since $e_m$ is the middle edge of a path of length $k$, and $k$ is odd, we have $m = (k+1)/2$.
Thus we count
$$\Omega(nd) \cdot \Omega(d)^{m-1} \cdot \Omega(d)^{m-1} = n \cdot \Omega(d)^k$$
ways to generate a SCUM $k$-path by this process.
\end{proof}

\subsection{Limiting SCUM Path Meets (Dispersion)} \label{sec:dispersion}
Now that we have a \emph{lower} bound on the number of SCUM paths, we want to find an \emph{upper} bound on a related but distinct object: SCUM $k$-path \emph{meets}.  Combining the two will lead to the desired sparsity bounds.  

\begin{definition} [SCUM Path Meets]
A SCUM $k$-path meet is a pair of distinct SCUM $k$-paths $\{\pi_1, \pi_2\}$ that use the same endpoints $s, t$.
\end{definition}

Proving an upper bound on SCUM $k$-path meets is the most technically complex part of our argument, so we defer the proofs to Section~\ref{sec:limit3} for $k=3$ and Section~\ref{sec:limit} for general odd $k$.  We separate them since the bound for $k=3$ is both stronger and easier to prove than the bound for general $k$.  For now, we simply state the lemmas, and then in Section~\ref{sec:wrapup} we will show how to use them to prove our main theorems.

\begin{lemma}[Dispersion lemma for $k=3$] \label{lem:3meetbound}
When $k=3$, there are $O( nd f^3 )$ SCUM $3$-path meets in $H$.
\end{lemma}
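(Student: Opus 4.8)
The plan is to take an arbitrary SCUM $3$-path meet $\{\pi_1,\pi_2\}$ with common endpoints $s,t$, write $\pi_1=(s,p_1,p_2,t)$ and $\pi_2=(s,q_1,q_2,t)$ with middle edges $(p_1,p_2)$ and $(q_1,q_2)$, and reason about the union $\pi_1\cup\pi_2$. First, by the Middle-Heavy property the latest edge of $\pi_1\cup\pi_2$ is one of the two middle edges. Since each $\pi_i$ has only three edges, I would then split into cases by how the two paths overlap: (a) they share their middle edges; (b) they share a non-middle edge; (c) they share an internal vertex but no edge; (d) they are internally vertex-disjoint. A short check shows that in cases (a)--(c) the union $\pi_1\cup\pi_2$ contains a cycle of length at most $k+1=4$ (a $4$-cycle or a triangle) whose latest edge is a middle edge of $\pi_1$ or $\pi_2$, while in case (d) the union is a simple $6$-cycle with no chords.

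Cases (b) and (c) I expect to be impossible, and this is the first place the \emph{double} in double-blocking is used. In each, the short cycle $C$ (of length at most $4$) is \emph{double}-blocked, so there are two distinct blocks $\{\mu,g_2\},\{\mu,g_3\}\in B$ where $\mu$ is the latest edge of $C$ and $g_2\ne g_3$ both lie on $C$ (recall both blocks of a short cycle must contain its latest edge). But $\mu$ is the latest edge of $\pi_1\cup\pi_2$, so it is the middle edge of whichever path contains it, say $\pi_1$; the Chain-Unblocked property of $\pi_1$ then says precisely that $\mu$ is not blocked with any edge incident to $s$ or $t$. A direct inspection of the few possibilities shows that in cases (b) and (c), all but at most one edge of $C\setminus\{\mu\}$ is incident to $s$ or $t$, leaving no room for two distinct eligible partners $g_2\ne g_3$. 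Hence cases (b) and (c) contribute no meets. Case (a) is the surviving overlap case; there $\pi_1\cup\pi_2$ is a $4$-cycle plus the chord $(p_1,p_2)=(q_1,q_2)$, and the $4$-cycle is double-blocked. I would charge the meet to the latest edge $e^{*}$ of that $4$-cycle together with the two edges of the $4$-cycle blocked with $e^{*}$; these three edges form a sub-path of the $4$-cycle and so determine the $4$-cycle, and then the chord, up to $O(1)$ choices. Since $e^{*}$ is one of the $O(|E(H)|)=O(nd)$ edges and has at most $f$ blocking partners, case (a) contributes $O(ndf^{2})$ meets.

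The main case --- and the step I expect to be the main obstacle --- is (d). Here $\pi_1\cup\pi_2$ is a chordless $6$-cycle $(s,p_1,p_2,t,q_2,q_1,s)$, so double-blocking only certifies that this $6$-cycle is blocked \emph{once}, by some $\{e^{*},e'\}\in B$ with $e^{*}=(p_1,p_2)$ its latest edge (relabeling so the latest middle edge is the one on $\pi_1$). The Chain-Unblocked property of $\pi_1$ again forbids $e'$ from being incident to $s$ or $t$, and since all six vertices $s,p_1,p_2,t,q_1,q_2$ are distinct, the only edge of the $6$-cycle other than $(p_1,p_2)$ that avoids both endpoints is $(q_1,q_2)$; so $e'=(q_1,q_2)$, i.e.\ \emph{every} case-(d) meet produces the block $\{m(\pi_1),m(\pi_2)\}\in B$. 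It then remains to bound, for each block $\{e,e'\}=\{(p_1,p_2),(q_1,q_2)\}$, the number of meets generating it. Such a meet amounts (up to $O(1)$ labelings) to a choice of $s\in N(p_1)\cap N(q_1)$ and $t\in N(p_2)\cap N(q_2)$; the trouble is that each of these common neighborhoods may have size $\Theta(d)$, which would only give the far-too-weak bound $O(|B|d^{2})=O(nd^{3}f)$.

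To beat this I would exploit the double-blocking a second time, now at the level of the ambient graph rather than the meet. A vertex pair with $c$ common neighbors spans $\binom{c}{2}$ distinct $4$-cycles, and every $4$-cycle of $H$ is double-blocked; since a $4$-cycle is determined by its latest edge together with a choice of two of the at most $f$ edges blocked with that latest edge, $H$ has only $O(|E(H)|f^{2})=O(ndf^{2})$ $4$-cycles in total, hence $\sum_{\{u,v\}}\binom{|N(u)\cap N(v)|}{2}=O(ndf^{2})$. Amortizing the product $|N(p_1)\cap N(q_1)|\cdot|N(p_2)\cap N(q_2)|$ over the blocks against this global $4$-cycle budget --- using also that the Chain-Unblocked conditions already delete $O(f)$ candidate values of each of $s$ and $t$ (the endpoints of edges blocked with $(p_1,p_2)$ or with $(q_1,q_2)$) --- is the delicate bookkeeping I expect to be the crux of the whole lemma, and where the precise exponent $f^{3}$ (rather than a larger power of $f$, or a residual factor of $d$) has to be squeezed out. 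Combining the surviving contributions --- $O(ndf^{2})$ from case (a), none from (b) and (c), and $O(ndf^{3})$ from case (d) --- and noting each meet is counted $O(1)$ times, gives the claimed bound $O(ndf^{3})$.
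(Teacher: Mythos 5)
Your case analysis is mostly sound, and you correctly identify the two structural facts that drive the argument: (i) a SCUM $3$-path meet must be internally vertex disjoint (your cases (b),(c) impossibility argument is essentially the paper's Lemma~\ref{lem:3scumdisjoint}), and (ii) in the vertex-disjoint case the single block of the $6$-cycle is forced, by the chain-unblocked property, to pair the two middle edges (this is the paper's Lemma~\ref{lem:3midedge}). As a side note, your ``surviving'' case (a), shared middle edge, is in fact also impossible by the same triangle argument you use for (c) --- the triangle $s,p_1,p_2$ is double-blocked with latest edge $(p_1,p_2)$, forcing two blocks incident to $s$, contradicting chain-unblocked --- but since you over-count it by only $O(ndf^2)$ this does not damage the final bound.

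The genuine gap is exactly where you flag it: bounding, for a fixed block $\{(p_1,p_2),(q_1,q_2)\}$, the number of ways to pick the endpoints $s \in N(p_1)\cap N(q_1)$ and $t \in N(p_2)\cap N(q_2)$. You correctly recognize that a naive $\Theta(d)$ bound on the common neighborhoods is too weak, and you propose a global amortization over the total number of $4$-cycles, but you explicitly admit you cannot make the bookkeeping close. The paper avoids amortization entirely by proving a \emph{local} bound (its Lemma~\ref{lem:2pathlim}): for any pair of nodes $u,v$, there are at most $f+1$ simple $2$-paths from $u$ to $v$. The argument is short: let $e^*$ be the latest edge on any such $2$-path; for any other $2$-path $\pi$, the $4$-cycle it forms with $e^*$'s $2$-path must be \emph{double}-blocked, which forces a block $\{e^*,e\}$ with $e$ lying on $\pi$ itself (single-blocking alone would not rule out $e$ being the other edge of $e^*$'s own $2$-path, which is why double-blocking is essential here); since distinct $\pi$'s are edge-disjoint and $e^*$ has at most $f$ block-partners, there are at most $f+1$ such $2$-paths. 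Applying this local bound directly gives $O(f)$ choices for each of $s$ and $t$, hence $O(f^2)$ per block and $O(nd\cdot f\cdot f^2) = O(ndf^3)$ overall, with no amortization needed. So the key missing idea in your proposal is the pointwise $O(f)$ bound on common neighborhoods/$2$-paths; once you have it, case (d) closes immediately and the global $4$-cycle count becomes unnecessary.
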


\begin{lemma}[Dispersion lemma for odd $k$] \label{lem:oddmeetbound}
When $k$ is odd, the number of SCUM $k$-path meets in $H$ is
$$O_k\left(n f^{k} d^{k-2} + n f^{(k+1)/2} d^{k-1} \right).$$
\end{lemma}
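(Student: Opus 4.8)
The plan is to prove the bound by an encoding argument: I will show that every SCUM $k$-path meet $\{\pi_1,\pi_2\}$ can be reconstructed from one of $O_k(1)$ ``types'' of data, where each type of data consists of a single vertex ($n$ choices), a bounded number of \emph{free} edge-extensions ($O(d)$ choices each, since by the cleaning assumptions every node has degree $\Theta(d)$), a bounded number of \emph{blocked} edge-extensions ($O(f)$ choices each, since every edge of $H$ lies in at most $f$ pairs of $B$), and $O_k(1)$ fully determined steps. I will exhibit two regimes: one in which a meet is encoded using $k-2$ free extensions and $k$ blocked extensions, and one using $k-1$ free extensions and $(k+1)/2$ blocked extensions; summing $n\,d^{k-2}f^{k}+n\,d^{k-1}f^{(k+1)/2}$ over the $O_k(1)$ types gives the lemma.

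To set up the encoding, fix a meet $\{\pi_1,\pi_2\}$ with common endpoints $s,t$ and let $U=\pi_1\cup\pi_2$, a subgraph with at most $2k$ edges that contains at least one cycle since $\pi_1\neq\pi_2$. The latest edge $e^*$ of $U$ lies in $\pi_1$ or $\pi_2$, and since SCUM paths are middle-heavy it must be the middle edge of whichever path contains it; so without loss of generality $e^*=e_m^{1}$, the middle edge of $\pi_1$ (the case $e_m^1=e_m^2$ and the symmetric case are identical). The workhorse of the argument is the following localization principle: if $C\subseteq U$ is any cycle containing $e^*$, then every block $\{e^*,e'\}$ witnessing $C$ has $e'\notin\pi_1$, because the chain-unblocked property of $\pi_1$ forbids a block between $e_m^1$ and any edge incident to $v_0,\dots,v_{m-2}$ or $v_{m+1},\dots,v_k$ (i.e., any edge of $\pi_1$ other than $e_m^1$ itself), and moreover the chain-unblocked property of $\pi_2$ restricts $e'$ to a set of at most $k-2$ edges of $\pi_2$ lying near $\pi_2$'s middle. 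Hence each such block, once we know $C$, costs only $O(1)\cdot O(f)$ choices to specify, rather than $O(d)$.

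I would then case on the amount of overlap between $\pi_1$ and $\pi_2$, measured by $a$ (length of the longest common prefix) and $b$ (length of the longest common suffix), together with whether $e_m^1$ lies in the shared or the divergent part. When the overlap is large, the part of $U$ between the divergence and convergence points is a cycle of length $\le k+1$, hence \emph{doubly} blocked; the two blocks plus the localization principle let me recover the short divergent part from $O(1)$ free edges and two blocks, while the shared prefix and suffix are SCUM-path-like and contribute at most $d^{k-2}$ — comfortably inside the $n\,d^{k-1}f^{(k+1)/2}$ regime. When the overlap is small, $U$ is (close to) a $2k$-cycle; here I peel the first and last edge off each of $\pi_1,\pi_2$ — which, as observed in Section~\ref{sec:counting}, turns a SCUM $k$-path into a SCUM $(k-2)$-path — and recurse on the shorter meet-like structure, recording at each peel either a forced equality of endpoints or a fresh block extracted from a short cycle of $U$ through the current latest remaining edge. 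Since each peel removes $O(1)$ edges and buys $O(1)$ blocks, after $\Theta(k)$ peels we have described the whole meet with $\approx k$ blocks and $\approx k-2$ free edges, plus the closing edge; this is the $n\,d^{k-2}f^{k}$ regime. For $k=3$ this recursion has depth one, which is why that case is both cleaner and gives a better constant.

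The step I expect to be the main obstacle is making the small-overlap recursion precise: at each round I must exhibit a cycle in $U$ of length $\le 2k$ through the ``current'' latest edge whose blocking partner lies in the not-yet-determined portion of the meet and has not been used before, and I must keep the partner localized (via chain-unblockedness) so that it costs only $O(f)$ given the data chosen so far. This requires carefully tracking which vertices of $\pi_1$ and $\pi_2$ are already determined, ruling out the various ways the two paths can re-intersect in their interiors beyond a clean prefix/suffix overlap, and checking that the blocks harvested across rounds are genuinely distinct; I expect essentially all of the technical length of the proof to live in this bookkeeping.
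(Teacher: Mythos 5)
The opening of your proposal matches the paper's: you pick the latest edge $e^*$ (which, by middle-heaviness, must be the middle edge $e_m^\ell$ of one path), and you observe that chain-unblockedness of $\pi_1$ forbids any block $\{e^*, e'\}$ with $e'$ elsewhere on $\pi_1$, pinning $e'$ to an interior edge of $\pi_2$. One small correction there: the restriction of $e'$ to positions $2$ through $k-1$ of $\pi_2$ also comes from $\pi_1$'s chain-unblockedness, not $\pi_2$'s --- $e'$ cannot be incident to $s$ or $t$ because those are endpoints of $\pi_1$ that strictly follow $e_m^\ell$ in the appropriate orientation, and chain-unblockedness of $\pi_2$ says nothing about blocks involving $e^*$ unless $e^* \in \pi_2$.

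The genuine gap is in the counting. The paper's crucial engine is a pair of lemmas that your sketch does not reproduce. The first says that between two \emph{fixed} nodes there are only $O(f^{L-1})$ simple paths of length $L \le (k+1)/2$ that are not double-blocked (proved by strong induction on $L$, where at each level one takes the latest edge \emph{over the whole family of such paths}, which generally changes from level to level). The second uses this to show that pairs consisting of a chain-unblocked path $\pi_1$ of length $L_1$ and an arbitrary path $\pi_2$ of length $L_2$ sharing one endpoint number only $O(f^{L_1} d^{L_2-1})$: one freely extends $\pi_2$ by $L_2 - 1$ steps, then $\pi_1$ together with $\pi_2$'s final edge is a \emph{short, not-double-blocked} path with fixed endpoints, so by the first lemma it costs only $f^{L_1}$. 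This is what lets \emph{all} $m-1$ edges of $\pi^\ell$'s prefix (and likewise the suffix) be encoded with $f^{m-1}$ rather than $d^{m-1}$. Your ``localization principle'' only saves on the one or two edges directly blocked with $e^*$; everything else on $\pi^\ell$ is still charged $d$ per edge. Concretely, in your ``large overlap'' case, the divergent cycle has $2c$ edges with $c = k-a-b$, and specifying it costs $d^{2c-3}f^2$ after fixing $e^*$ (not ``$O(1)$ free edges and two blocks''); combined with $d^{a+b}$ for the shared part and $d$ for $e^*$, the total $d$-exponent is $2k-a-b-2$, which can reach $3(k-1)/2$ and overshoots the claimed $d^{k-1}$ already at $k=5$.

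Your ``small overlap'' peeling has a structural problem as well: after stripping the first and last edge from each of $\pi_1, \pi_2$, the two shortened paths generally no longer share endpoints, so they do not form a meet and there is no natural object to recurse on. Moreover, each round proposes to harvest ``a fresh block from a short cycle of $U$ through the current latest remaining edge,'' but that latest edge is $e^*$ throughout, and there is essentially only one cycle in $U$ containing $e^*$; you cannot keep extracting fresh blocks through $e^*$ round after round. The paper sidesteps this by running its induction over all non-double-blocked short paths between fixed endpoints (so the ``latest edge'' is recomputed at each level), not over the two paths of a single meet. To repair your approach you would need something playing the role of the short-path-count lemma to justify the per-round cost, at which point you would essentially be rederiving the paper's argument.
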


\subsection{Proof Wrapup} \label{sec:wrapup}

The rest of the proof is a matter of comparing our lower bound on the number of SCUM $k$-paths to the upper bound on the number of SCUM $k$-path meets, and algebraically rearranging our expressions into a bound on $d$.
We return to general $k$ for the following lemma.
\begin{lemma} \label{lem:csmeets}
For any parameter $\alpha \ge 0$, if $H$ has $\le \alpha n^2$ SCUM $k$-path meets, then $d \le O\left(n^{1/k} + n^{1/k} \alpha^{1/(2k)} \right)$.
\end{lemma}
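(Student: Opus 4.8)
The plan is to combine the lower bound on the number of SCUM $k$-paths from Lemma~\ref{lem:scumcount} with a convexity argument against the assumed upper bound on meets, exactly in the spirit of the classical Moore-bound proof. First I would let $P$ denote the total number of SCUM $k$-paths in $H$, with the convention that each path is recorded once for each of its two edge-sequence orientations; with this convention Lemma~\ref{lem:scumcount} gives $P = n \cdot \Omega(d)^k$. (If $d$ is smaller than the constant needed for that lemma to apply, the claimed bound $d \le O(n^{1/k})$ holds trivially, so assume $d$ is at least that constant.) For each ordered pair of vertices $(s,t)$ let $q_{s,t}$ denote the number of oriented SCUM $k$-paths from $s$ to $t$, so that $\sum_{(s,t)} q_{s,t} = P$, a sum with at most $n^2$ terms.

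Next I would relate meets to the $q_{s,t}$. Orienting the two paths of an unoriented meet $\{\pi_1,\pi_2\}$ with endpoint set $\{s,t\}$ both as $s\to t$, and separately both as $t\to s$, shows that the number of SCUM $k$-path meets equals $\tfrac12\sum_{(s,t)}\binom{q_{s,t}}{2}$ up to the fact that SCUM paths are closed under reversal; in any case it is $\Theta\bigl(\sum_{(s,t)}\binom{q_{s,t}}{2}\bigr)$. By Cauchy--Schwarz, $\sum_{(s,t)} q_{s,t}^2 \ge (\sum_{(s,t)} q_{s,t})^2 / n^2 = P^2/n^2$, so
\[
\sum_{(s,t)} \binom{q_{s,t}}{2} \;=\; \tfrac12\Bigl(\sum_{(s,t)} q_{s,t}^2 - \sum_{(s,t)} q_{s,t}\Bigr) \;\ge\; \tfrac12\Bigl(\tfrac{P^2}{n^2} - P\Bigr).
\]
Feeding in the hypothesis that $H$ has at most $\alpha n^2$ meets yields an inequality of the form $\tfrac{P^2}{n^2} - P = O(\alpha n^2)$, i.e.\ $P^2 - O(n^2)\,P - O(\alpha n^4) \le 0$, a quadratic inequality in $P$. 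Solving it gives $P = O\!\left(n^2 + n^2\sqrt{\alpha}\right)$.

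Finally I would substitute $P = n\cdot\Omega(d)^k$ into this bound: $n\cdot\Omega(d)^k = O(n^2 + n^2\sqrt\alpha)$ forces $d^k = O(n + n\sqrt\alpha)$, and taking $k$-th roots and using $(a+b)^{1/k}\le a^{1/k}+b^{1/k}$ for $k\ge 1$ yields $d = O\!\left(n^{1/k} + n^{1/k}\alpha^{1/(2k)}\right)$, as claimed.

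I do not expect a real obstacle here: this is the routine ``dispersion implies sparsity'' bookkeeping, and all of the difficulty is deferred to establishing the meet bounds (Lemmas~\ref{lem:3meetbound} and~\ref{lem:oddmeetbound}) that will later be plugged in for $\alpha n^2$. The only points requiring a little care are the orientation conventions relating the raw meet count to $\sum_{(s,t)}\binom{q_{s,t}}{2}$ (a bounded multiplicative factor, harmlessly absorbed into the $O(\cdot)$), and noting that the lower-order $-P$ term in the convexity bound is precisely what produces the additive $n^{1/k}$ summand in the conclusion, while the square root arising from solving the quadratic produces the $\alpha^{1/(2k)}$ exponent.
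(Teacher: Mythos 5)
Your proof is correct and takes essentially the same approach as the paper: both use the counting lemma (Lemma~\ref{lem:scumcount}) as a lower bound on the total number of SCUM $k$-paths, then a Cauchy--Schwarz/convexity argument relating that total to $\sum_{(s,t)}\binom{q_{s,t}}{2}$ (the meet count), and finally rearrange for $d$. The only difference is cosmetic bookkeeping --- the paper splits off the $x_{s,t}=1$ pairs and applies Cauchy--Schwarz to the remainder, whereas you invoke $\sum q^2 \ge (\sum q)^2/n^2$ and solve the resulting quadratic in $P$ --- and these produce the identical bound.
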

\begin{proof}
Let $x_{s, t}$ be the number of $s \leadsto t$ SCUM $k$-paths.
We have:
\begin{align*}
\sum \limits_{(s, t)} x_{(s, t)} &= \left|\left\{ (s, t) \ \mid \ x_{(s, t)} = 1 \right\} \right| + \left( \sum \limits_{(s, t) | x_{(s, t)} \ge 2} x_{(s, t)} \right)\\
&\le n^2 + O\left( n^2 \cdot \sum \limits_{(s, t)} \binom{x_{(s, t)}}{2} \right)^{1/2} \tag*{Cauchy-Schwarz}\\
&\le n^2 + O\left( n^2 \cdot \alpha n^2 \right)^{1/2} \\
&= O\left(n^2 + n^2 \alpha^{1/2}\right).
\end{align*}
We note that the second-to-last inequality is because, by definition, the number of $s \leadsto t$ SCUM $k$-path meets is exactly $\binom{x_{(s, t)}}{2}$.
Thus, summing $\binom{x_{(s, t)}}{2}$ over all pairs $(s, t)$ counts the number of SCUM $k$-path meets, which is at most $\alpha n^2$.

By Lemma \ref{lem:scumcount}, we also have the lower bound $n \cdot \Omega(d)^k$ on the total number of SCUM $k$-paths.
Thus,
$$n \cdot \Omega(d)^k \le O\left(n^2 + n^2 \alpha^{1/2} \right).$$
If the former term on the right-hand side dominates, then we rearrange to get $d = O(n^{1/k})$.
If the latter term on the right-hand side dominates, then we rearrange to get $d = O(n^{1/k} \alpha^{1/(2k)})$.
\end{proof}

\subsubsection{\texorpdfstring{$k=3$}{k=3}}
We now again restrict to $k=3$, and we wrap up the proof of Theorem \ref{thm:3-blocked-upper}. 
Plugging in the bound in Lemma~\ref{lem:3meetbound} to the previous lemma, we have
$$d \le O\left( n^{1/3} +  n^{1/3} \left(df^3 / n \right)^{1/6} \right).$$
If the former term dominates, then we get $d = O(n^{1/3})$.
If the latter term dominates, then we continue
$$d^{5/6} \le O\left( n^{1/6} f^{1/2} \right)$$
and so
$$d \le O\left( n^{1/5} f^{3/5} \right)$$
which completes the proof of Theorem~\ref{thm:3-blocked-upper}.

\subsubsection{General Odd \texorpdfstring{$k$}{k}}

We now complete the proof of Theorem~\ref{thm:k-blocked-upper} for general $k$.
By Lemma \ref{lem:oddmeetbound}, the number of SCUM $k$-path meets is
$$O_k\left(n f^{k} d^{k-2} + n f^{(k+1)/2} d^{k-1} \right).$$
Suppose the first term dominates.
Plugging into Lemma \ref{lem:csmeets}, we get
$$d = O(n^{1/k}) + n^{1/k} \cdot O_k\left(\frac{f^{k} d^{k-2}}{n}\right)^{\frac{1}{2k}}.$$
If the former term dominates, we get $d = O(n^{1/k})$. 
If the latter term dominates, we continue
$$d^{k+2} = n \cdot O_k\left(f^{k}\right)$$
and so
$$d = O_k\left(f^{\frac{k}{k+2}} n^{\frac{1}{k+2}}\right).$$
On the other hand, returning to our bound on the number of SCUM $k$-path meets, suppose the second term dominates.
Again plugging into Lemma \ref{lem:csmeets}, we get
$$d = O(n^{1/k}) + n^{1/k} \cdot O_k\left(\frac{f^{(k+1)/2} d^{k-1}}{n}\right)^{\frac{1}{2k}}.$$
If the first term dominates we again get $d = O(n^{1/k})$.
If the latter term dominates, we continue
$$d^{k+1} = O_k\left(n \cdot f^{(k+1)/2}\right)$$
and so
$$d = O_k\left(f^{1/2} n^{\frac{1}{k+1}} \right).$$

\section{Limiting SCUM Path Meets (\texorpdfstring{$k=3$}{k=3})} \label{sec:limit3}
In this section we prove Lemma~\ref{lem:3meetbound}, which as discussed implies Theorem~\ref{thm:3-upper-main}.

\subsection{Technical Lemmas on Path Structure}

Our focus in this part is on $k=3$ specifically (i.e., graphs with a $6$ double-blocking set).
In this setting, the necessary technical lemmas about the structure of our SCUM paths become simpler to state and use.
The most important simplification is reflected by Lemma \ref{lem:3scumdisjoint}, which shows that in any SCUM $3$-path meet, the two paths of the meet are internally vertex-disjoint.
This lemma does \emph{not} generalize to larger $k$, which introduces technical complexities and makes our quantitative bounds a bit worse.
Analogous structural lemmas for general $k$ are given in Section \ref{sec:oddtech}.

Our goal is to prepare for a dispersion lemma limiting the number of SCUM $3$-path meets.
We note that, when considering a graph with a $2k$ blocking set, one generally focuses on meets between $k$-paths (not $2k$-paths).
This is analogous to the Moore bounds, and the reason is roughly because two $k$-paths in a meet create a cycle on $\le 2k$ edges, and so our blocking set is indeed helpful in analyzing the structure of this meet.

The following three lemmas hold only for $k \ge 3$ because they rely on the set of blocks in $B$ (either explicitly or by discussion of SCUM paths), and we will use that $B$ is a $(\ge 6)$ double blocking set in their proofs.

\begin{lemma} \label{lem:3midedge}
Suppose $k \ge 3$, let $\{\pi^\ell, \pi^r\}$ be a SCUM $3$-path meet, let $e^*$ be the middle edge of $\pi^\ell$, and suppose $e^*$ is the latest edge in the ordering in $\pi^\ell \cup \pi^r$.
Let $F^*$ be the set of edges that appear in a block with $e^*$ in $B$.
Then the middle edge of $\pi^r$ is in $F^*$.
\end{lemma}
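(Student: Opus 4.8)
The plan is to exploit the structural fact that $\{\pi^\ell, \pi^r\}$ meet at the same endpoints $s, t$, so together they bound a closed walk on at most $6$ edges; combined with the double-blocking set $B$ and the hypothesis that $e^*$ (the middle edge of $\pi^\ell$) is the globally latest edge, this walk must be blocked by a pair containing $e^*$, which forces the claimed edge to lie in $F^*$. First I would set up notation: write $\pi^\ell = (s = v_0, v_1, v_2, v_3 = t)$ with middle edge $e^* = (v_1, v_2)$, and $\pi^r = (s = u_0, u_1, u_2, u_3 = t)$ with middle edge $g^* = (u_1, u_2)$; our goal is $g^* \in F^*$, i.e.\ $\{e^*, g^*\} \in B$ (recall $F^*$ is exactly the set of edges appearing in a block with $e^*$). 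If $\pi^\ell$ and $\pi^r$ share an internal vertex the argument needs a little care, but by the Chain-Unblocked property of SCUM paths (and the companion Lemma \ref{lem:3scumdisjoint}, which I may invoke) the two paths are internally vertex-disjoint, so $\pi^\ell \cup \pi^r$ is a genuine cycle $C$ with $|C| = 6$.

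Next I would apply the double-blocking definition to $C$: since $|C| \le 2k = 6$, $C$ is blocked by some $\{e^*, e'\} \in B$ with $e', e^* \in C$ and $e^*$ the last edge of $C$ — here I use that $e^*$ is the latest edge in $\pi^\ell \cup \pi^r = C$, so it is automatically the last edge of $C$ in the ordering, and that $e^*$ lies in every block that blocks $C$ (the "last edge" is common to all blocks of $C$). So there is at least one $e' \in C \cap F^*$. It remains to pin down \emph{which} edge of $C$ can be $e'$, and to rule out every edge except $g^*$. The key tool is the Chain-Unblocked property applied to $\pi^\ell$ itself: the oriented subpaths $(e^* = e_m, e_{m-1}, \dots)$ and $(e^* = e_m, e_{m+1}, \dots)$ of $\pi^\ell$ are chain-unblocked, which (by the definition of chain-unblocked) means there is no block $\{e^*, e'\}$ with $e'$ incident to any vertex of $\pi^\ell$ strictly following $e^*$ along those orientations — i.e.\ no $e'$ incident to $v_0$ (from the orientation toward $e_1$) and no $e'$ incident to $v_3$. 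This immediately forbids $e'$ from being either of the two outer edges of $\pi^\ell$ (they are incident to $v_0$ or $v_3$), and also forbids $e'$ from being any edge of $\pi^r$ that is incident to $s = v_0$ or $t = v_3$, namely the outer edges $(u_0,u_1)$ and $(u_2,u_3)$ of $\pi^r$. Of the six edges of $C$, that leaves only $e^*$ itself (which cannot pair with itself in a block, since $B \subseteq \binom{E}{2}$ consists of pairs of \emph{distinct} edges) and $g^* = (u_1, u_2)$, the middle edge of $\pi^r$. Hence $e' = g^*$, so $\{e^*, g^*\} \in B$ and $g^* \in F^*$, as desired.

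The main obstacle I anticipate is the bookkeeping around \emph{incidence}: I need to be careful that the forbidden vertices from the chain-unblocked conditions on $\pi^\ell$ are exactly $v_0$ and $v_3$ (not the internal vertices $v_1, v_2$), and then argue that every edge of $C$ other than $e^*$ and $g^*$ touches $\{v_0, v_3\}$ — this uses that $\pi^\ell$ and $\pi^r$ are internally disjoint and share only $s = v_0, t = v_3$, so the only edges of $C$ avoiding both $v_0$ and $v_3$ are precisely the two middle edges $e^*$ and $g^*$. A secondary subtlety is that the chain-unblocked property is stated for strict followers of an edge along an \emph{oriented} path, so I should double-check the orientation conventions so that $v_0$ really is a strict follower of $e^*$ along $(e_m, e_{m-1}, e_1)$ and $v_3$ along $(e_m, e_{m+1}, e_k)$; this is where the "strictly follows, in particular $v \notin e$" clause of Definition \ref{def:scum} is exactly what I need. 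Once those incidence facts are nailed down the conclusion is immediate.
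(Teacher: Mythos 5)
Your proof is correct and follows essentially the same argument as the paper's: any block of the short cycle formed by $\pi^\ell \cup \pi^r$ must contain the latest edge $e^*$, and the chain-unblocked property of $\pi^\ell$ rules out any blocking partner $e'$ incident on $s$ or $t$, leaving only the middle edge of $\pi^r$. Your appeal to Lemma~\ref{lem:3scumdisjoint} is unnecessary (and the paper avoids it, as that lemma is stated and proved afterward): even if $\pi^\ell$ and $\pi^r$ share internal vertices, every edge of $\pi^\ell \cup \pi^r$ other than the two middle edges is already incident on $s$ or $t$ by simplicity of the two paths, so the same conclusion follows without invoking internal disjointness.
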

\begin{proof}
Notice that $e^* \notin \pi^r$, since this would imply $\pi^\ell=\pi^r$ but $\pi^\ell, \pi^r$ must be distinct in a meet.
Thus, the paths $\pi^\ell \cup \pi^r$ form a cycle $C$ of length $\le 6$, and this cycle contains $e^*$ as its heaviest edge.
We therefore have a block of the form $\{e^*, e\} \in B$ with $e \in C$.
Since $\pi^\ell$ is chain unblocked, $e$ cannot be incident on the first or last node of $\pi^\ell$.
Thus, the only possible placement of $e$ in $\pi^\ell \cup \pi^r$ is as the middle edge of $\pi^r$.
\end{proof} 

\begin{lemma} \label{lem:3scumdisjoint}
Suppose $k \ge 3$.
In any SCUM $3$-path meet $\{\pi^\ell, \pi^r\}$, the paths $\pi^\ell, \pi^r$ are internally vertex disjoint.
(That is, they intersect only on their endpoints, and otherwise do not share nodes.)
\end{lemma}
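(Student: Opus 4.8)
The plan is to argue by contradiction: suppose $\pi^\ell$ and $\pi^r$ share an internal vertex. Since both are SCUM $3$-paths, each has a well-defined middle edge, and among all six edges appearing in $\pi^\ell \cup \pi^r$ there is a latest one in the edge ordering. By the Middle-Heavy property, the latest edge of each path individually is its own middle edge, so the globally latest edge $e^*$ is the middle edge of one of the two paths — say $e^* = e^*_\ell$, the middle edge of $\pi^\ell$ (the other case is symmetric by relabeling). I would then try to derive a structural contradiction with either the Simple property, the Chain Unblocked property, or the double-blocking guarantee on short cycles.

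First I would set up the case analysis on how the shared internal vertex $w$ sits inside the two paths. Write $\pi^\ell = (v_0, v_1, v_2, v_3)$ with middle edge $e^*_\ell = (v_1, v_2)$, and similarly for $\pi^r$; the shared endpoints are $s = v_0 = $ (start of $\pi^r$) and $t = v_3 = $ (end of $\pi^r$), after possibly reversing $\pi^r$. The internal vertices of $\pi^\ell$ are $v_1, v_2$ and of $\pi^r$ are two vertices $u_1, u_2$. A shared internal vertex means $\{v_1, v_2\} \cap \{u_1, u_2\} \neq \emptyset$. One should observe that $\pi^\ell \neq \pi^r$ (meets are between distinct paths) and that $e^*_\ell \notin \pi^r$ (as in the proof of Lemma~\ref{lem:3midedge}, this would force the paths equal given they share endpoints). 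The key tool is that the symmetric difference $\pi^\ell \triangle \pi^r$ (as edge sets) decomposes into cycles, each of length $\le 6$, so each is blocked, and cycles of length $\le k+1 = 4$ are blocked \emph{twice}. When a shared internal vertex is present, these cycles become short — typically length $4$ or less — so the \emph{double} part of double-blocking comes into play.

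Then I would extract the contradiction from chain-unblockedness. The Chain Unblocked property of $\pi^\ell$ says: split at the middle edge $e^*_\ell$; on the subpath $(e^*_\ell, e_1^\ell)$ oriented toward $v_0$, no block $\{e^*_\ell, e'\}$ has $e'$ incident on $v_0$; and symmetrically on the $t$-side, no such $e'$ is incident on $v_3$. Since $e^*_\ell$ is the globally latest edge, any short cycle $C \subseteq \pi^\ell \cup \pi^r$ through $e^*_\ell$ must be blocked by a pair $\{e^*_\ell, e'\}$ with $e' \in C$; I would argue that the only vertices such an $e'$ can be incident to are ruled out — either $e'$ would have to touch $v_0$ or $v_3$ (contradicting chain-unblockedness of $\pi^\ell$), or $e'$ lies on $\pi^r$ in a position where it touches $\pi^r$'s endpoints $s$ or $t$ — which equal $v_0, v_3$ — again contradicting chain-unblockedness of $\pi^\ell$ (since the cycle $C$ then exhibits a block $\{e^*_\ell, e'\}$ with $e'$ incident on $v_0$ or $v_3$). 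The case where $e^*_\ell$ is \emph{not} on the short cycle (because the shared vertex splits $\pi^\ell \cup \pi^r$ into two cycles, only one containing $e^*_\ell$) is handled by looking instead at the middle edge of $\pi^r$ and applying Lemma~\ref{lem:3midedge} together with chain-unblockedness of $\pi^r$.

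The main obstacle I anticipate is the bookkeeping of cases for where the shared internal vertex lies and how it partitions the six edges into one or two cycles — in particular ensuring that in \emph{every} configuration the relevant blocking edge $e'$ is forced to be incident on one of the endpoints $s = v_0$ or $t = v_3$, so that chain-unblockedness (of whichever path has the globally-latest middle edge) is violated. A secondary subtlety is handling the degenerate sub-cases where a single shared internal vertex makes one of the two cycles a $2$-cycle (a doubled edge) or where both internal vertices coincide, which would make $\pi^\ell$ and $\pi^r$ use the same vertex set but differ — there I expect to either directly contradict simplicity or reduce to an even shorter cycle where double-blocking gives two edges $e', e''$ in a cycle of length $\le 4$, one of which must violate chain-unblockedness. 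I would organize the write-up as: (1) reduce to $e^*$ being a middle edge, say of $\pi^\ell$; (2) enumerate the possible positions of a shared internal vertex; (3) in each, identify the short cycle(s), invoke (double-)blocking to produce $e'$, and show $e'$ is incident on an endpoint of the path whose middle edge is $e^*$, contradicting Chain Unblocked.
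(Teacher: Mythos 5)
Your plan follows essentially the same approach as the paper's proof: take $e^*$ to be the globally latest edge (WLOG the middle edge of $\pi^\ell$), locate a short cycle $C$ in $\pi^\ell \cup \pi^r$ through $e^*$, invoke double-blocking to produce two \emph{distinct} blocks $\{e^*, e'\}, \{e^*, e''\}$ with $e', e'' \in C$, and observe that chain-unblockedness of $\pi^\ell$ forbids $e'$ (and $e''$) from being incident on $s = v_0$ or $t = v_3$. Since (for $k=3$) all but at most one of the non-$e^*$ edges of $C$ touch $s$ or $t$, this gives a contradiction. The paper spells this out explicitly for the configurations in which the two paths share their first edge or (symmetrically) their last edge, where $C$ is a $4$-cycle with two of its three non-$e^*$ edges incident on $t$ (resp.\ $s$). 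Your instinct to enumerate the remaining configurations — a shared internal vertex with no shared edge (where the union decomposes into two triangles joined at the shared vertex, one containing $e^*$), and the degenerate case where the two internal vertex sets coincide — is sound, and in all of them the same mechanism applies once one picks the $3$- or $4$-cycle through $e^*$.

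One sub-step in the plan would fail as stated: the proposed fallback of ``looking instead at the middle edge of $\pi^r$ and applying Lemma~\ref{lem:3midedge} together with chain-unblockedness of $\pi^r$'' gives nothing new in the hardest degenerate sub-case, where $\pi^\ell$ and $\pi^r$ visit the same internal vertices in reversed order. There the two paths share the middle edge, so $\pi^r$'s middle edge is $e^*$ itself, Lemma~\ref{lem:3midedge} does not apply, and chain-unblockedness of $\pi^r$ is the same constraint as chain-unblockedness of $\pi^\ell$. The correct move (which you also gesture at in the final sentence) is to instead take the triangle consisting of $e^*$ together with the first edges of $\pi^\ell$ and $\pi^r$: that triangle passes through $s$, double-blocking forces two blocks $\{e^*, e'\}, \{e^*, e''\}$ with $e', e''$ both incident on $s$, and chain-unblockedness of $\pi^\ell$ rules this out. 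With that one substitution, the plan carries through.
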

\begin{proof}
Let $s, t$ be the endpoints of $\pi^\ell, \pi^r$, let $e^* = (u^*, v^*)$ be the latest edge in the ordering in $\pi^\ell \cup \pi^r$, and suppose without loss of generality that $e^*$ is the middle edge of $\pi^\ell$.
Notice that $e^*$ cannot be the middle edge $e'$ of $\pi^r$, since this would imply $\pi^\ell=\pi^r$.

Seeking contradiction, suppose that $\pi^\ell, \pi^r$ intersect on their first edge $(s, u^*)$.
Then the last two edges of $\pi^\ell, \pi^r$ form a $4$-cycle $C$ with nodes $(u^*, v^*, t, x)$, which must be double-blocked in $B$.
It is possible that the middle edges $\{e^* = (u^*, v^*), (u^*, x)\}$ appear together in $B$, which would block $C$ once.
But, since $\pi^\ell$ is chain unblocked, $e^*$ cannot appear in $B$ with either of the other edges $\{(v^*, t), (t, x)\}$ which completes the contradiction.

The argument that $\pi^\ell, \pi^r$ cannot intersect on their last edges $(v^*, t)$ is symmetric.
\end{proof}

\begin{lemma} \label{lem:2pathlim}
When $k \ge 3$, for any nodes $s, t$, there are at most $f+1$ simple $s \leadsto t$ paths in $H$ of length $2$.
\end{lemma}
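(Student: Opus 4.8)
The plan is to count the simple length-$2$ paths from $s$ to $t$ and observe that they correspond to an almost-trivial cycle structure that the double-blocking property controls. A simple $s \leadsto t$ path of length $2$ is determined by its midpoint: it has the form $(s, m, t)$ for some node $m \notin \{s,t\}$ with $(s,m), (m,t) \in E(H)$. Suppose for contradiction that there are at least $f+2$ such midpoints $m_1, \dots, m_{f+2}$. Any two of these, say $m_i$ and $m_j$, together with $s$ and $t$ form a $4$-cycle $C_{ij}$ with edge set $\{(s,m_i), (m_i,t), (t,m_j), (m_j,s)\}$ (these four edges are distinct since the $m$'s are distinct and none equals $s$ or $t$). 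Since $4 \le k+1$ when $k \ge 3$, the double-blocking property guarantees that $C_{ij}$ is blocked \emph{twice}, by two distinct blocks both of which contain the last edge of $C_{ij}$.

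The key step is to pin down which edge of all these cycles is ``the last edge.'' Among the $2(f+2)$ edges $\{(s,m_i)\}_i \cup \{(m_i,t)\}_i$, let $e^*$ be the single latest edge in the ordering, and say WLOG $e^* = (s, m_1)$ (the case $e^* = (m_1, t)$ is symmetric). Then for every $j \ne 1$, the $4$-cycle $C_{1j}$ has $e^*$ as its latest edge, so $e^*$ must appear in $B$ with one of the other three edges of $C_{1j}$, namely one of $(m_1, t)$, $(t, m_j)$, $(m_j, s)$. Now observe that the cycles $C_{1j}$ for distinct $j \in \{2, \dots, f+2\}$ overlap only in the two edges $(s,m_1), (m_1,t)$; the remaining two edges $(t,m_j), (m_j,s)$ are distinct across different $j$. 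Since each $C_{1j}$ must be blocked \emph{twice} by blocks containing $e^*$, and there is only one possible block of the form $\{e^*, (m_1,t)\}$ available (reused across all $j$), each $C_{1j}$ must additionally be blocked by a block of the form $\{e^*, (t,m_j)\}$ or $\{e^*, (m_j, s)\}$. These blocks are distinct for distinct $j$ (their second coordinate is an edge incident to $m_j$, and the $m_j$ are distinct). Hence there are at least $f+1$ distinct blocks in $B$ containing $e^*$, contradicting the cleaning assumption from Section~\ref{sec:cleaning} that every edge participates in at most $f$ pairs in $B$.

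The main obstacle I anticipate is the bookkeeping around the ``blocked twice'' requirement: a single block $\{e^*, (m_1,t)\}$ can serve as one of the two blocks for \emph{all} the cycles $C_{1j}$ simultaneously, so I cannot naively say each cycle contributes a fresh block. The resolution is exactly the overlap analysis above --- because the two ``shared'' edges of the $C_{1j}$'s are $(s,m_1)$ and $(m_1,t)$, and $e^*$ is one of them, the only block among the first two ``slots'' that can be reused is $\{e^*, (m_1,t)\}$, and therefore the \emph{second} required block of each cycle must use an edge that is private to that cycle, giving $f+1$ distinct blocks on $e^*$. A minor subtlety to check is the degenerate case where some $(s, m_i)$ or $(m_i, t)$ coincides with edge $(s,t)$ itself --- but a length-$2$ path with midpoint $m_i \notin \{s,t\}$ cannot use the edge $(s,t)$, so this does not arise, and the four edges of each $C_{ij}$ are genuinely distinct. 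I would also double-check that $f+1$ blocks on $e^*$ is indeed the contradiction we want: the cleaning step bounds participation by $f$, so $f+1$ is strictly too many, closing the argument.
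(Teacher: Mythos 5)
Your proof is correct and takes essentially the same approach as the paper's: pick the latest edge $e^*$ among all edges appearing in the length-$2$ paths, use the double-blocking of each $4$-cycle formed with the path containing $e^*$ to extract a block on $e^*$ that uses an edge private to each other path, and then invoke the cleaning bound of at most $f$ blocks per edge. The only differences are cosmetic (you phrase it as a contradiction rather than a direct count, and you spell out the overlap bookkeeping that the paper compresses into one sentence).
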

\begin{proof}
Let $\Pi_{s, t}$ be the set of $s \leadsto t$ paths of length $2$.
Notice that any two paths in $\Pi_{s, t}$ form a $4$-cycle, which must be double-blocked.
Let $e^*$ be the latest edge in the ordering contained in any path in $\Pi_{s, t}$.
Then for any path $\pi \in \Pi_{s, t}$ with $e^* \notin \pi$, there exists a block of the form $\{e^*, e\}$ with $e \in \pi$.
Since we may assume without loss of generality that $e^*$ is only blocked with at most $f$ other edges (see Section~\ref{sec:cleaning}), it follows that $|\Pi_{s, t}| \le f+1$.
\end{proof}

\subsection{Proof of Lemma~\ref{lem:3meetbound}}
We can generate all SCUM $3$-path meets $\{\pi^\ell, \pi^r\}$ by the following process (see Figure \ref{fig:3meetgen}).
\begin{itemize}
\item Choose an edge $e^* = (u^*, v^*)$ to act as the latest edge in $\pi^{\ell} \cup \pi^r$ in the ordering.
We will assume without loss of generality that $e^*$ is the middle edge of $\pi^\ell$.
There are $O(nd)$ possible choices of $e^*$.

\item Let $F^*$ be the edges that are paired with $e^*$ in $B$.
By Lemma \ref{lem:3midedge}, $\pi^r$ uses an edge $e' = (x, y) \in F^*$ as its middle edge.
There are $O(f)$ possible choices of $e'$.

\item By Lemma \ref{lem:3scumdisjoint}, $\pi^\ell, \pi^r$ are internally vertex disjoint, so $u^* \ne x$ and $v^* \ne y$.
Thus, we may complete the meet by choosing a simple $u^* \leadsto x$ $2$-path and a simple $v^* \leadsto y$ $2$-path.
By Lemma \ref{lem:2pathlim}, there are $O(f)$ ways to choose each $2$-path, and thus $O(f^2)$ ways to complete the meet in total.
\end{itemize}

Putting it together, we have $O(nd \cdot f \cdot f^2)$ meets.

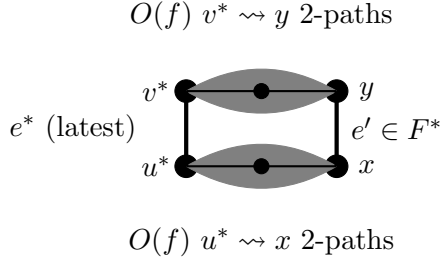
\begin{figure} [h] \centering
\begin{tikzpicture}
\draw [fill=black] (0, 0) circle [radius=0.15];
\draw [fill=black] (0, 1) circle [radius=0.15];

\draw [fill=black] (2, 0) circle [radius=0.15];
\draw [fill=black] (2, 1) circle [radius=0.15];

\draw [gray, fill=gray] (0, 1) to[bend left=30] (2, 1);
\draw [gray, fill=gray] (0, 0) to[bend right=30] (2, 0);
\draw [gray, fill=gray] (0, 1) to[bend right=30] (2, 1);
\draw [gray, fill=gray] (0, 0) to[bend left=30] (2, 0);

\draw [ultra thick] (0, 0) -- (0, 1);
\draw [ultra thick] (2, 0) -- (2, 1);

\draw [fill=black] (1, 1) circle [radius=0.1];
\draw [fill=black] (1, 0) circle [radius=0.1];
\draw [thick] (0, 0) -- (2, 0);
\draw [thick] (0, 1) -- (2, 1);

\node at (-0.4, 0) {$u^*$};
\node at (-0.4, 1) {$v^*$};
\node at (2.4, 0) {$x$};
\node at (2.4, 1) {$y$};
\node at (2.8, 0.5) {$e' \in F^*$};
\node at (1, 2) {$O(f)$ $v^* \leadsto y$ $2$-paths};
\node at (1, -1) {$O(f)$ $u^* \leadsto x$ $2$-paths};
\node at (-1.5, 0.5) {$e^*$ (latest)};

\end{tikzpicture}
\caption{\label{fig:3meetgen} The four choices in our generation of SCUM $3$-path meets lead to a bound of $O(ndf^3)$ on the total number of meets.  The common endpoints of the two paths in the meet occur as the middle nodes of the $v^* \leadsto y$ and $u^* \leadsto x$ endpoints that we select.}
\end{figure}

\section{Limiting SCUM Path Meets for General Odd \texorpdfstring{$k$}{k} \label{sec:limit}} 

In this section we prove Lemma~\ref{lem:oddmeetbound}, which as discussed implies Theorem \ref{thm:k-upper-main}.

We follow the same rough strategy as for the $k=3$ case.
To give intuition, let us briefly sketch counting of meets between \emph{internally vertex-disjoint} SCUM $k$-paths $\{\pi^\ell, \pi^r\}$.
We can generate such a meet by the following process (see Figure \ref{fig:kmeetcount}):
\begin{enumerate}
\item First choose an edge $e^* = (u^*, v^*)$ to act as the heaviest edge in the meet.
Suppose $e^* \in \pi^\ell$.

\item Similar to the $k=3$ case, we can argue that there exists an edge $e' \in \pi^r$ that is blocked with $e^*$.
For $k=3$ we specifically had that $e'$ occurs in the \emph{middle} position of $\pi^r$; for general $k$, we can only say that it occurs in position $2$ through $k-1$. 
Choose this edge $e'$ and its position in $\pi^r$.

\item Extend $e'$ into a subpath of $\pi^r$ that contains all but the first and last edge.
Let $x, y$ be the endpoints of this $(k-2)$-length subpath of $\pi^r$.

\item Complete the meet by choosing paths of length $(k+1)/2$ from $u^*$ to $x$ and from $v^*$ to $y$.
Limiting the number of such paths requires a technical lemma, but we can prove a reasonably good upper bound by  exploiting the structure of the double-blocking set.
\end{enumerate}

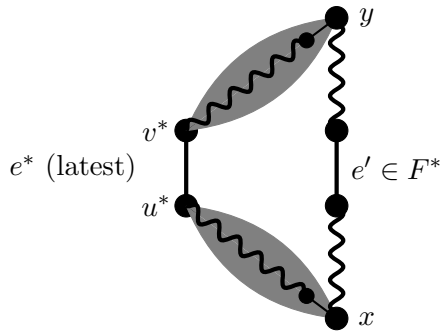
\begin{figure} [h] \centering
\begin{tikzpicture}
\draw [fill=black] (0, 0) circle [radius=0.15];
\draw [fill=black] (0, 1) circle [radius=0.15];

\draw [fill=black] (2, 0) circle [radius=0.15];
\draw [fill=black] (2, 1) circle [radius=0.15];

\draw [gray, fill=gray] (0, 1) to[bend left=30] (2, 2.5);
\draw [gray, fill=gray] (0, 0) to[bend right=30] (2, -1.5);
\draw [gray, fill=gray] (0, 1) to[bend right=30] (2, 2.5);
\draw [gray, fill=gray] (0, 0) to[bend left=30] (2, -1.5);

\draw [ultra thick] (0, 0) -- (0, 1);
\draw [ultra thick] (2, 0) -- (2, 1);

\draw [ultra thick, snake it] (2, 1) -- (2, 2.5);
\draw [ultra thick, snake it] (2, 0) -- (2, -1.5);

\draw [fill=black] (2, 2.5) circle [radius=0.15];
\draw [fill=black] (2, -1.5) circle [radius=0.15];

\draw [fill=black] (1.6, 2.2) circle [radius=0.1];
\draw [fill=black] (1.6, -1.2) circle [radius=0.1];

\draw [thick] (2, 2.5) -- (1.6, 2.2);
\draw [thick] (2, -1.5) -- (1.6, -1.2);

\draw [ultra thick, snake it] (0, 1) -- (1.6, 2.2);
\draw [ultra thick, snake it] (0, 0) -- (1.6, -1.2);

\node at (-0.4, 0) {$u^*$};
\node at (-0.4, 1) {$v^*$};
\node at (2.4, 2.5) {$y$};
\node at (2.4, -1.5) {$x$};
\node at (2.8, 0.5) {$e' \in F^*$};
\node at (-1.5, 0.5) {$e^*$ (latest)};

\end{tikzpicture}
\caption{\label{fig:kmeetcount} A sketch of an argument that counts the \emph{internally vertex-disjoint} SCUM $k$-path meets in $H$.}
\end{figure}

\paragraph{Challenges in extension to general $k$.} This sketched argument is the natural analog of our argument for $k=3$.
In fact, \emph{if it were true that all meets are internally vertex disjoint}, this argument would provide a range of free fault tolerance for both odd and even $k>3$.
Unfortunately, we cannot make such an assumption when $k>3$, and we also need to count meets that nontrivially coincide.
Counting meets that share prefixes/suffixes introduces a new term in our dispersion lemma, which harms our bounds for general odd $k$ relative to $k=3$, and which spoils our ability to prove free fault tolerance for any even $k$.

\subsection{Technical Lemmas on Path Structure} \label{sec:oddtech}

In the following, we say that a path $\pi$ in $H$ is \emph{blocked} if there is an edge pair $\{e_1, e_2\} \in B$ with $e_1, e_2 \in \pi$, and $\pi$ is \emph{double-blocked} if there are two such edge pairs in $B$.
Note that this differs slightly from the notion of a cycle being blocked, as we do not require that $e_1$ is the latest edge in $\pi$.

\begin{lemma} [Extends Lemma \ref{lem:3midedge}] \label{lem:faulthold}
Let $\pi_1, \pi_2$ be simple paths with the same endpoints $s, t$.
Suppose $e^* \in \pi_1$ is the last edge in $\pi_1 \cup \pi_2$ in the ordering, and let $F^*$ be the set of edges that are in a block with $e^*$ in $B$.
Then there exists an edge $e' \in F^* \cup \{e^*\}$ with $e' \in \pi_2$ if either of the following conditions hold:
\begin{itemize}
\item $\pi_1, \pi_2$ are both unblocked and have length $\le k$, or

\item $\pi_1, \pi_2$ are both un-double-blocked and have length $\le (k+1)/2$.
\end{itemize}
\end{lemma}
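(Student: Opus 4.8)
The plan is to mimic the structure of the proof of Lemma~\ref{lem:3midedge}, but leverage the double-blocking hypothesis to handle the weaker ``un-double-blocked'' case where a single block of the form $\{e^*, e\}$ is not by itself enough to force $e$ onto $\pi_2$. First I would note that since $e^*$ is the latest edge in $\pi_1 \cup \pi_2$, if $e^* \notin \pi_2$ then $\pi_1$ and $\pi_2$ form a closed walk, and — using that both are simple with common endpoints $s,t$ — this closed walk contains a cycle $C$ through $e^*$ with $e^*$ its latest edge. In the first bullet, $|C| \le |\pi_1| + |\pi_2| \le 2k$, so the double-blocking set gives a block $\{e^*, e\} \in B$ with $e \in C$. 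The edge $e$ lies on $\pi_1$ or on $\pi_2$; if $e \in \pi_1$ then $\{e^*, e\}$ witnesses that $\pi_1$ is blocked, contradicting the hypothesis that $\pi_1$ is unblocked, so $e \in \pi_2$, and we are done (taking $e' = e$). If instead $e^* \in \pi_2$ we may simply take $e' = e^*$.

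For the second bullet, $|C| \le |\pi_1| + |\pi_2| \le k+1$, so $C$ is double-blocked: there exist two distinct blocks $\{e^*, e_a\}, \{e^*, e_b\} \in B$ (both necessarily containing the latest edge $e^*$, by Definition~\ref{def:doubleblocking}), with $e_a, e_b \in C$ distinct. Now at most one of $e_a, e_b$ can lie on $\pi_1$: if both did, then $\pi_1$ would contain the two blocks $\{e^*, e_a\}$ and $\{e^*, e_b\}$ (note $e^* \in \pi_1$), making $\pi_1$ double-blocked, contrary to hypothesis. Hence at least one of $e_a, e_b$ lies on $\pi_2$, and that one serves as $e'$. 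Again the case $e^* \in \pi_2$ is handled by taking $e' = e^*$.

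The one subtlety I want to be careful about is the claim that the two simple paths $\pi_1, \pi_2$ sharing endpoints ``contain a cycle through $e^*$ with $e^*$ latest.'' Since the paths are simple but may share internal vertices, $\pi_1 \cup \pi_2$ need not itself be a single cycle; however, walking along $\pi_1$ starting from $s$, let $e^* = (a,b)$ with $a$ appearing before $b$, and let $a'$ be the last vertex of $\pi_1$ at or before $a$ that also lies on $\pi_2$, and $b'$ the first vertex of $\pi_1$ at or after $b$ lying on $\pi_2$; splicing the $\pi_1$-segment from $a'$ to $b'$ (which contains $e^*$) with the $\pi_2$-segment between $a'$ and $b'$ yields a genuine cycle $C$ containing $e^*$, with $|C| \le |\pi_1| + |\pi_2|$, and $e^*$ is still the latest edge of $C$ since $C \subseteq \pi_1 \cup \pi_2$. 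I expect this routing/splicing argument, rather than the combinatorial counting, to be the only place where a little care is needed; everything else follows directly from the two clauses of the double-blocking definition and a pigeonhole over which of the (one or two) blocks avoids $\pi_1$.
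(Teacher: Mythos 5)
Your argument is correct and follows essentially the same route as the paper: reduce to $e^* \notin \pi_2$, splice subpaths of $\pi_1$ and $\pi_2$ into a cycle $C$ through $e^*$ with $e^*$ latest, invoke the (double-)blocking guarantee on $C$, and use that $\pi_1$ is (un-double-)blocked to force the partner edge(s) onto $\pi_2$ by pigeonhole. Your explicit description of the splicing (choosing $a'$, $b'$ as the nearest $\pi_2$-vertices flanking $e^*$ along $\pi_1$) fills in a detail that the paper leaves implicit, but it is the same proof.
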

\begin{proof}
Assume $\pi_1, \pi_2$ are unblocked and have length $\le k$, and suppose $e^* \notin \pi_2$.
Then there is a subpath of $\pi_1$ and a subpath of $\pi_2$ that form a cycle $C$ with $|C| \le 2k$ edges, and latest edge $e^*$.
This cycle must be blocked by $B$; since $e^*$ is its latest edge, there is a block of the form $\{e^*, e'\} \in B$ with $e' \in C$.
Since $\pi_1$ is unblocked, we have $e' \notin \pi_1$, so $e' \in \pi_2$, completing the proof.

The case where $\pi_1, \pi_2$ are un-double-blocked and have length $\le (k+1)/2$ is essentially identical, noting that their subpaths form a cycle of length $\le k+1$ which is thus double-blocked by $B$.
\end{proof}

\begin{lemma} [Extends Lemma \ref{lem:2pathlim}] \label{lem:shortpathcount}
For any two nodes $s, t$ and any $1 \le L \le (k+1)/2$, there are at most
$O\left( f^{L - 1} \right)$
simple $s \leadsto t$ paths of length $L$ that are not double-blocked.
\end{lemma}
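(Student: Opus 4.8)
\textbf{Proof plan for Lemma~\ref{lem:shortpathcount}.}
The plan is to induct on $L$, using Lemma~\ref{lem:faulthold} as the engine at each step, in direct analogy with the proof of Lemma~\ref{lem:2pathlim} (which is exactly the base case $L=1$: there is at most $f+1 = O(f^0)\cdot$(constant) such path, and in fact for $L=1$ there is trivially at most one path, the edge $st$ itself). Fix $s,t$ and let $\Pi$ be the set of simple $s\leadsto t$ paths of length $L$ that are not double-blocked. Let $e^* = (a,b)$ be the latest edge in the ordering appearing in any path of $\Pi$, say $e^* \in \pi_0 \in \Pi$. For every other path $\pi \in \Pi$, the pair $\pi_0, \pi$ consists of two simple paths of length $L \le (k+1)/2$ that are (individually) not double-blocked, so Lemma~\ref{lem:faulthold} applies: there is an edge $e' \in F^* \cup \{e^*\}$ with $e' \in \pi$, where $F^*$ is the set of $\le f$ edges blocked with $e^*$. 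Thus every path of $\Pi$ passes through one of at most $f+1$ designated edges.

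Next I would charge each path to such a designated edge $e'=(x,y)$ and count how many paths of $\Pi$ can pass through a fixed $e'$. Removing $e'$ from such a path splits it into a simple $s\leadsto x$ subpath and a simple $y\leadsto t$ subpath (or the mirror orientation) whose lengths sum to $L-1$; each of these subpaths, being a sub-path of a path that is not double-blocked, is itself not double-blocked, and each has length $\le L-1 \le (k+1)/2 - 1 < (k+1)/2$, so the inductive hypothesis applies to each. Summing over the (constantly many, since $k$ is a constant) ways to split $L-1 = L_1 + L_2$ and over the two orientations, the number of paths through $e'$ is $O_k\big(\sum_{L_1+L_2 = L-1} f^{L_1 - 1}\cdot$(correction)$\cdot f^{L_2-1}\big)$. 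Here one has to be slightly careful: the inductive bound $O(f^{\ell-1})$ degenerates when $\ell = 0$ (an empty path, i.e. $s=x$), in which case there is exactly one subpath, which should be read as $O(f^0) = O(1)$ rather than $O(f^{-1})$. With that convention the product for a split $(L_1, L_2)$ is $O(f^{\max(L_1-1,0)})\cdot O(f^{\max(L_2-1,0)})$, which for $L_1 + L_2 = L-1 \ge 1$ is at most $O(f^{L-2})$ (the worst case being one part equal to $L-1$ and the other empty, giving $f^{L-2}$). Multiplying by the $O(f)$ choices of designated edge $e'$ yields $O_k(f \cdot f^{L-2}) = O_k(f^{L-1})$, completing the induction.

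The main obstacle I anticipate is bookkeeping the edge cases in the induction cleanly rather than any deep difficulty: specifically (i) handling $L_1$ or $L_2$ equal to $0$ so the exponent does not go negative, and (ii) making sure the ``latest edge'' $e^*$ argument is phrased so that $e^* \in \pi$ is also covered (handled by allowing $e' = e^*$ in the designated-edge set, as Lemma~\ref{lem:faulthold} already does). One should also double-check that a subpath of a simple path is simple and that a subpath of a path that is not double-blocked is not double-blocked --- both are immediate since any block contained in the subpath is also contained in the whole path. A minor point worth stating explicitly is that the constant hidden in $O_k$ is allowed to depend on $k$ because the number of length splittings $L-1 = L_1 + L_2$ and the recursion depth are both $O(k)$; since $k$ is fixed this only costs a $k^{O(1)}$ factor, consistent with the $O_k$ conventions in Theorem~\ref{thm:k-blocked-upper}.
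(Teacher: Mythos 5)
Your proposal is correct and follows essentially the same route as the paper's proof: strong induction on $L$, using Lemma~\ref{lem:faulthold} applied to the latest edge $e^*$ (among all paths in $\Pi_{s,t}$) to conclude that every path passes through one of the $O(f)$ edges in $F^* \cup \{e^*\}$, then splitting each path at its designated edge and applying the inductive hypothesis to the two resulting subpaths. Your explicit handling of the degenerate split where one subpath is empty matches the paper's separate treatment of positions $i=1$ and $i=L$ (yielding $O(f^{L-2})$ per designated edge, versus $O(f^{L-3})$ for internal positions), so both proofs land on $O(f^{L-1})$ after multiplying by the $O(f)$ choices of designated edge.
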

\begin{proof}
We prove this lemma by strong induction on $L$.
The base case $L=1$ is trivial, since there is at most $1$ edge between any two nodes.

For the inductive step: let $\Pi_{s, t}$ be the set of all simple $s \leadsto t$ paths of length $L$ that are not double-blocked, let $e^*$ be the latest edge in the ordering contained in any path in $\Pi_{s, t}$, and let $F^*$ be the edges that appear with $e^*$ in a block in $B$.
By the previous lemma, every path $\pi \in \Pi_{s, t}$ contains an edge in $F^* \cup \{e^*\}$.

Fix $(u, v) \in F^* \cup \{e^*\}$ and a position $i$, and let us count the number of paths in $\Pi_{s, t}$ that use $(u, v)$ in position $i$.
\begin{itemize}
\item If $i \in \{2, \dots, L-1\}$, then we may apply the inductive hypothesis twice to argue that we have $O(f^{i-2})$ $s \leadsto u$ paths of length $i-1$, and $O(f^{L-i-1})$ $v \leadsto t$ paths of length $L-i$, which gives $O(f^{L-3})$ paths that use $(u, v)$ as their $i^{th}$ edge.
\item If $i=1$ (and so $s=u$), then we may apply the inductive hypothesis once to argue that we have $O(f^{L-2})$ $v \leadsto t$ paths of length $L-1$.
\item If $i=L$ (and so $v=t$), then we may apply the inductive hypothesis once to argue that we have $O(f^{L-2})$ $s \leadsto u$ paths of length $L-1$.
\end{itemize}
Unioning over the $\le k(f+1)$ choices of $(u, v)$ and $i$, we count $O(f^{L-1})$ paths in $\Pi_{s, t}$.
\end{proof}
The following lemma is new to general $k$ and does not extend any previous lemmas.
\begin{lemma} \label{lem:doubleflexmeet}
Let $s, t$ be nodes, let $0 \le L_1 \le (k-1)/2$, and let $L_2 \ge 1$.
Then there are $O(f^{L_1} d^{L_2-1})$ pairs of simple paths $(\pi_1, \pi_2)$ in $H$ with the following properties:
\begin{itemize}
\item $\pi_1$ has length $L_1$ and start point $s$. $\pi_2$ has length $L_2$ and start point $t$.
$\pi_1, \pi_2$ have the same endpoint node
(that is: we are counting pairs of $s \leadsto x, t \leadsto x$ paths, over all possible choices of node $x$).

\item $\pi_1$ is chain-unblocked.
\item $\pi_2$ is not a subpath of $\pi_1$.
\end{itemize}
\end{lemma}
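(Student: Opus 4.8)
The plan is to count the pairs $(\pi_1, \pi_2)$ via a careful sequential generation process, using Lemma~\ref{lem:shortpathcount} to control the short chain-unblocked prefix $\pi_1$ and the degree bound together with the double-blocking structure to control $\pi_2$. First I would dispose of the case $L_1 = 0$: then $\pi_1$ is a single node, necessarily $s = x$, and we are simply counting simple $t \leadsto s$ paths $\pi_2$ of length $L_2$; since each such path is determined by an ordered choice of at most $L_2$ edges incident to the running endpoint, there are $O(d^{L_2-1})$ of them after fixing the start $t$ (more precisely, $\le d \cdot d \cdots$ with the first vertex pinned), matching the claimed $O(f^0 d^{L_2-1})$ bound up to the $O_k$ constant. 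For $L_1 \ge 1$, note $L_1 \le (k-1)/2 < (k+1)/2$, so every simple path of length $L_1$ is trivially un-double-blocked (a path on $L_1 \le (k-1)/2$ edges is too short to contain two disjoint-ish blocks forming cycles of length $\le k+1$ — more carefully, one invokes that double-blocking only constrains cycles, but the cleanest route is to just observe a simple path of length $\le (k-1)/2$ cannot host the cycles needed, or alternatively to appeal directly to Lemma~\ref{lem:shortpathcount} which already counts such paths).

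The main step: I would first choose the common endpoint $x$ and the path $\pi_1$ from $s$ to $x$. By Lemma~\ref{lem:shortpathcount} applied with $L = L_1 \le (k+1)/2$, there are at most $O(f^{L_1-1})$ simple $s \leadsto x$ paths of length $L_1$ that are not double-blocked; chain-unblocked implies unblocked implies not double-blocked, so this bounds our choices for $\pi_1$. Summing over the $n$ choices of $x$ this is too lossy, so instead I would interleave: generate $\pi_1$ edge by edge from $s$, giving $O(d)$ choices per edge but then restrict — actually the clean accounting is to say: the number of (endpoint $x$, path $\pi_1$) pairs where $\pi_1$ is a chain-unblocked simple $s\leadsto x$ path of length exactly $L_1$ is at most $n \cdot O(f^{L_1 - 1})$ by Lemma~\ref{lem:shortpathcount} summed over $x$; but that gives an extra factor $n$. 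The right move is therefore to absorb this differently: we do not fix $x$ first. Rather, observe that once the unordered pair is generated we may root the count at the last edge of $\pi_2$ or use the blocking structure. Let me instead organize the generation as: choose the latest edge $e^*$ among all edges of $\pi_1 \cup \pi_2$ — there are $O(nd)$ choices — then argue (as in Lemma~\ref{lem:faulthold}, since the two paths form a cycle of length $L_1 + L_2$, and when $L_1 + L_2 \le k+1$ this cycle is double-blocked, while for longer it is blocked) that $e^*$ lies in a block with some edge of the \emph{other} path, constraining the structure; extend greedily from $e^*$ using $O(d)$ choices per step for the $\pi_2$-side and $O(f)$ choices per step for the $\pi_1$-side (the $f$ savings coming from the chain-unblocked restriction on $\pi_1$ exactly as in the SCUM counting lemma, Lemma~\ref{lem:scumcount}). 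This yields $O(nd) \cdot O(d)^{L_2 - 1} \cdot O(f)^{L_1 - 1}$ roughly, but the leading $nd$ overcounts by a factor of $n$ relative to the target; the resolution is that $e^*$ does not need to be chosen freely — its endpoints are pinned once we commit to building outward from $s$ and $t$, so we should build $\pi_2$ forward from $t$ ($O(d)^{L_2-1}$ choices with $t$ fixed, no free $n$) and build $\pi_1$ forward from $s$ with the chain-unblocked restriction giving $O(f)^{L_1}$ choices, then check the endpoints coincide. The condition "$\pi_2$ is not a subpath of $\pi_1$" is a lower-order correction: the number of pairs where $\pi_2 \subseteq \pi_1$ forces $L_2 \le L_1$ and such pairs are counted within an $O(f^{L_1 + L_2 - 2})$-size set, which is dominated.

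The hard part, and the reason the lemma is nontrivial rather than a one-line product bound, is getting the right exponent on $f$ versus $d$ simultaneously: a naive "$O(d)$ per edge" bound gives $O(d^{L_1 + L_2})$ with no $f$, while a naive "$O(f)$ per edge via short-path counting" only applies up to length $(k+1)/2$ and gives $O(f^{L_1+L_2})$, which is worse when $L_2$ is large. The lemma wants the \emph{best of both}: pay $f$ for the constrained short path $\pi_1$ and pay $d$ for the unconstrained longer path $\pi_2$. So the real content is an argument that these two costs \emph{multiply} rather than interfere — i.e., that after fixing $\pi_1$ (which pins its endpoint $x$), the count of $\pi_2$ is genuinely $O(d^{L_2 - 1})$ with \emph{both} endpoints $t$ and $x$ fixed. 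Fixing both endpoints of a length-$L_2$ path at cost $d^{L_2-1}$ rather than $d^{L_2-2} \cdot (\text{something})$ requires care: one builds $\pi_2$ forward from $t$ with $O(d)$ choices for each of its first $L_2 - 1$ edges, and the final edge is then forced to land on $x$ (0 or 1 choices) — but this means summing over $x$ reintroduces a factor, so instead one builds $\pi_1$ first at cost $n^{0} \cdot O(f^{L_1})$? No — $\pi_1$ forward from fixed $s$ with chain-unblocked restriction is $O(f)^{L_1}$ choices total (each edge: $O(d)$ neighbors minus the $O(fk)$ chain-blocked ones, wait that's still $O(d)$ not $O(f)$). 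Here is the subtlety I'd need to nail down: the $O(f)$-per-edge savings for $\pi_1$ in the SCUM counting lemma comes from the \emph{reverse} direction (chain-unblocked constrains which \emph{earlier} edges block), so to get $O(f^{L_1-1})$ for $\pi_1$ I must route through Lemma~\ref{lem:shortpathcount} (which needs $L_1 \le (k+1)/2$, satisfied since $L_1 \le (k-1)/2$) applied \emph{after} fixing $x$ — incurring the $n$ from summing over $x$ — UNLESS I instead fix $x$ as an endpoint reached by $\pi_2$ from the fixed start $t$, paying $O(d^{L_2-1})$ to choose $\pi_2$ \emph{including} its endpoint $x$, and then $O(f^{L_1-1})$ to choose $\pi_1$ from $s$ to the now-fixed $x$ via Lemma~\ref{lem:shortpathcount}. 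That gives exactly $O(d^{L_2-1}) \cdot O(f^{L_1-1})$, which is a factor $f$ better than claimed — so the claimed $O(f^{L_1} d^{L_2-1})$ has slack, and the "$\pi_2$ not a subpath of $\pi_1$" and "$\pi_1$ chain-unblocked (not just un-double-blocked)" conditions are presumably there to handle edge cases like $L_1 = 0$ or to make the statement robust for its later use; I would verify that the $L_1 = 0$ case (no $f$-savings available, pure $d^{L_2-1}$) is the binding one and that everything else has room to spare.
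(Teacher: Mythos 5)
The proposal has a genuine gap, and the issue is precisely the step you flag as ``the subtlety I'd need to nail down.'' You want to charge $O(d^{L_2-1})$ for $\pi_2$ while also leaving the common endpoint $x$ fixed for an application of Lemma~\ref{lem:shortpathcount} to $\pi_1$. But these two wishes conflict: choosing a simple length-$L_2$ path $\pi_2$ from a fixed start $t$ with a free endpoint costs $O(d^{L_2})$, and the cheaper $O(d^{L_2-1})$ price is only available when \emph{both} endpoints of $\pi_2$ are fixed in advance. Your final tally of $O(d^{L_2-1})\cdot O(f^{L_1-1})$ is therefore not achievable by the process you describe; what the process actually gives is $O(d^{L_2})\cdot O(f^{L_1-1})$, which under the cleaning assumption $f=o(d)$ is a factor $d/f \gg 1$ \emph{worse} than the target $O(f^{L_1}d^{L_2-1})$. (Your conclusion that the lemma ``has slack'' is also wrong, and the lemma's exponent on $f$ is not accidental: take $L_1=1$, $L_2=2$. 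The target bound $O(fd)$ is correct and essentially tight by Lemma~\ref{lem:2pathlim}, whereas your ``improved'' $O(f^{L_1-1}d^{L_2-1})=O(d)$ would be false.)

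The missing idea is to stop $\pi_2$ one edge early and then \emph{fold its last edge into the $\pi_1$ count}. Concretely, in the $c=0$ case (no shared suffix between $\pi_1,\pi_2$): pay $O(d)^{L_2-1}$ to choose a simple length-$(L_2-1)$ path from $t$, reaching the second-to-last vertex $y$ of $\pi_2$. Now consider the path $q$ consisting of $\pi_1$ (from $s$ to $x$) followed by the final edge $(x,y)$ of $\pi_2$ traversed in reverse. This $q$ goes between \emph{two fixed endpoints} $s$ and $y$, has length $L_1+1 \le (k+1)/2$, and — here is exactly where chain-unblockedness earns its keep — is not double-blocked: because $\pi_1$ is chain-unblocked, no earlier edge of $\pi_1$ can form a block with the edge $(x,y)$ (which is incident on $x$, a node that strictly follows every non-final edge of $\pi_1$), nor with any later edge of $\pi_1$; the only possible block on $q$ is between the last edge of $\pi_1$ and $(x,y)$ themselves. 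Applying Lemma~\ref{lem:shortpathcount} with $L=L_1+1$ then gives $O(f^{L_1})$ choices for $q$, hence for $(\pi_1$ and the last edge of $\pi_2)$ jointly, yielding the claimed $O(d^{L_2-1}f^{L_1})$. Your version applies Lemma~\ref{lem:shortpathcount} to $\pi_1$ alone, which requires $x$ already fixed and therefore forces you to overpay for $\pi_2$. You also dismiss the ``$\pi_2$ not a subpath of $\pi_1$'' condition as a lower-order correction, but its real role is in the reduction for shared suffixes: the general case is handled by decomposing $(\pi_1,\pi_2)$ into a shared suffix of some length $c$ plus a $c=0$ pair of lengths $L_1-c$, $L_2-c$, and the non-subpath hypothesis is exactly what guarantees $L_2-c\ge1$ so that the $c=0$ argument applies. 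That reduction is absent from the proposal.
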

\begin{proof}
First, we acknowledge that it is possible that the paths in the meet $\pi_1, \pi_2$ coincide on a nontrivial suffix.
Let $c$ be the number of edges in the longest common suffix of $\pi_1, \pi_2$.
We begin the proof by assuming $c=0$; we will revisit the general case at the end.
A pair of paths $(\pi_1, \pi_2)$ as described in the lemma may be generated by the following process.
\begin{itemize}
\item Starting from $t$, choose any simple path of length $L_2-1$.
We know from Section~\ref{sec:cleaning} that without loss of generality we may assume that there are $\Theta(d)$ edges incident to each node, and so there are $O(d)^{L_2-1}$ possible choices for this path.

\item Since we have assumed $c=0$, we notice that $\pi_1$ plus the last edge of $\pi_2$ forms a simple path $q$ of length $L_1+1 \le (k+1)/2$.
The last pair of edges on $q$ (i.e., the last edge of $\pi_1$ and the last edge of $\pi_2$) could possibly appear as a block in $B$.
However, since $\pi_1$ is chain unblocked, no other pair of edges on $q$ can appear as a block.
Thus, $q$ is not double-blocked.
Applying Lemma \ref{lem:shortpathcount}, there are $O(f^{L_1})$ ways to complete the rest of the meet.
\end{itemize}
We count $O(d)^{L_2-1} \cdot O(f^{L_1})$ ways to generate this meet, which gives our claimed bound for $c=0$.
We then consider general $c$.
Notice that a pair of paths $(\pi_1, \pi_2)$ as described in the lemma, which coincide on their last $c$ edges, may be identified by (1) a pair of paths as described in the lemma of lengths $L_1-c, L_2-c$, which do not share a nontrivial suffix, and (2) an additional path of length $c$ from their shared endpoint, representing the common suffix.
Since we assume that $\pi_2$ is not a subpath of $\pi_1$, we note that $L_2 - c \ge 1$.
So we may apply the first part of the argument, and conclude that there are $O(f^{L_1-c} d^{L_2-c-1})$ ways to choose the path pair of lengths $L_1-c, L_2-c$, and then there are $O(d)^c$ ways to choose the common suffix.
This gives a stronger bound of $O(f^{L_1-c} d^{L_2-1})$ on the number of path pairs in the setting of general $c$.
\end{proof}

\subsection{Proof of Lemma~\ref{lem:oddmeetbound}} \label{sec:meetboundodd}

We count SCUM $k$-path meets $\{\pi^\ell, \pi^r\}$ by the following process.
\begin{itemize}
\item Choose an edge $e^* = (u^*, v^*)$ to act as the latest edge in $\pi^{\ell} \cup \pi^r$ in the ordering.
The latest edge in the ordering must be the middle ($m=(k+1)/2$) edge of either $\pi^\ell$ or $\pi^r$; we will assume without loss of generality that $e^*$ is the middle edge of $\pi^\ell$.
There are $O(nd)$ possible choices of $e^*$.

\item Let $F^*$ be the edges that are paired with $e^*$ in $B$.
By Lemma \ref{lem:faulthold}, $\pi^r$ contains an edge $e' = (u', v') \in F^* \cup \{e^*\}$.
Moreover, since $\pi^\ell$ is chain unblocked, $e'$ cannot be the first or last edge in $\pi^r$; rather, it occurs in position $2$ through $k-1$.
Select one of these positions for $e'$ along $\pi^r$.
There are $\le (f+1)(k-2)$ possible choices of $e'$ and its position in $\pi^r$.

\item We next select prefixes for $\pi^\ell, \pi^r$; that is, we determine the parts of the paths respectively preceding nodes $u^*, x$.
We can view the prefix of $\pi^\ell$ as an oriented SCU path of length $m-1$, and the prefix of $\pi^r$ as a simple path of length $i-1 \ge 1$.
There are two cases:
\begin{itemize}
\item If the suffix of $\pi^r$ is not a subpath of the suffix of $\pi^\ell$, then the conditions of Lemma \ref{lem:doubleflexmeet} hold.
Applying this lemma, we have $O(f^{m-1}d^{i-2})$ choices of prefix pairs.

\item If the prefix of $\pi^r$ is a subpath of the prefix of $\pi^\ell$, then it is fully determined by a choice of prefix of $\pi^\ell$.
There are $O(d)^{m-1}$ possible choices of prefix.
\end{itemize}

\item Finally, we select suffixes for $\pi^\ell, \pi^r$.
Exactly like the prefixes, we can view the suffix of $\pi^\ell$ as an oriented SCU path of length $m-1$, and the suffix of $\pi^r$ as a simple path of length $k-i \ge 1$.
We have the same two cases in our counting:
\begin{itemize}
\item In the case where the suffix of $\pi^r$ is not a subpath of the suffix of $\pi^\ell$, then by Lemma \ref{lem:doubleflexmeet} we have $O(f^{m-1}d^{k-i-1})$ choices of suffix pairs.

\item In the case where the suffix of $\pi^r$ is a subpath of the suffix of $\pi^\ell$, then we have $O(d)^{m-1}$ choices of suffix.
\end{itemize}
\end{itemize}

We now put the parts together.
There are $O(ndf)$ ways to make choices in the first two steps.
If neither the prefix nor suffix of $\pi^r$ is a subpath of the prefix/suffix of $\pi^\ell$, then the number of ways to choose prefixes and suffixes is
$$ O\left(f^{m-1}d^{i-2}\right) \cdot O\left(f^{m-1} d^{k-i-1}\right) = O\left(f^{2m-2} d^{k-3}\right).$$
If the prefix of $\pi^r$ is a subpath of the prefix of $\pi^\ell$, but the suffix is not, then the number of choices is
$$ O\left(d^{m-1}\right) \cdot O\left(f^{m-1} d^{k-i-1}\right) = O\left(f^{m-1} d^{k-(i-m)-2}\right) = O\left(f^{m-1} d^{k-2}\right),$$
where the last equality is because we must have $i \le m$ since the prefix of $\pi^r$ is a subpath of the prefix of $\pi^\ell$.
The counting is symmetric in the case where the suffix of $\pi^r$ is a subpath of the suffix of $\pi^\ell$, and leads to the same bound.
Finally, we do not need to consider the case where both the prefix and suffix of $\pi^r$ is a subpath of the prefix/suffix of $\pi^\ell$, as this would imply $\pi^\ell = \pi^r$ but the paths in a meet must be distinct.

The claimed bound on meets in this lemma now follows by multiplying our prefix/suffix bounds in each case by $ndf$, and substituting $m = (k+1)/2$.

\section{Polynomial-Time Version} \label{sec:polytime}
Algorithm~\ref{alg:greedy} is actually slightly underspecified, since we do not describe how to determine whether there is some $F \subseteq E$ with $|F| \leq f$ such that $\dist_{H^F}(u,v) > (2k-1) \cdot w(u,v)$.  The obvious approach simply tries all possible $F$.  This is correct, but takes time at least $\Omega\left(\binom{m}{f}\right)$, which is not polynomial when $f$ is not constant.  Hence while Algorithm~\ref{alg:greedy} is sufficient to prove the structural results we claim, it is not actually \emph{efficient} as an algorithm.

This parallels the situation with fault-tolerant spanners, where the first truly tight bounds on the size used a similar greedy algorithm that did not run in polynomial time~\cite{BP19}.  To overcome this, Dinitz and Robelle \cite{DR20} showed that slightly suboptimal bounds (by a factor of $k$) could be achieved by \emph{approximating} the ``if'' condition rather than actually solving it.  We now show that a similar result can be obtained for edge-fault-tolerant emulators: we can modify Algorithm~\ref{alg:greedy} to run in polynomial time at the cost of only $O(k)$ in the achieved size.

\subsection{Length-Bounded Double-Cut}
The problem that~\cite{DR20} had to approximate to give a polynomial-time check in the ``if'' condition was known as \emph{Length-Bounded Cut}.  Since fault-tolerant emulators act quite differently than fault-tolerant spanners, this is unfortunately not the appropriate problem for us to approximate.  In particular, for spanners the existence of a small length-bounded cut implied that it was necessary to add the edge to the solution, but for emulators this is not true.  So we need to find a different condition to check.  It turns out that, for essentially the same reason that our emulators had small \emph{double}-blocking sets, the right problem for us to solve is the following.
\begin{definition}
In the \emph{Length Bounded Double Cut} problem (LBDC), we are given an unweighted graph $G = (V, E)$, two special nodes $u,v \in V$, and two integers $t > \ell$.  A feasible solution is a set $F \subseteq E$ such that the following two properties hold:
\begin{enumerate}
    \item (Long Paths) Every path $P$ from $u$ to $v$ with at most $t$ edges has $|P \cap F| \geq 1$, and
    \item (Short Paths) Every path $P$ from $u$ to $v$ with at least $2$ and at most $\ell$ edges has $|P \cap F| \geq 2$.
\end{enumerate}
The objective is to find a feasible $F$ minimizing $|F|$.
\end{definition}

We now want to show that LBDC can be approximated to within $O(t)$.  First, we prove a useful lemma about testing.

\begin{lemma} \label{lem:LBDC-test}
There is a polynomial-time algorithm which, given an instance $(G = (V, E), u, v, t, \ell)$ of LBDC and $F \subseteq E$, returns YES if $F$ is a feasible solution and otherwise returns a path $P$ which violates either the Long Paths or Short Paths condition.
\end{lemma}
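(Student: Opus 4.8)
The plan is to verify feasibility by checking each of the two conditions separately with standard length-bounded shortest-path computations, and to extract an explicit violating path whenever a condition fails. I would work in the graph $G \setminus F$ (equivalently, treat edges of $F$ as deleted for condition checking, since a path avoiding $F$ entirely is exactly a path $P$ with $|P \cap F| = 0$; a more careful treatment is needed for the Short Paths condition, which I address below).

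\textbf{Checking the Long Paths condition.} First I would compute the minimum number of edges on any $u$-$v$ path in $G \setminus F$. This is just a BFS from $u$ in $G \setminus F$: let $L$ be the resulting distance to $v$ (or $\infty$ if $v$ is unreachable). If $L \le t$, then the BFS tree yields an explicit $u$-$v$ path $P$ with at most $t$ edges and $P \cap F = \emptyset$, so $|P \cap F| = 0 < 1$, violating Long Paths; return this $P$. If $L > t$, then every $u$-$v$ path with at most $t$ edges must use at least one edge of $F$, so the Long Paths condition holds.

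\textbf{Checking the Short Paths condition.} This is the more delicate part, and I expect it to be the main obstacle, because here we are not simply forbidding edges of $F$: we must certify that \emph{every} $u$-$v$ path with between $2$ and $\ell$ edges contains at least two edges of $F$. Equivalently, the condition fails if and only if there is a $u$-$v$ path $P$ with $2 \le |P| \le \ell$ and $|P \cap F| \le 1$. To detect this, I would iterate over the two sub-cases. For $|P \cap F| = 0$: run BFS from $u$ in $G \setminus F$ and check whether $v$ is reachable within $\ell$ edges by a path of length at least $2$; if the BFS distance $L$ satisfies $2 \le L \le \ell$ we get a violating path directly, and if $L = 1$ (i.e.\ the edge $uv$ itself survives, length $1$, which is excluded) we must look for a length-$\ge 2$ path avoiding $F$, which is again a bounded-length reachability question solvable by a layered/BFS-type search that tracks hop count. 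For $|P \cap F| = 1$: for each edge $e = (a,b) \in F$, check whether there is a $u$-$v$ path using $e$ exactly once, with total length between $2$ and $\ell$, and otherwise avoiding $F$; this decomposes into finding a $u \leadsto a$ path of some length $\ell_1 \ge 0$ in $G \setminus F$ and a $b \leadsto v$ path of length $\ell_2 \ge 0$ in $G \setminus F$ (or the reverse orientation of $e$) with $\ell_1 + 1 + \ell_2 \in [2,\ell]$. Since shortest hop-distances $d_{G\setminus F}(u, a)$, $d_{G \setminus F}(u,b)$, $d_{G\setminus F}(v,a)$, $d_{G\setminus F}(v,b)$ are all computable by a constant number of BFS runs, one checks in $O(1)$ time per edge $e$ whether such lengths exist (using the minimum achievable $\ell_1 + \ell_2$; one must be slightly careful that the concatenation is a genuine path, i.e.\ does not force a repeated vertex or re-use of $F$-edges — but for extracting a \emph{witness} we only need that \emph{some} such walk exists, and a shortest such walk is automatically a simple path). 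If any sub-case produces a path, return it as the violating path; otherwise return YES.

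\textbf{Correctness and running time.} Correctness follows because the two conditions of LBDC are exactly the two things checked, and in each failure case the algorithm has produced an explicit path witnessing the violation, while in the success case we have verified both conditions hold for \emph{all} relevant paths (the Long Paths check is a single shortest-path computation, and the Short Paths check exhausts all ways a path could have $0$ or $1$ edges of $F$). The running time is polynomial: we perform $O(|F|) = O(|E|)$ BFS computations, each taking $O(|V| + |E|)$ time, for a total of $O(|E|(|V|+|E|))$, plus $O(|E|)$ work to combine distances. The one subtlety to handle carefully in the write-up is ensuring the concatenated witnesses are genuinely simple paths rather than walks; this is handled by always taking \emph{shortest} hop-paths in each piece and arguing that a minimum-hop $u$-$v$ walk with the prescribed structure is necessarily simple, or alternatively by observing that for the purposes of the LBDC conditions it suffices to reason about walks since any violating walk contains a violating simple subpath of no greater length and no more $F$-edges.
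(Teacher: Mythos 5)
Your proposal is correct and takes essentially the same approach as the paper: BFS in $G \setminus F$ for the Long Paths condition, followed by per-edge shortest-path computations (four BFS distances per $e \in F$) to detect Short Paths violations. The simplicity subtlety you flag is real but is already resolved by observing (as the paper does) that once the Long Paths check passes, there is no short $u$-$v$ path in $G \setminus F$ at all, so shortcutting the concatenated witness cannot discard its unique $F$-edge without contradicting that check.
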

\begin{proof}
We can test whether the long path condition is satisfied by running BFS from $u$ to $v$ in $G \setminus F$ and checking whether there is a path $P$ of length at most $t$.  If there is such a path then $F$ is not feasible and we return $P$, but if no such path exists then $F$ hits all long paths as required.

We then test the short path condition, which is only slightly more complicated.  We need to determine whether there is any path $P$ from $u$ to $v$ with at most $\ell$ edges and which uses at most one edge of $F$.  There cannot be any such path which has no edges of $F$, since then it would have length at most $\ell < t$ and so would have been found in the long path part of the algorithm and already returned.  So without loss of generality, any such $P$ has exactly one edge in $F$.  To find such a $P$, we can just test for every edge in $F$ whether it is part of such a path. 

Slightly more formally, for every edge $e = (x,y) \in F$, we compute four shortest paths in $G \setminus F$: we let $P_{ux}$ be the shortest $u-x$ path, $P_{yv}$ be the shortest $y-v$ path, $P_{uy}$ be the shortest $u-y$ path, and $P_{xv}$ be the shortest $x-v$ path.  Obviously all of these can be computed in polynomial time.  Then we test whether $|P_{ux}| + 1 + |P_{yv}| \leq \ell$, in which case we have found a violated path to return ($P_{ux}$ followed by $(x,y)$ followed by $P_{yv}$).  Similarly, we test whether $|P_{uy}| + 1 + |P_{xv}| \leq \ell$, in which case we return the appropriate path.  If we do not return any path for ant $e \in F$, then no path violating the short paths condition exists, so we can return YES.
\end{proof}

We can now use this to approximate LBDC:

\begin{theorem} \label{thm:LBDC}
Given an instance $(G = (V, E), u, v, t, \ell)$, there is a polynomial-time $t$-approximation algorithm for LBDC.
\end{theorem}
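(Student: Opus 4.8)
The plan is to round the natural LP relaxation of LBDC, with one twist: I add the box constraints $x_e \le 1$, which are valid since LBDC solutions are $0/1$-vectors, but which turn out to be essential for handling the ``hit twice'' requirement. Concretely, consider the LP that minimizes $\sum_{e \in E} x_e$ subject to $\sum_{e \in P} x_e \ge 1$ for every $u$--$v$ path $P$ with at most $t$ edges, $\sum_{e \in P} x_e \ge 2$ for every $u$--$v$ path $P$ with between $2$ and $\ell$ edges, and $0 \le x_e \le 1$ for all $e$. The indicator vector of any feasible LBDC solution is LP-feasible with the same objective, so the LP optimum is at most the LBDC optimum; and $x \equiv 1$ is feasible, so the LP is bounded.

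First I would argue the LP is solvable in polynomial time via the ellipsoid method, which only needs a polynomial-time separation oracle. Given a candidate point $x$: to separate the ``long path'' constraints, compute by dynamic programming (over hop count $0, 1, \dots, t$) the minimum $x$-weight of a $u$--$v$ walk using at most $t$ edges; shortcutting a walk to a simple path only decreases both its $x$-weight and its hop count, so this minimum is $< 1$ iff some simple path of length $\le t$ is violated, and such a path can be read off. To separate the ``short path'' constraints, note that a \emph{simple} $u$--$v$ path of length $\ge 2$ never uses the edge $(u,v)$ (if present), so these paths are exactly the simple $u$--$v$ paths of length $\le \ell$ in $G \setminus \{(u,v)\}$; I would run the same hop-bounded shortest-walk dynamic program in $G \setminus \{(u,v)\}$ and check whether the minimum is $< 2$. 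This is essentially the fractional analogue of Lemma~\ref{lem:LBDC-test}.

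Let $x^*$ be an optimal LP solution, and set $F := \{e : x^*_e \ge 1/t\}$. For the cost, each $e \in F$ satisfies $1 \le t\,x^*_e$, so $|F| \le t \sum_e x^*_e \le t \cdot \mathrm{OPT}$, giving the claimed $t$-approximation. For feasibility: if a path $P$ with $|P| \le t$ contained no edge of $F$, then $\sum_{e \in P} x^*_e < |P|/t \le 1$, contradicting the long-path constraint, so every long path is hit at least once. If a path $P$ with $2 \le |P| \le \ell$ contained at most one edge of $F$, then bounding that (at most one) edge by $x^*_e \le 1$ and each of the remaining $\le \ell - 1$ edges by $x^*_e < 1/t$ gives $\sum_{e \in P} x^*_e < 1 + (\ell-1)/t < 2$, where the last inequality uses $t > \ell$; this contradicts the short-path constraint, so every short path is hit at least twice. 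Hence $F$ is feasible, and the theorem follows.

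The step I expect to be the crux is pinning down the right relaxation: without the box constraints $x_e \le 1$, threshold rounding fails on short paths, since a single edge with a large fractional value can satisfy a ``$\ge 2$'' constraint while contributing only one edge to $F$, whereas the box constraint makes the counting argument go through cleanly. A secondary subtlety is the separation oracle for the length-$\ge 2$ short-path family, which superficially resembles a hard bounded-length simple-path problem but is tamed by the observation that deleting the edge $(u,v)$ reduces it to an ordinary hop-bounded shortest-walk computation.
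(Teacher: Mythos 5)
Your LP rounding argument is correct, and it is a genuinely different route from the paper's. The paper avoids linear programming entirely: it uses the polynomial-time feasibility tester (Lemma~\ref{lem:LBDC-test}) inside an iterative procedure that starts from $F = \emptyset$, repeatedly finds a violated path of length at most $t$, adds all of its edges to $F$, and terminates when no violated path remains. The approximation ratio follows from a direct combinatorial charging argument: each iteration's violated path must contain at least one edge of the optimum $F^*$ not yet in $F$ (because $F^*$ hits long paths once and short paths twice, while the current $F$ fails to), so there are at most $|F^*|$ iterations, each contributing at most $t$ edges. Your proof instead sets up the natural covering LP with the extra $x_e \le 1$ box constraints, solves it via ellipsoid with a hop-bounded shortest-walk separation oracle (your observation that simple $u$--$v$ paths of length $\ge 2$ live entirely in $G \setminus \{(u,v)\}$ is the right way to handle the ``$\ge 2$'' side), and threshold-rounds at $1/t$; the box constraint is exactly what makes the two-edge guarantee on short paths go through, as you note. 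Both arguments are sound and give the same $t$ guarantee. The paper's is simpler and combinatorial (no ellipsoid, no LP solver), which is likely why it was preferred; your LP-based version has the usual advantage of exposing the fractional relaxation and integrality gap structure, which could be useful if one wanted to tighten the approximation ratio or generalize to related covering variants.
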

\begin{proof}
Our algorithm is simple to state now that we have Lemma~\ref{lem:LBDC-test}.  Initially we set $F = \emptyset$. First we use Lemma~\ref{lem:LBDC-test} to test whether $F$ is feasible.  If it is feasible then we return it.  Otherwise, we take the path $P$ returned by Lemma~\ref{lem:LBDC-test} and add all of its edges to $F$.  We repeat until $F$ is feasible.

It is trivial to see that this algorithm returns a feasible solution: if $F$ is not feasible then it will not yet return $F$.  It is also trivial to see that this runs in polynomial time, since in every iteration it adds at least one new edge to $F$.  So it remains only to bound its approximation ratio.

Let $F^*$ be the optimal solution.  We claim that in every iteration of the algorithm, the path $P$ that it finds (and adds to $F$) has at least one edge in $F^*$ that is not already in $F$.  To see this, note that $P$ broke either the long path or the short path condition for the current algorithmic solution $F$ (since the algorithm of Lemma~\ref{lem:LBDC-test} returned it).  If it broke the long path condition, then it has length at most $t$ and did not intersect with $F$.  But since $F^*$ is feasible we know that $|P \cap F^*| \geq 1$, so $P$ has at least one edge in $F^*$ that is not in $F$.  Similarly, if it broke the short path condition, then it has length at most $\ell$ and has $|P \cap F| \leq 1$.  But since $F^*$ is feasible we know that $|P \cap F^*| \geq 2$, and hence there is at least one edge of $F^* \setminus F$ in $P$.

Thus in every iteration we add at least one edge of $F^*$ to $F$ that was not already in $F$.  Thus there are at most $|F^*|$ iterations.  In each iteration we add every edge of a path of length at most $t$, and hence $|F| \leq t \cdot |F^*|$.  Thus the algorithm is a $t$-approximation.  
\end{proof}

\subsection{Polynomial-Time EFT Emulators}
We can now design and analyze our algorithm.  It is almost identical to Algorithm~\ref{alg:greedy}, but using Theorem~\ref{thm:LBDC} in the if condition.  

\begin{algorithm} [ht]
\textbf{Input:} Graph $G = (V, E, w)$, positive integers $f, k$\;
~\\

Let $H \gets (V, \emptyset)$ be the initially-empty emulator\;
\ForEach{edge $(u, v) \in E$ in order of nondecreasing weight $w(u, v)$}{
    Let $F$ be the edge set returned by the algorithm of Theorem~\ref{thm:LBDC} when run on input $(H, u,v, 2k-1, k)$ (note that this interprets $H$ as an unweighted graph)\;
    \If{$|F| \le (2k-1)f$}{
        Add $(u, v)$ to $H$\;
    }
}
\Return{$H$};
\caption{Polynomial-time Algorithm for $f$-EFT $(2k-1)$-emulators}
\label{alg:poly}
\end{algorithm}

This algorithm obviously takes polynomial time.  So we need to prove that the emulator $H$ it returns is a feasible $f$-EFT $(2k-1)$-emulator, and that it has the correct size.  This is quite similar to our original proof that the non-polytime algorithm gave a solution with a small double-blocking set (Lemma~\ref{lem:doubleblocking}), but for Algorithm~\ref{alg:poly} this argument shows up in the correctness proof.

\begin{lemma}
$H$ is an $f$-EFT $(2k-1)$-emulator of $G$.
\end{lemma}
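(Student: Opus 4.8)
The plan is to follow the correctness proof of Algorithm~\ref{alg:greedy}, but to bridge the gap between the \emph{distance} condition in that proof and the \emph{combinatorial} LBDC condition checked by Algorithm~\ref{alg:poly} using essentially the same reasoning as in Lemma~\ref{lem:doubleblocking}. As before, fix an arbitrary $F\subseteq E$ with $|F|\le f$; by the same standard observation used for Algorithm~\ref{alg:greedy}, it suffices to prove $\dist_{H^F}(u,v)\le (2k-1)\cdot w(u,v)$ for every edge $(u,v)\in G\setminus F$, since concatenating along a shortest path of $G\setminus F$ then yields the full emulator guarantee. If $(u,v)\in H$ this is immediate because $(u,v)$ is reweighted to $\dist_{G\setminus F}(u,v)\le w(u,v)$ in $H^F$, so the whole argument concentrates on the case $(u,v)\notin H$.

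First I would set up notation: let $H'$ denote the state of the emulator at the moment the main loop considered $(u,v)$, so $H'$ is a subgraph of $H$ and every edge of $H'$ was processed before $(u,v)$ and hence has weight at most $w(u,v)$. Since $(u,v)\notin H$, the set $F'$ that Theorem~\ref{thm:LBDC} returned on the instance $(H',u,v,2k-1,k)$ satisfies $|F'|>(2k-1)f$; because that algorithm is a $(2k-1)$-approximation, the optimal LBDC value on this instance exceeds $f$ — that is, \emph{no} set of at most $f$ edges is LBDC-feasible for $(H',u,v,2k-1,k)$. I would then argue by contradiction: suppose some $F$ with $|F|\le f$ has $\dist_{H^F}(u,v)>(2k-1)w(u,v)$; since $H'\subseteq H$ we also get $\dist_{(H')^F}(u,v)>(2k-1)w(u,v)$, and I claim this forces $F$ to be an LBDC-feasible set of size at most $f$ for $(H',u,v,2k-1,k)$, which is the contradiction.

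To verify LBDC-feasibility of $F$, check the two conditions. For Long Paths: if a $u\leadsto v$ path $P$ in $H'$ had $|P|\le 2k-1$ and $P\cap F=\emptyset$, then in $(H')^F$ every edge of $P$ would have weight at most $w(u,v)$ (for $e\in P$ we have $e\notin F$, so $\dist_{G\setminus F}(e)\le w_G(e)\le w(u,v)$), giving $\dist_{(H')^F}(u,v)\le(2k-1)w(u,v)$, a contradiction; hence $|P\cap F|\ge1$. For Short Paths, consider a $u\leadsto v$ path $P$ in $H'$ with $2\le|P|\le k$ and $|P\cap F|\le1$. The subcase $P\cap F=\emptyset$ again gives $\dist_{(H')^F}(u,v)\le k\cdot w(u,v)\le(2k-1)w(u,v)$. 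The subcase where $P$ contains exactly one faulty edge $(x,y)$ is handled exactly as in the second bullet of Lemma~\ref{lem:doubleblocking}: since $(u,v)\notin H'$, the cycle $C:=P\cup\{(u,v)\}$ has $|C|\le k+1$, and walking around $C$ avoiding $(x,y)$ exhibits an $x\leadsto y$ path in $G\setminus F$ of total weight at most $k\cdot w(u,v)$ (it uses the $\le k-1$ non-faulty edges of $P$ plus $(u,v)$, each of weight $\le w(u,v)$), so in $(H')^F$ the reweighted edge $(x,y)$ has weight at most $k\cdot w(u,v)$; then $P$ itself certifies $\dist_{(H')^F}(u,v)\le(k-1)w(u,v)+k\cdot w(u,v)=(2k-1)w(u,v)$, again a contradiction. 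Thus $|P\cap F|\ge2$, $F$ is LBDC-feasible, and we are done.

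I expect the single-faulty-edge subcase of the Short Paths check to be the main (and really the only) obstacle: it is the one place that relies on the automatic edge reweighting of the emulator model, and it is exactly why the LBDC instance uses the threshold $\ell=k$ — the bound $k\cdot w(u,v)$ on the reweighted faulty edge plus $(k-1)w(u,v)$ from the remainder of $P$ is precisely $(2k-1)w(u,v)$, so the argument would fail with any larger value of $\ell$. Everything else (the reduction to edges of $G\setminus F$, the $(u,v)\in H$ case, the monotonicity $H'\subseteq H$, and the bookkeeping of the approximation factor) is routine and parallels the proofs already given for Algorithm~\ref{alg:greedy} and Lemma~\ref{lem:doubleblocking}.
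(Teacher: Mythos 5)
Your proof is correct and mirrors the paper's argument: fix the state $H'$ of the emulator when $(u,v)$ was considered, show (by the long-paths and short-paths checks, with the reweighted-edge trick for the single-fault subcase) that any $F$ with $|F|\le f$ and $\dist_{(H')^F}(u,v)>(2k-1)w(u,v)$ would be LBDC-feasible, and use the $(2k-1)$-approximation guarantee to derive a contradiction. The only cosmetic difference is that you explicitly invoke monotonicity ($H'\subseteq H$ implies $\dist_{(H')^F}\ge\dist_{H^F}$), which the paper handles a bit informally by ``fixing $H$ at this point in time,'' but the content and structure are the same.
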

\begin{proof}
Let $F$ be an arbitrary set of $|F| \leq f$ failing edges.  We need to show that $\dist_{H^F}(u,v) \leq (2k-1) \cdot w(u,v)$ for all edges $(u,v) \in E \setminus F$.  There are two cases.
\begin{itemize}
    \item First, if $(u,v) \in H \setminus F$, then we immediately have $\dist_{H^F}(u,v) = \dist_{G \setminus F}(u,v) \leq w(u,v)$, due to the reweighting of $(u,v)$ in $H$.
    \item Second, suppose $(u,v) \not\in H \setminus F$.  Since $(u,v) \in E \setminus F$, this means that $(u,v) \not\in H$.  Therefore we must have considered the edge $(u,v)$ at some point in the algorithm, and then decided not to add it to $H$.  Fix $H$ at this point in time, and suppose for contradiction that $\dist_{H^F}(u,v) > (2k-1) \cdot w(u,v)$.  
    
    We claim that $F$ must be a feasible solution to the LBDC problem on input $(H, u, v, 2k-1, k)$.   To see this, first consider the long paths condition.  If there was a path $P$ from $u$ to $v$ with at most $2k-1$ edges with $P \cap F = \emptyset$, then all of these edges still exist in $H^F$, and since we consider edges in greedy order they all have length at most $w(u,v)$, and hence $\dist_{H^F}(u,v) \leq (2k-1) \cdot w(u,v)$.  This contradicts our assumption on $F$, and hence no such path exists.  On the other hand, suppose that there is a path $P$ with at most $k$ edges from $u$ to $v$ with $|P \cap F| = 1$.  Let $e' = (x,y)$ be the unique edge in $P \cap F$.  Then since all other edges of $P$ are not in $F$, and $(u,v) \not \in F$, we know by the definition of reweighting in $H^F$ that 
    \begin{align*}
        \dist_{H^F}(x,y) &\leq w(u,v) + \sum_{e'' \neq e' \in P} w(e'') \leq k \cdot w(u,v)
    \end{align*}
    This implies that
    \begin{align*}
        \dist_{H^F}(u,v) &\leq \dist_{H^F}(x,y) + \sum_{e'' \neq e' \in P} w(e'') \leq (2k-1) \cdot w(u,v),
    \end{align*}
    which contradicts our assumption on $F$.  Thus no such short path exists, and so $F$ is a feasible LBDC solution.
    
    Since $F$ is a feasible LBDC solution on in put $(H,u,v,2k-1,k)$ and Theorem~\ref{thm:LBDC} gives a $(2k-1)$-approximation for this instance, we know that the set $\hat F$ it returned has $|\hat F| \leq (2k-1) |F| \leq (2k-1)f$.  But this means that the algorithm would have added $(u,v)$ to $H$.  This is a contradiction, and hence $\dist_{H^F}(u,v) \leq (2k-1) \cdot w(u,v)$. \qedhere
\end{itemize}
\end{proof}

Now we show that emulator $H$ returned by Algorithm~\ref{alg:poly} has a small double-blocking set.

\begin{lemma}
The emulator $H$ returned by Algorithm~\ref{alg:poly} has a $2k$ double-blocking set $B$ of size $|B| \leq (2k-1) f |E(H)|$ (with respect to the order that the edges are added to $H$ in the algorithm).
\end{lemma}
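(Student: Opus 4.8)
The plan is to mirror the proof of Lemma~\ref{lem:doubleblocking}, but to read the two halves of the double-blocking property off the two feasibility conditions of the LBDC instance rather than off an explicitly chosen fault set. Concretely: each time Algorithm~\ref{alg:poly} adds an edge $e = (u,v)$ to $H$, it does so because the set $F_e$ returned by the Theorem~\ref{thm:LBDC} algorithm on input $(H, u, v, 2k-1, k)$ satisfies $|F_e| \le (2k-1)f$. Since the algorithm of Theorem~\ref{thm:LBDC} always returns a \emph{feasible} LBDC solution, $F_e$ is in particular a feasible solution to that instance, and $F_e \subseteq E(H)$ at the time $e$ is considered (so $e \notin F_e$). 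I would then define
$$B := \{\{e, e'\} : e \in E(H),\ e' \in F_e\},$$
which gives $|B| \le \sum_{e \in E(H)} |F_e| \le (2k-1) f\, |E(H)|$ immediately. The remaining work is to check that $B$ satisfies Definition~\ref{def:doubleblocking}.

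For the single-blocking condition, let $C$ be a cycle in $H$ with $|C| \le 2k$ and let $e = (u,v)$ be its latest edge in the insertion ordering. Because that ordering \emph{is} the insertion order, at the moment $e$ was considered all other edges of $C$ were already present in $H$, so $P := C \setminus \{e\}$ is a $u \leadsto v$ path of at most $2k-1$ edges in the graph on which LBDC was run. Feasibility of $F_e$ with parameter $t = 2k-1$ (the Long Paths condition) forces $|P \cap F_e| \ge 1$, yielding some $e' \in F_e \cap C$ with $e' \ne e$; then $\{e, e'\} \in B$ and it blocks $C$ since $e$ is the latest edge of $C$.

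For the double-blocking condition, take a cycle $C$ with $3 \le |C| \le k+1$ and again let $e = (u,v)$ be its latest edge; now $P := C \setminus \{e\}$ is a $u \leadsto v$ path with between $2$ and $k$ edges, so the Short Paths condition of LBDC feasibility (with $\ell = k$) gives $|P \cap F_e| \ge 2$. This produces two distinct edges $e_1, e_2 \in F_e \cap C$, both different from $e$; the two blocks $\{e, e_1\}$ and $\{e, e_2\}$ are then distinct elements of $B$, and both block $C$. This completes the verification.

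I do not expect a real obstacle here. The two points to be careful about are (i) invoking \emph{feasibility} of $F_e$ from Theorem~\ref{thm:LBDC} — the approximation ratio has already been spent in the correctness lemma to bound $|F_e| \le (2k-1)f$, and it plays no role in the double-blocking argument; and (ii) noting that since the relevant edge ordering is exactly the insertion order, the latest edge $e$ of a cycle $C$ in the final $H$ is precisely the one whose consideration step sees all of $C \setminus \{e\}$ already present, which is what makes $P$ a genuine $u \leadsto v$ path in the LBDC instance. Everything else is the same bookkeeping as in Lemma~\ref{lem:doubleblocking}, with the Long Paths / Short Paths conditions replacing the two cases of that proof.
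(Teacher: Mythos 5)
Your proof is correct and follows essentially the same route as the paper: you define $B$ from the returned LBDC solutions exactly as the paper does, bound its size by $\sum_e |F_e| \le (2k-1)f|E(H)|$, and then read the two blocking conditions off the Long Paths and Short Paths feasibility requirements, respectively. The only (welcome) extra care is your explicit observation that $e \notin F_e$ and that cycles have at least three edges so the Short Paths condition genuinely applies; otherwise the argument is a direct match to the paper's.
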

\begin{proof}
For every edge $(u,v)$ the algorithm adds to $H$, let $F_{(u,v)}$ be the set of edges of size at most $(2k-1)f$ which caused the algorithm to add $(u,v)$, i.e., which was returned by the algorithm of Theorem~\ref{thm:LBDC}.  Note in particular that $F_{(u,v)}$ is a feasible length-bounded double-cut for the instance $(H, u,v, 2k-1, k)$ (where $H$ in this instance consists of the edges of the returned emulator that were considered before $(u,v)$ by the algorithm).  Define
\[
B := \{\{e,f\} : e \in E(H), f \in F_e\}.
\]
Since $|F_e| \leq (2k-1)f$ for all $e \in H$, we have that $|B| \leq (2k-1) f |E(H)|$ as required.  We now show that $B$ is a $2k$ double-blocking set.
\begin{itemize}
    \item Let $C$ be a cycle in $H$ with at most $2k$ edges, and let $(u,v) \in C$ be the last edge of $C$ added by the algorithm.  When we add $(u,v)$ to $H$, the other edges in $C$ form a $u \leadsto v$ path with at most $2k-1$ edges.  Since $F_{(u,v)}$ is a feasible LBDC solution, this implies that $|F_{(u,v)} \cap C| \geq 1$.  Let $f \in F_{(u,v)} \cap C$.  Then $\{(u,v), f\} \in B$, as required.
    \item Let $C$ be a cycle in $H$ with at most $k+1$ edges, and let $(u,v) \in C$ be the last edge of $C$ added by the algorithm.   When we add $(u,v)$ to $H$, the other edges in $C$ form a $u \leadsto v$ path with at most $k$ edges.  Since $F_{(u,v)}$ is a feasible LBDC solution, this implies that $|F_{(u,v)} \cap C| \geq 2$.  Let $e, f \in F_{(u,v)} \cap C$.  Then $\{(u,v),e\} \in B$ and $\{(u,v), f\} \in B$ by the definition of $B$, as required. \qedhere
\end{itemize}
\end{proof}

As Theorems~\ref{thm:3-blocked-upper} and~\ref{thm:k-blocked-upper} bound the size of any graph with a double-blocking set, their edge bounds therefore apply to the output emulator, just as they applied to the output emulator of Algorithm~\ref{alg:greedy}.  Since this double-blocking set is only $(2k-1)$ times larger than the double blocking set proved in Lemma~\ref{lem:doubleblocking} for Algorithm~\ref{alg:greedy}, the total increase in the final edge bound we get is only $O(k)$.

\section{Lower Bounds} \label{sec:lower}

\subsection{General Lower Bound}

\begin{theorem}
For all $k, f$, there are $n$-node graphs $G$ for which any $f$-EFT $k$-emulator $H$ has $|E(H)| = \Omega(nf)$.
\end{theorem}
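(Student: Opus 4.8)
The plan is to exhibit, for every $f \geq 1$, a bounded‑degree graph $G$ in which \emph{every} edge must literally survive in any $f$-EFT emulator, no matter how large the stretch $k$ is. Since the construction will have $\Theta(nf)$ edges, this yields the claimed lower bound. (If $f=0$ the statement is vacuous, and if $f \gtrsim n$ it is also vacuous since an $n$-node graph has fewer than $n^2$ edges; so I assume $1 \le f < n$.)

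\emph{Construction.} Set $d := \lfloor f/2 \rfloor + 1$, so that $2(d-1) \leq f$ while $d = \Theta(f)$. Let $G$ be any $n$-vertex graph of maximum degree at most $d$ with $\Omega(nd)$ edges — for concreteness, a disjoint union of $\lfloor n/(d+1)\rfloor$ cliques on $d+1$ vertices each, plus at most $d$ leftover isolated vertices (one could equally use a near-$d$-regular graph). A routine count gives $|E(G)| = \Theta(nd) = \Theta(nf)$.

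\emph{Main claim: $E(G) \subseteq E(H)$.} Suppose for contradiction that some edge $(u,v) \in E(G)$ does \emph{not} appear in $H$. Let $F$ consist of all edges of $G$ incident to $u$ or to $v$, except $(u,v)$ itself; since $u$ and $v$ each have at most $d$ incident edges and $(u,v)$ is excluded, $|F| \leq 2(d-1) \leq f$, so $F$ is a legal fault set. In $G \setminus F$ the only edge incident to $u$ (and likewise to $v$) is $(u,v)$, so the connected component of $u$ in $G \setminus F$ is exactly $\{u,v\}$; hence $\dist_{G\setminus F}(u,v) = 1$ while $\dist_{G\setminus F}(u,w) = \infty$ for every $w \notin \{u,v\}$. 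Now consider $H^F$, the emulator reweighted by distances in $G \setminus F$. Every edge of $H$ incident to $u$ leads to some $w \neq v$ (because $(u,v) \notin E(H)$), and such an edge receives weight $\dist_{G\setminus F}(u,w) = \infty$ in $H^F$. Thus $u$ has no finite-weight incident edge in $H^F$, so $\dist_{H^F}(u,v) = \infty$, contradicting the emulator guarantee $\dist_{H^F}(u,v) \leq k \cdot \dist_{G\setminus F}(u,v) = k < \infty$. Therefore every edge of $G$ lies in $H$, and $|E(H)| \geq |E(G)| = \Omega(nf)$.

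\emph{On the difficulty.} There is essentially no hard step: the entire argument is the one‑line local observation that an edge failure which isolates the pair $\{u,v\}$ from the rest of $G$ forces the emulator to physically contain the edge $(u,v)$. The only care needed is arithmetic — choosing $d = \Theta(f)$ so that the two "stars" around $u$ and $v$ together fit within the fault budget $f$ while $G$ still carries $\Theta(nf)$ edges — together with disposing of the degenerate parameter ranges ($f=0$, or $f$ comparable to $n$) where the bound is trivial or vacuous.
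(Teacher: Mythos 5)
Your proof is correct and takes essentially the same approach as the paper's: the paper uses an $f/2$-regular graph and the identical fault set (all edges incident to $u$ or $v$ except $(u,v)$ itself) to force $\dist_{H^F}(u,v)=\infty$ whenever $(u,v)\notin E(H)$, concluding $E(G)\subseteq E(H)$. Your use of disjoint cliques instead of an $f/2$-regular graph is a cosmetic substitution, and your bookkeeping of the degree and fault-budget arithmetic is if anything slightly more careful than the paper's.
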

\begin{proof}
Let $G$ be any $f/2$-regular $n$-node graph, and let $H$ be an $f$-EFT $k$-emulator of $G$.
Consider an edge $(u, v) \in E(G)$, and seeking contradiction, suppose $(u, v) \notin E(H)$.
Consider the fault set $F$ that contains all edges incident to $u$ and all edges incident to $v$, except for $(u, v)$ itself.
We have $\dist_{G \setminus F}(u, v) = 1$, and also for all nodes $x \notin \{u, v\}$, we have
$$\dist_{G \setminus F}(u, x) = \dist_{G \setminus F}(v, x) = \infty.$$
Thus, any edge incident to $u$ or $v$ in $H^F$ has weight $\infty$, and so $\dist_{H^F}(u, v) = \infty$.
Thus $H$ is not a $k$-emulator of $G \setminus F$, completing the contradiction.
\end{proof}

\subsection{Lower Bound for \texorpdfstring{$k=2$}{k=2}}

Here we prove the following lower bound for the setting $k=2$, i.e., emulators with stretch $3$:
\begin{theorem}
For any $f$, there are $n$-node graphs $G$ for which any $f$-EFT $3$-emulator $H$ has $|E(H)| = \Omega(f^{1/2} n^{3/2})$.
\end{theorem}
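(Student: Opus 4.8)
The plan is to take a ``blow-up'' of an extremal girth-$6$ bipartite graph. Set $t := \lceil f/2 \rceil$. Let $G_0$ be a bipartite graph on parts $X_0, Y_0$ with $|X_0| = |Y_0| = m := \Theta(n/f)$, girth at least $6$ (equivalently, $C_4$-free), and $|E(G_0)| = \Theta(m^{3/2})$ — e.g.\ a suitably truncated incidence graph of a projective plane. Form $G$ by replacing each vertex $v$ of $G_0$ with an independent set $\widehat{v}$ of $t$ ``copies,'' and replacing each edge $\{x,y\}$ of $G_0$ with a complete bipartite graph between $\widehat{x}$ and $\widehat{y}$. Then $G$ is bipartite with $n = 2tm$ vertices and $|E(G)| = t^2 \cdot \Theta(m^{3/2}) = \Theta(t^{1/2} n^{3/2}) = \Theta(f^{1/2} n^{3/2})$. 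I now fix an arbitrary $f$-EFT $3$-emulator $H$ of $G$ and aim to show $|E(H)| = \Omega(f^{1/2} n^{3/2})$.

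\textbf{Step 1 (structural constraint per copy-pair).} Fix $\{x,y\} \in E(G_0)$ and copies $x_a \in \widehat{x}$, $y_b \in \widehat{y}$. Let $F_{a,b}$ consist of all $\widehat{x}$--$\widehat{y}$ edges incident to $x_a$ or $y_b$ other than $(x_a, y_b)$; then $|F_{a,b}| = 2(t-1) \le f$ and $\dist_{G \setminus F_{a,b}}(x_a, y_b) = 1$, so $H$ must contain a path from $x_a$ to $y_b$ of $(G \setminus F_{a,b})$-weight at most $3$. Since $G$ is bipartite (no triangles, and in particular no $2$-edge path in $G$ between the opposite-side vertices $x_a,y_b$) and $G_0$ is $C_4$-free, a short case analysis on the number of edges of this $H$-path — a $2$-edge path must have weight profile $(1,2)$ or $(2,1)$; a $3$-edge path must have profile $(1,1,1)$, which together with $(x_a,y_b)$ forces a $4$-cycle in $G_0$; any path with $\ge 4$ edges has weight $\ge 4$ — shows that one of three things happens: $(x_a,y_b) \in E(H)$; or $H$ contains a \emph{type-A shortcut} $\{(x_a, y'_c),(y'_c, y_b)\}$ with $y' \in N_{G_0}(x) \setminus \{y\}$; or $H$ contains a \emph{type-B shortcut} $\{(x_a, x''_d),(x''_d, y_b)\}$ with $x'' \in N_{G_0}(y) \setminus \{x\}$. (Here one checks that in $G \setminus F_{a,b}$ such a shortcut edge has weight exactly $2$, realized through $x_a$, which is the unique surviving common neighbor.)

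\textbf{Step 2 (global counting).} There are $|E(G_0)| \cdot t^2 = \Theta(m^{3/2} t^2)$ copy-pairs (``demands''), and Step 1 assigns to each a witness in $H$ that is either a single edge (the direct case) or a $2$-edge shortcut of a very constrained shape. The heart of the argument is to bound how many demands a single edge of $H$ can serve as (part of) a witness, using $C_4$-freeness of $G_0$: the common neighbor of two $G_0$-vertices is unique when it exists, so the $G_0$-edges a fixed $H$-edge can participate in as a shortcut are tightly restricted. One must also account for the ``hidden'' edges — e.g.\ a type-A edge $(y'_c,y_b)$ can only serve many demands $(x_a,y_b)$, ranging over $a$, if $H$ contains many edges from $y'_c$ into $\widehat{x}$, which are themselves edges near the super-edge $E_{x,y'}$. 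Charging carefully and applying a convexity/double-counting argument over $H$'s degrees should convert the $\Theta(m^{3/2}t^2)$ demands into the bound $|E(H)| = \Omega(m^{3/2}t^2) = \Omega(f^{1/2} n^{3/2})$.

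\textbf{Main obstacle.} Unlike the spanner setting, no individual edge of $G$ is forced into $H$: an emulator may ``repair'' a missing direct edge $(x_a,y_b)$ using a reweighted shortcut edge of weight $2$ that it keeps. So the lower bound cannot proceed by summing over forced edges; the entire difficulty is in Step 2, which must simultaneously (i) guarantee, via the choice of fault sets, that these shortcut edges are indeed cheap enough to matter (weight $2$, not $\ge 3$) and (ii) show that a single shortcut edge cannot be reused across too many copy-pairs — and it is precisely the $C_4$-freeness of the base graph $G_0$ that makes (ii) possible.
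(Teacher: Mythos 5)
Your Step~1 is essentially correct: with the fault set $F_{a,b}$ of size $\le f$ that isolates the single copy-edge $(x_a,y_b)$ within its super-edge, the surviving $\le 3$-weight $H$-paths are exactly the direct edge or one of the two ``shortcut'' shapes you identify, and the bipartiteness/$C_4$-freeness case analysis is right. The gap is entirely in Step~2, and it is not a gap that a careful convexity or double-counting argument over degrees can close.

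The problem is that a single shortcut edge can legitimately serve $\Theta(f)$ different demands. Concretely, fix a super-edge $\{x,y\}$ and a neighbor $y'$ of $x$ in $G_0$. If $H$ contains the $t$ star edges $(x_a, y'_1)$ for $a\in[t]$ and the $t$ emulator-only edges $(y'_1, y_b)$ for $b\in[t]$ (only $2t$ edges total), then for \emph{every} pair $(a,b)$ the path $x_a \to y'_1 \to y_b$ has weight $\le 3$ in $H^{F_{a,b}}$: $(x_a,y'_1)$ is unfaulted and has weight $1$, and $(y'_1,y_b)$ has weight exactly $2$ because $x_a$ survives as a common neighbor in $G\setminus F_{a,b}$. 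So all $t^2$ demands of this super-edge are served by $2t$ edges, and any counting of the form ``each $H$-edge participates in $O(1)$ witnesses'' is off by a factor of $\Theta(t)=\Theta(f)$. This is exactly the factor you need, so your Step~2 as described can only yield $|E(H)| = \Omega(m^{3/2} t) = \Omega(n^{3/2}/f^{1/2})$, not $\Omega(n^{3/2} f^{1/2})$. The $C_4$-freeness of $G_0$ pins down the base vertex $x$ given $y$ and $y'$, which is useful, but it does not control the free index $a$ that ranges over all copies in $\widehat{x}$.

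The paper closes this gap by a qualitatively different move: instead of using one small fault set per demand and then counting witnesses afterward, it builds, per \emph{missing} edge $(u,v)$, a single large fault set $F$ that in addition to the $2f-1$ ``isolating'' faults also knocks out every candidate shortcut — concretely, for every $H$-edge $(u,x)$ with a realizing $2$-path $(u,v,x)$ it faults $(v,x)$, and symmetrically for $v$ (the sets $Z_u, Z_v$). The point is that $F$ may be far larger than $f$, and that is a feature: one then shows that if $(u,v)\notin E(H)$ then $\dist_{H^F}(u,v)>3$, so the fact that $H$ is a $\Theta(f)$-EFT emulator \emph{forces} $|F| > \Theta(f)$, hence $|Z_u|>f$ or $|Z_v|>f$, and each gives $>f$ distinct $H$-edges to charge to the pair $(u,C_v)$ or $(C_u,v)$. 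This converts each missing edge into $>f$ new $H$-edges without ever needing to bound how many demands a single edge can witness. If you want to salvage your demand/witness framework, you would need to somehow show that the shortcut edges cannot be reused $\Theta(f)$ times, and that claim is false (the star example above); the fix is to adaptively enlarge the fault set to block the specific shortcuts $H$ is offering, as the paper does.
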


We use a standard construction in prior work \cite{BDPW18, BDN22}.
Start with a graph on $n/f$ nodes,\footnote{We ignore issues of non-integrality, which affect our bounds only by lower-order terms.} $\Omega(n/f)^{3/2}$ edges, and girth $>4$.
It will be a little convenient later to assume this graph is bipartite (without loss of generality).
Then, blow up each node $v$ into a ``cloud" $C_v$, which contains $f$ copies of $v$.
For each edge $(u, v)$ we include an edge between every copy in $C_u$ and every copy in $C_v$.
This graph has $\Omega(f^{1/2} n^{3/2})$ edges.
Let $H$ be an $4f$-EFT $3$-emulator of $G$ (with a sufficiently large implicit constant hidden in the $O$).
We will prove that $|E(H)| = \Omega(f^{1/2} n^{3/2})$.

Consider an edge $(u, v) \in E(G)$ and define the following fault set $F$, whose size $|F|$ might be much larger than $4f$.
See Figure \ref{fig:k=3lbfaults} for a picture.
\begin{itemize}
    \item Fault all edges from $u$ to any node in the cloud $C_v$, and fault all edges from $v$ to any node in the cloud $C_u$, except for $(u, v)$ itself.
    This contributes $2f-1$ edges to $F$.
    
    \item For each edge $(u, x) \in E(H)$ for which $u, x$ lie in different clouds and there exists a $2$-path in $G$ of the form $(u, v, x)$, fault the edge $(v, x)$.
    Let $Z_u \subseteq F$ be all edges faulted in this way.
    
    \item For each edge $(t, v) \in E(H)$ for which there exists a $2$-path of the form $(t, u, v)$, fault the edge $(t, u)$.
    Let $Z_v \subseteq F$ be all edges faulted in this way.
\end{itemize}

\begin{figure} [ht] \centering
\begin{tikzpicture}
\draw (0,0) ellipse (0.5 and 2);
\draw (2,0) ellipse (0.5 and 2);
\node at (0, -2.5) {$C_u$};
\node at (2, -2.5) {$C_v$};

\draw [fill=black] (0, 0) circle [radius=0.2];
\draw [fill=black] (0, 1) circle [radius=0.15];
\draw [fill=black] (0, -1) circle [radius=0.15];

\draw [fill=black] (2, 0) circle [radius=0.2];
\draw [fill=black] (2, 1) circle [radius=0.15];
\draw [fill=black] (2, -1) circle [radius=0.15];

\node at (0, -0.4) {$u$};
\node at (2, -0.4) {$v$};

\draw [line width=3] (0, 0) -- (2, 0);
\draw [red, ultra thick, dotted] (0, 0) -- (2, 1);
\draw [red, ultra thick, dotted] (0, 0) -- (2, -1);
\draw [red, ultra thick, dotted] (0, 1) -- (2, 0);
\draw [red, ultra thick, dotted] (0, -1) -- (2, 0);

\draw [fill=black] (-2, 1) circle [radius=0.2];
\draw [fill=black] (4, 1) circle [radius=0.2];
\node at (-2, 0.6) {$t$};
\node at (4, 0.6) {$x$};

\draw [line width = 3] (-2, 1) to[bend left = 5] (2, 0);
\draw [ultra thick, blue, dotted] (-2, 1) -- (0, 0);

\draw [line width = 3] (0, 0) to[bend left = 5] (4, 1);
\draw [ultra thick, yellow!50!black, dotted] (2, 0) -- (4, 1);

\end{tikzpicture}
\caption{\label{fig:k=3lbfaults} The dotted lines in this diagram are edges in the fault set $F$ associated to $(u, v)$, and the solid black lines are edges in $H$ that are not contained in $F$.  The dotted blue edge on the left is in $Z_v$, and the dotted yellow edge on the right is in $Z_u$.  For clarity, we have omitted some (unfaulted) edges in this picture going between $C_u$ and $C_v$.}
\end{figure}
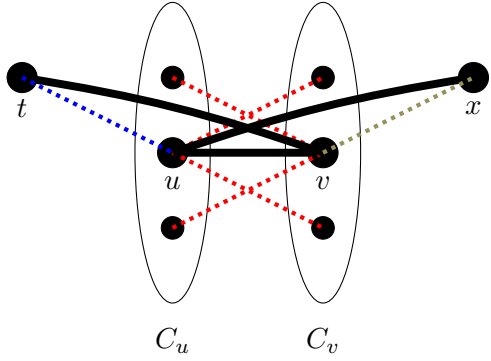

Recall that $H^F$ is the reweighted version of $H$ under this fault set $F$.
We will describe edges in $H^F$ as ``spanner edges'' if the edge is also in $G \setminus F$, or ``emulator edges'' if not.

\begin{claim}
If $(u, v) \notin E(H)$, then $\dist_{H^F}(u, v) > 3$.
\end{claim}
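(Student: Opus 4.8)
The plan is to show that under the fault set $F$ associated to the non-edge $(u,v)$, any short $u \leadsto v$ path in $H^F$ is impossible, so the distance must exceed $3$. I would start by classifying the edges of $H^F$ incident to $u$ or $v$. Since all of $u$'s edges into $C_v$ are faulted (and symmetrically for $v$ into $C_u$), and $(u,v) \notin E(H)$, the only way for $H^F$ to connect $u$ and $v$ cheaply is through a path leaving $u$ to some node in $C_u$ or to an ``outside'' node, then eventually reaching $v$. The key point is that a stretch-$3$ path between $u,v$ in $H^F$ has length at most $3$ (since $\dist_{G\setminus F}(u,v)=1$), so it uses at most $3$ emulator/spanner edges of weight summing to at most $3$.

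The core of the argument is to rule out each possible short path. First I would observe that every edge of $H$ incident to $u$ has weight at least $2$ in $H^F$: such an edge $(u,x)$ either goes to a node $x$ in the same cloud $C_u$ (so $\dist_{G\setminus F}(u,x) = 2$, routing through a common neighbor cloud — here I'd use the bipartite/girth structure to confirm a length-$2$ connection survives but no length-$1$ one exists since $C_u$ is independent), or goes to a node $x$ outside $C_u \cup C_v$; in the latter case, if there is a $2$-path $(u,v,x)$ in $G$ then $(v,x) \in Z_u \subseteq F$, and I'd argue the ``natural'' short reroute is gone so $\dist_{G\setminus F}(u,x) \geq 2$ as well, while if there is no such $2$-path then $\dist_{G\setminus F}(u,x) \geq 2$ simply because the only length-$1$ route is the edge itself and any other route through $v$'s side is blocked or long. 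A symmetric statement holds for edges incident to $v$ (using $Z_v$). Consequently, any $u\leadsto v$ path in $H^F$ must use at least one edge at $u$ of weight $\geq 2$ and at least one edge at $v$ of weight $\geq 2$; if these are the same edge it would directly connect $u$ and $v$, but $(u,v) \notin E(H)$, so they are distinct, forcing total weight $\geq 4 > 3$.

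The step I expect to be the main obstacle is the careful case analysis establishing that edges of $H$ incident to $u$ really do all get reweighted to $\geq 2$ in $H^F$ — in particular handling the edge $(u,x)$ where $x$ lies outside both clouds and verifying that $Z_u$ indeed faults the unique ``shortcut'' $(v,x)$, and that no \emph{other} length-$1$ path from $u$ to $x$ exists (which follows since there is at most one edge between any pair of nodes) and no surviving length-$\le 1$ detour exists. I would also need to be slightly careful about the case where $x \in C_v$ or $x \in C_u$: if $x \in C_v$ then $(u,x)$ was faulted outright (infinite weight), and if $x \in C_u$ the girth-$>4$ and bipartiteness of the base graph guarantee $\dist_{G\setminus F}(u,x)=2$ via some surviving common-neighbor cloud, not $1$. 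Once all incident edges at $u$ (and at $v$) are confirmed to have weight $\geq 2$ in $H^F$, the conclusion $\dist_{H^F}(u,v) \geq 4 > 3$ is immediate, and the claim follows.
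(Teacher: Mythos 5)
There is a genuine gap: your key lemma --- that every edge of $H$ incident to $u$ gets weight $\geq 2$ in $H^F$ --- is false, and the rest of the proof collapses without it. Concretely, let $C_w$ be a cloud adjacent to $C_u$ in the base graph with $w \neq v$, and suppose some $x \in C_w$ has $(u,x) \in E(H)$ but $(x,v) \notin E(H)$. Then $(u,x) \in E(G)$ (the blow-up puts all edges between copies in adjacent clouds), and $(u,x) \notin F$: it is not a $u$-to-$C_v$ or $v$-to-$C_u$ edge, not in $Z_u$ (those edges are incident to $v$), and not in $Z_v$ (that set only faults $(t,u)$ when $(t,v) \in E(H)$, but here $(x,v) \notin E(H)$). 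Hence $w_F(u,x) = \dist_{G\setminus F}(u,x) = 1$. In your third case you write ``the only length-$1$ route is the edge itself,'' which is exactly right, but it means the distance is $1$, not $\geq 2$. The sets $Z_u, Z_v$ are much more targeted than you assume: $Z_u$ faults $(v,x)$ only to destroy the specific $2$-hop detour $u \to v \to x$ that would make an \emph{emulator} edge $(u,x)$ have weight exactly $2$, and $Z_v$ faults $(t,u)$ only when $(t,v) \in E(H)$; neither is designed to, nor does it, uniformly reweight edges at $u$ or $v$.

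The paper sidesteps the need for a uniform incident-weight bound by doing a hop-count case analysis with a finer spanner-edge/emulator-edge distinction. A $1$-hop path fails since $(u,v) \notin E(H)$. For $2$-hop paths, two spanner edges of weight $1$ are ruled out by bipartiteness (a $2$-hop $u\leadsto v$ spanner path plus the edge $(u,v)$ would contradict bipartiteness), two emulator edges give weight $\geq 4$, and the mixed case (one weight-$1$ spanner edge plus one weight-$2$ emulator edge) is exactly the configuration that $Z_u, Z_v$ are built to kill together with girth $> 4$. For $3$-hop paths of length $\leq 3$, all three edges must be spanner edges, which is again ruled out by girth $> 4$ combined with the first group of faults. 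You would need to adopt this path-by-path style; the blanket reweighting claim at the heart of your argument does not hold.
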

\begin{proof}
We will prove this claim in three cases.
We argue that there is no one-edge $u \leadsto v$ path in $H^F$, there is no \emph{short} two-hop $u \leadsto v$ path in $H^F$, and there is no \emph{short} three-hop $u \leadsto v$ path in $H^F$.
Since all edge weights in $H^F$ are at least $1$, any path with $\ge 4$ hops has length $\ge 4$ and may be ignored.
\begin{itemize}
    \item (One-hop paths) By assumption, $(u, v) \notin E(H)$, so there is no $1$-hop path from $u$ to $v$.
    
    \item (Two-hop paths) We next argue that there is no $2$-hop path from $u$ to $v$ of length $\le 3$.
    \begin{itemize}
    \item Since $G$ is bipartite and $(u, v) \in E(G)$, there can be no $2$-hop $u \leadsto v$ path in $H^F$ that specifically uses two \emph{spanner} edges.
    
    \item Each emulator edge in $H^F$ has weight at least $2$, so any $2$-hop $u \leadsto v$ path in $H^F$ that uses two \emph{emulator} edges has length $\ge 4$.
    
    \item The remaining subcase is to argue that there is no $2$-hop $u \leadsto v$ path of total length $\le 3$ in $H^F$ that uses exactly one spanner edge and one emulator edge.
    Suppose towards contradiction that such a path $\pi$ exists, and note that its emulator edge needs to have weight exactly $2$.
    Since $G$ has girth $>4$, the path $\pi$ cannot correspond to a $u \leadsto v$ $3$-path in $G$ that avoids the edge $(u, v)$.
    It follows that the emulator edge in $\pi$ either has the form $(u, x)$ and there is a $2$-path of the form $(u, v, x)$ in $G$, or it has the form $(t, v)$ and there is a $2$-path $(t, u, v)$ in $G$.
    In either case, the associated spanner edge $(v, x)$ or $(t, u)$ is included in $Z_u$ or $Z_v$ (respectively).
    Thus the remaining edge in $\pi$ would not be a spanner edge, completing the contradiction.
    \end{itemize}
    
    \item (Three-hop paths) Finally, we argue that there is no $3$-edge path from $u$ to $v$ of length $\le 3$.
    Such a path would need to use all spanner edges, since emulator edges have weight $\ge 2$.
    Since $(u, v) \notin E(H^F)$ and the original graph has girth $>4$, no such $3$-path may exist. \qedhere
\end{itemize}
\end{proof}

\begin{claim}
Either $(u, v) \in E(H)$, $|Z_u| > f$, or $|Z_v| > f$.
\end{claim}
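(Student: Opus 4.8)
The plan is to argue by contradiction. Suppose none of the three alternatives holds, i.e.\ $(u,v) \notin E(H)$ and both $|Z_u| \le f$ and $|Z_v| \le f$. I will show that this makes the fault set $F$ small enough to fall within the tolerance of $H$, which contradicts the previous claim that $\dist_{H^F}(u,v) > 3$.

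First I would bound $|F|$. By construction $F$ is the union of three pieces: the edges joining $u$ to $C_v$ and $v$ to $C_u$ with $(u,v)$ itself removed (at most $2f-1$ edges), the set $Z_u$, and the set $Z_v$. Since $Z_u$ and $Z_v$ are exactly the second and third pieces and both lie in $F$, we get $|F| \le (2f-1) + |Z_u| + |Z_v|$, which under the contradiction hypothesis is at most $(2f-1) + f + f = 4f-1 \le 4f$. The only thing to check here is the count for the first piece, which is immediate: $u$ has exactly $f$ neighbors in $C_v$ and $v$ has exactly $f$ neighbors in $C_u$, and these two stars share only the edge $(u,v)$, which is then discarded.

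Next I would invoke the fault tolerance of $H$. Since $H$ is a $4f$-EFT $3$-emulator of $G$ and $|F| \le 4f$, the reweighted graph $H^F$ is a $3$-emulator of $G \setminus F$. The edge $(u,v) \in E(G)$ was deliberately kept out of $F$, so it survives and $\dist_{G \setminus F}(u,v) = 1$; hence $\dist_{H^F}(u,v) \le 3$. But the previous claim asserts that, because $(u,v) \notin E(H)$, we have $\dist_{H^F}(u,v) > 3$ --- a contradiction. This forces at least one of the three alternatives to hold, proving the claim.

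I do not anticipate a genuine obstacle: all of the real content sits in the preceding distance lower bound, and this claim is merely the bookkeeping observation that if neither $Z_u$ nor $Z_v$ is large, then $|F|$ stays within the $4f$ fault budget that $H$ is built to tolerate.
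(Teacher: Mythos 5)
Your proposal is correct and follows the paper's own argument: both proceed by contradiction, bound $|F|$ by $(2f-1) + |Z_u| + |Z_v| \le 4f$ when the three alternatives fail, and then invoke the $4f$-EFT guarantee to contradict the preceding claim that $\dist_{H^F}(u,v) > 3$. Your write-up is a slightly more detailed version of the same argument; there is no substantive difference.
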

\begin{proof}
Suppose $(u, v) \notin E(H)$.
From the previous claim, we have $\dist_{H \setminus F}(u, v) > 3$.
Since $H$ is a $4f$-EFT $3$-emulator, this is possible only if $|F| > 4f$, which is possible only if $|Z_u| > f$ or $|Z_v| > f$.
\end{proof}

Now we count edges in $E(H)$.
For each pair of adjacent clouds $C_u, C_v$, there are two cases:
\begin{itemize}
\item Suppose that at least half of the edges in $C_u \times C_v$ are present in $E(H)$.
In this case, this cloud pair contributes $\Omega(f^2)$ edges to $E(H)$.

\item Otherwise, suppose that at least half of the edges in $C_u \times C_v$ are missing in $E(H)$.
Let $(u, v)$ be one such missing edge.
By our previous claim, we have $|Z_u| > f$ or $|Z_v| > f$.
There are two cases:
\begin{itemize}
\item ($|Z_u| > f$) In this case, we have $f$ edges in $E(H)$ of the form $(t, v)$ with a $t \leadsto v$ $2$-path in $G$.
Notice that all such $2$-paths have the form $(t, u', v)$ with $u' \in C_u$.
We may charge all $> f$ of these edges to the pair $(C_u, v)$.

\item ($|Z_v| > f$) In this case, following identical logic to the previous case, we may charge $>f$ edges to the pair $(u, C_v)$.
\end{itemize}
To summarize, at least half of the edges in $C_u \times C_v$ are missing in $E(H)$, and each missing edge goes between two nodes $(u, v)$ where we have charged either $(u, C_v)$ or $(C_u, v)$ for $>f$ emulator edges.
It follows that $\Omega(f^2)$ total emulator edges are charged in total when considering $C_u \times C_v$.
(We remark that each emulator edge is charged twice in total; e.g., once from the cloud pair $C_u \times C_v$ and again from the cloud pair $C_t \times C_u$, but this factor of two may be ignored.)
\end{itemize}

Putting it together: for each of the $\Omega((n/f)^{3/2})$ adjacent cloud pairs in $G$, we charge $\Omega(f^2)$ edges in $H$ (due to one case or the other).
Thus we have
$$|E(H)| = \Omega\left(f^2 \cdot (n/f)^{3/2}\right) = \Omega\left( f^{1/2} n^{3/2} \right).$$

\bibliography{refs}

\begin{thebibliography}{10}

\bibitem{ahmed2020graph}
Reyan Ahmed, Greg Bodwin, Faryad~Darabi Sahneh, Keaton Hamm, Mohammad
  Javad~Latifi Jebelli, Stephen Kobourov, and Richard Spence.
\newblock Graph spanners: A tutorial review.
\newblock {\em Computer Science Review}, 37:100253, 2020.

\bibitem{AlthoferDDJS:93}
Ingo Alth{\"{o}}fer, Gautam Das, David~P. Dobkin, Deborah Joseph, and
  Jos{\'{e}} Soares.
\newblock On sparse spanners of weighted graphs.
\newblock {\em Discrete {\&} Computational Geometry}, 9:81--100, 1993.

\bibitem{ADGHT06}
Yair Amir, Claudiu Danilov, Stuart Goose, David Hedqvist, and Andreas Terzis.
\newblock An overlay architecture for high-quality voip streams.
\newblock {\em IEEE Transactions on Multimedia}, 8(6):1250--1262, 2006.
\newblock \href {https://doi.org/10.1109/TMM.2006.884609}
  {\path{doi:10.1109/TMM.2006.884609}}.

\bibitem{RON}
David Andersen, Hari Balakrishnan, Frans Kaashoek, and Robert Morris.
\newblock Resilient overlay networks.
\newblock {\em SIGOPS Oper. Syst. Rev.}, 35(5):131?145, October 2001.
\newblock \href {https://doi.org/10.1145/502059.502048}
  {\path{doi:10.1145/502059.502048}}.

\bibitem{BWDA17}
Amy Babay, Emily Wagner, Michael Dinitz, and Yair Amir.
\newblock Timely, reliable, and cost-effective internet transport service using
  dissemination graphs.
\newblock In {\em IEEE 37th International Conference on Distributed Computing
  Systems (ICDCS)}, pages 1--12, 2017.
\newblock \href {https://doi.org/10.1109/ICDCS.2017.63}
  {\path{doi:10.1109/ICDCS.2017.63}}.

\bibitem{BDN22}
Greg Bodwin, Michael Dinitz, and Yasamin Nazari.
\newblock {Vertex Fault-Tolerant Emulators}.
\newblock In Mark Braverman, editor, {\em 13th Innovations in Theoretical
  Computer Science Conference (ITCS 2022)}, volume 215 of {\em Leibniz
  International Proceedings in Informatics (LIPIcs)}, pages 25:1--25:22,
  Dagstuhl, Germany, 2022. Schloss Dagstuhl -- Leibniz-Zentrum f{\"u}r
  Informatik.
\newblock URL: \url{https://drops.dagstuhl.de/opus/volltexte/2022/15621}, \href
  {https://doi.org/10.4230/LIPIcs.ITCS.2022.25}
  {\path{doi:10.4230/LIPIcs.ITCS.2022.25}}.

\bibitem{BDPW18}
Greg Bodwin, Michael Dinitz, Merav Parter, and Virginia~Vassilevska Williams.
\newblock Optimal vertex fault tolerant spanners (for fixed stretch).
\newblock In Artur Czumaj, editor, {\em Proceedings of the Twenty-Ninth Annual
  {ACM-SIAM} Symposium on Discrete Algorithms, {SODA} 2018, New Orleans, LA,
  USA, January 7-10, 2018}, pages 1884--1900. {SIAM}, 2018.

\bibitem{BDR21}
Greg Bodwin, Michael Dinitz, and Caleb Robelle.
\newblock Optimal vertex fault-tolerant spanners in polynomial time.
\newblock In {\em Proceedings of the Thirty-Second Annual {ACM-SIAM} Symposium
  on Discrete Algorithms, {SODA} 2021}, 2021.

\bibitem{BDR22}
Greg Bodwin, Michael Dinitz, and Caleb Robelle.
\newblock Partially optimal edge fault-tolerant spanners.
\newblock In {\em Proceedings of the Thirty-Thurd Annual {ACM-SIAM} Symposium
  on Discrete Algorithms, {SODA} 2022}, 2022.

\bibitem{BP19}
Greg Bodwin and Shyamal Patel.
\newblock A trivial yet optimal solution to vertex fault tolerant spanners.
\newblock In {\em Proceedings of the 2019 ACM Symposium on Principles of
  Distributed Computing}, PODC ’19, page 541–543, New York, NY, USA, 2019.
  Association for Computing Machinery.
\newblock \href {https://doi.org/10.1145/3293611.3331588}
  {\path{doi:10.1145/3293611.3331588}}.

\bibitem{ChechikLPR:10}
Shiri Chechik, Michael Langberg, David Peleg, and Liam Roditty.
\newblock Fault tolerant spanners for general graphs.
\newblock {\em {SIAM} J. Comput.}, 39(7):3403--3423, 2010.

\bibitem{DinitzK:11}
Michael Dinitz and Robert Krauthgamer.
\newblock Fault-tolerant spanners: better and simpler.
\newblock In {\em Proceedings of the 30th Annual {ACM} Symposium on Principles
  of Distributed Computing, {PODC} 2011, San Jose, CA, USA, June 6-8, 2011},
  pages 169--178, 2011.

\bibitem{DR20}
Michael Dinitz and Caleb Robelle.
\newblock Efficient and simple algorithms for fault-tolerant spanners.
\newblock In {\em Proceedings of the 2020 ACM Symposium on Principles of
  Distributed Computing}, PODC ’20, 2020.

\bibitem{dor2000all}
Dorit Dor, Shay Halperin, and Uri Zwick.
\newblock All-pairs almost shortest paths.
\newblock {\em SIAM Journal on Computing}, 29(5):1740--1759, 2000.

\bibitem{erdHos1964extremal}
Paul Erd{\H{o}}s.
\newblock Extremal problems in graph theory.
\newblock In {\em In Theory of Graphs and its Applications, Proc. Sympos.
  Smolenice}, 1964.

\bibitem{Par22}
Merav Parter.
\newblock Nearly optimal vertex fault-tolerant spanners in optimal time:
  Sequential, distributed, and parallel.
\newblock In {\em Proceedings of the 54th Annual ACM SIGACT Symposium on Theory
  of Computing}, STOC 2022, page 1080–1092, New York, NY, USA, 2022.
  Association for Computing Machinery.
\newblock \href {https://doi.org/10.1145/3519935.3520047}
  {\path{doi:10.1145/3519935.3520047}}.

\bibitem{PelegS:89}
David Peleg and Alejandro~A. Sch{\"{a}}ffer.
\newblock Graph spanners.
\newblock {\em Journal of Graph Theory}, 13(1):99--116, 1989.

\bibitem{PelegU:89}
David Peleg and Jeffrey~D. Ullman.
\newblock An optimal synchronizer for the hypercube.
\newblock {\em {SIAM} J. Comput.}, 18(4):740--747, 1989.

\bibitem{Detour}
S.~Savage, T.~Anderson, A.~Aggarwal, D.~Becker, N.~Cardwell, A.~Collins,
  E.~Hoffman, J.~Snell, A.~Vahdat, G.~Voelker, and J.~Zahorjan.
\newblock Detour: informed internet routing and transport.
\newblock {\em IEEE Micro}, 19(1):50--59, 1999.
\newblock \href {https://doi.org/10.1109/40.748796}
  {\path{doi:10.1109/40.748796}}.

\bibitem{OverQos}
Lakshminarayanan Subramanian, Ion Stoica, Hari Balakrishnan, and Randy~H. Katz.
\newblock Overqos: An overlay based architecture for enhancing internet qos.
\newblock In {\em Proceedings of the 1st Conference on Symposium on Networked
  Systems Design and Implementation - Volume 1}, NSDI'04, page~6, USA, 2004.
  USENIX Association.

\end{thebibliography}

\newpage
\appendix

\section{Graph Cleaning} \label{app:cleaning}
The following lemma and proof are essentially standard.
\begin{lemma} \label{lem:normalize}
For any $n$-node graph $H$ and double-blocking set $B$ of size $|B| \le f|E(H)|$, there exists a $\Theta(n)$-node graph $H'$ with and double-blocking set $B'$ with the following properties:
\begin{itemize}
\item Every edge in $H$ participates in $\le 2f$ pairs in $B$.
\item Letting $d$ be the average degree in $H$, every node in $H'$ has degree $\Theta(d)$.
\end{itemize}
\end{lemma}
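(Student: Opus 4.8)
The plan is to obtain $H'$ from $H$ by two rounds of deletions, each discarding only a constant fraction of the edges, and to take $B'$ to be the restriction of $B$ to the surviving edges, keeping the edge ordering inherited from $H$. The key bookkeeping fact, used throughout, is that this restriction is automatically well-behaved: for any subgraph $H'' \subseteq H$, a cycle $C$ of $H''$ is also a cycle of $H$, the last edge of $C$ in the $H$-ordering is still its last edge in the restricted ordering, and any block $\{e_1,e_2\}\in B$ witnessing a (double-)blocking of $C$ has both edges on $C$, hence in $H''$. So the restriction of $B$ to $H''$ remains a $2k$ double-blocking set of $H''$ (Definition~\ref{def:doubleblocking}), and per-edge block multiplicities only decrease as we delete. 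It is also convenient to dispose of a degenerate case up front: if the average degree $d$ of $H$ is $O(f)$, then $|E(H)| = dn/2 = O(fn)$ and the bounds of Theorems~\ref{thm:3-blocked-upper} and~\ref{thm:k-blocked-upper} hold trivially, so we may assume $d = \omega(f)$.

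\emph{Step 1 (few blocks per edge).} While some edge $e$ is contained in more than $2f$ pairs of the current blocking set, delete $e$ and all pairs containing it. Each iteration removes more than $2f$ pairs, and we begin with at most $|B|\le f\,|E(H)|$ pairs, so this loop runs fewer than $|E(H)|/2$ times. Hence the resulting graph $H_1$ has $|E(H_1)| \ge |E(H)|/2$, carries a $2k$ double-blocking set $B_1$ by the remark above, and has every edge in at most $2f$ pairs of $B_1$.

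\emph{Step 2 (near-regularity).} We alternately prune low- and high-degree vertices. Let $d'$ denote the current average degree. First iteratively delete every vertex of degree $< d'/4$: each deletion removes fewer than $d'/4$ edges, and there are at most $n$ vertices, so this removes fewer than $nd'/4 = |E|/2$ edges and leaves minimum degree $> d'/4$. Next delete every vertex whose degree exceeds a large constant times the updated average degree; a standard counting argument shows the high-degree vertices carry only a bounded fraction of the edge mass, so this too costs only a constant factor. Alternating these two operations contracts the ratio of maximum to minimum degree by a constant factor each round, so after $O(1)$ rounds we stabilize at a graph $H'$ with $|E(H')| = \Omega(|E(H)|)$ and all degrees within a constant factor of each other; with the constants chosen so that the average degree of the intermediate graphs stays $\Theta(d)$ throughout, $H'$ has $\Theta(nd) = \Theta(|E(H)|)$ edges on $\Theta(n)$ vertices. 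Taking $B'$ to be the restriction of $B$ to $E(H')$ finishes the proof, and the per-edge bound of Step~1 is preserved since multiplicities only decrease.

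I expect the main obstacle to be the high-degree pruning in Step~2: one must argue carefully that capping the maximum degree relative to a moving target (the average degree, which itself shifts as vertices are removed) destroys only a constant fraction of the edges and leaves $\Theta(n)$ vertices. This is routine in spirit but is the step where the thresholds and the order of operations have to be chosen with care; everything else is immediate once one has the observation that restricting $B$ to a subgraph of $H$ automatically yields a valid $2k$ double-blocking set of that subgraph with non-increasing per-edge multiplicities.
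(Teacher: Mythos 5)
Your Step 1 is identical to the paper's: repeatedly delete any edge lying in more than $2f$ blocks, noting that this removes at most $|B|/(2f) \le |E(H)|/2$ edges. Your general observation that restricting $B$ to a subgraph obtained by deletions preserves the double-blocking property (and can only decrease per-edge multiplicities) is also correct and used by the paper implicitly.

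Step 2 has a genuine gap. You propose to \emph{delete} vertices of degree more than a constant times the average degree, asserting that ``a standard counting argument shows the high-degree vertices carry only a bounded fraction of the edge mass.'' No such counting argument exists: Markov's inequality bounds the \emph{number} of high-degree vertices, not the number of edges \emph{incident} to them, and the latter can be essentially all of $E(H)$. Concretely, take $H$ to be the disjoint union of a clique on $n^{2/3}$ vertices and a path on the remaining $n - n^{2/3}$ vertices. The average degree is $d = \Theta(n^{1/3})$, the clique vertices have degree $\Theta(n^{2/3}) \gg d$, and deleting them removes a $1-o(1)$ fraction of the edges. Moreover, even the low-degree-deletion phase alone exposes a second problem: in the same example, deleting the path vertices leaves only $n^{2/3} = o(n)$ nodes, so the resulting graph cannot satisfy the $\Theta(n)$-node requirement of the lemma. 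Deletion alone cannot simultaneously control the edge count and the node count.

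The paper sidesteps both issues by \emph{splitting} each vertex of degree $\ge 2d$ into two nodes, distributing its edges equitably, and iterating. Splitting loses no edges, so it never threatens the edge bound; and since the end state has every node of degree $\Theta(d)$ and $\Theta(nd)$ edges, the node count is forced to be $\Theta(n)$ (in the clique-plus-path example, the $n^{2/3}$ surviving clique vertices get split into $\Theta(n)$ vertices of degree $\Theta(d)$). The low-degree deletions and the split operation both interact correctly with the inherited blocking set, which is why the paper can carry $B'$ through. To repair your proof you would need to replace the high-degree deletion step with this splitting step, or supply some other mechanism that bounds both the edges lost and the node count; as written, neither is under control.
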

\begin{proof}
We modify $H, B$ as follows.
First: while there exists an edge $e \in E(H)$ that participates in $>2f$ blocks, delete $e$ from $H$ and delete all blocks containing $e$ from $B$.
Since $|B| \le f|E(H)|$, we delete at most $|E(H)|/2$ edges in this process.
When complete, every edge participates in $\le 2f$ pairs in $B$, and the first point holds by reparametrizing $f \gets 2f$.

Second: fixing $d$ as the \emph{current} average degree in $H$, we modify as follows.
While there is a node of degree $\le d/4$, delete that node from $H$.
While there is a node of degree $\ge 2d$, split it into two new nodes, with edges split equitably between the two new nodes.
When this process terminates, it is clear that all remaining nodes have degree $\Theta(d)$.
We delete at most $nd/4 \le |E(H)|/2$ edges due to the first case, so the graph density changes by only a constant factor.
Since density and average degree each change by only a constant factor, it follows that the number of nodes $n$ only changes by a constant factor as well; that is, the final graph has $\Theta(n)$ nodes.
\end{proof}

This lemma implies that without loss of generality, when proving Theorems~\ref{thm:3-blocked-upper} and~\ref{thm:k-blocked-upper}, we may assume that every edge in $H$ participates in at most $2f$ blocks in $B$, and that every node has degree that is $\Theta$ of the average degree.  This is because if we prove Theorems~\ref{thm:3-blocked-upper} and~\ref{thm:k-blocked-upper} under these assumptions, then if there were some other graph $\hat H$ with a $2k$ double-blocking set of size at most $f|E(H)|$ that violated the upper bound of Theorem~\ref{thm:3-blocked-upper} or~\ref{thm:k-blocked-upper}, then Lemma~\ref{lem:normalize} would imply that there is some $H'$ which satisfies the extra assumption but violates the upper bounds of Theorems~\ref{thm:3-blocked-upper} and~\ref{thm:k-blocked-upper}.  This would be a contradiction.   





The assumption that $f \leq d / (Ck)$ is valid because we also pay a term $+fn$ in our final edge bound.  So if $f > d / (Ck)$ then the upper bounds of Theorems~\ref{thm:3-blocked-upper} and~\ref{thm:k-blocked-upper} include a term that is $\Theta(nd/k) = \Theta_k(nd)$, which is trivially satisfied since $|E(H)| = nd/2$ by definition.  


\section{Further Discussion on the Fault-Tolerant Emulator Model} \label{app:emulator-model}

Recall that in our model of fault-tolerant emulators, the length of an edge $\{u,v\}$ in the emulator ``automatically'' updates to the length of the shortest path in the graph post-faults.  That is, for fault set $F$, the length of $\{u,v\}$ is equal to $\dist_{G \setminus F}(u,v)$.  For example, consider Figure~\ref{fig:weightupdate}: under different fault sets, the top emulator edge is assigned different weights.

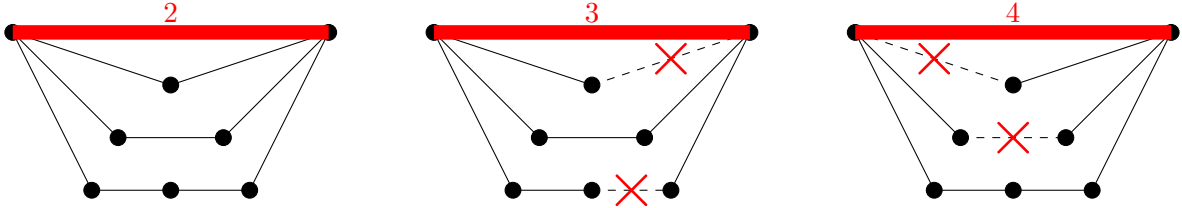
\begin{figure} \centering
\begin{tikzpicture}[scale=0.7]
\draw [fill=black] (0, 0) circle [radius=0.15];
\draw [fill=black] (6, 0) circle [radius=0.15];

\draw [fill=black] (3, -1) circle [radius=0.15];
\draw [fill=black] (2, -2) circle [radius=0.15];
\draw [fill=black] (4, -2) circle [radius=0.15];
\draw [fill=black] (1.5, -3) circle [radius=0.15];
\draw [fill=black] (3, -3) circle [radius=0.15];
\draw [fill=black] (4.5, -3) circle [radius=0.15];

\draw (0, 0) -- (3, -1) -- (6, 0);
\draw (0, 0) -- (2, -2) -- (4, -2) -- (6, 0);
\draw (0, 0) -- (1.5, -3) -- (3, -3) -- (4.5, -3) -- (6, 0);

\draw [line width = 0.5em, red] (0, 0) -- (6, 0);
\node [above] at (3, 0) {\color{red} \bf $2$};

\begin{scope}[shift={(8, 0)}]
\draw [fill=black] (0, 0) circle [radius=0.15];
\draw [fill=black] (6, 0) circle [radius=0.15];

\draw [fill=black] (3, -1) circle [radius=0.15];
\draw [fill=black] (2, -2) circle [radius=0.15];
\draw [fill=black] (4, -2) circle [radius=0.15];
\draw [fill=black] (1.5, -3) circle [radius=0.15];
\draw [fill=black] (3, -3) circle [radius=0.15];
\draw [fill=black] (4.5, -3) circle [radius=0.15];

\draw (0, 0) -- (3, -1);
\draw [dashed] (3, -1) -- (6, 0);
\node at (4.5, -0.5) {\Huge \bf \color{red} $\times$};

\draw (0, 0) -- (2, -2) -- (4, -2) -- (6, 0);

\draw (0, 0) -- (1.5, -3) -- (3, -3);
\draw [dashed] (3, -3) -- (4.5, -3);
\draw (4.5, -3) -- (6, 0);
\node at (3.75, -3) {\Huge \bf \color{red} $\times$};

\draw [line width = 0.5em, red] (0, 0) -- (6, 0);
\node [above] at (3, 0) {\color{red} \bf $3$};

\end{scope}

\begin{scope}[shift={(16, 0)}]
\draw [fill=black] (0, 0) circle [radius=0.15];
\draw [fill=black] (6, 0) circle [radius=0.15];

\draw [fill=black] (3, -1) circle [radius=0.15];
\draw [fill=black] (2, -2) circle [radius=0.15];
\draw [fill=black] (4, -2) circle [radius=0.15];
\draw [fill=black] (1.5, -3) circle [radius=0.15];
\draw [fill=black] (3, -3) circle [radius=0.15];
\draw [fill=black] (4.5, -3) circle [radius=0.15];

\draw [dashed] (0, 0) -- (3, -1);
\node at (1.5, -0.5) {\Huge \bf \color{red} $\times$};

\draw (3, -1) -- (6, 0);
\draw (0, 0) -- (2, -2);
\draw [dashed] (2, -2) -- (4, -2);
\node at (3, -2) {\Huge \bf \color{red} $\times$};

\draw (4, -2) -- (6, 0);
\draw (0, 0) -- (1.5, -3) -- (3, -3) -- (4.5, -3) -- (6, 0);

\draw [line width = 0.5em, red] (0, 0) -- (6, 0);
\node [above] at (3, 0) {\color{red} \bf $4$};

\end{scope}

\end{tikzpicture}
\caption{\label{fig:weightupdate} (Left) In the original, pre-failure input graph, the thick red emulator edge is assigned weight $2$.  (Middle) Under one pair of failing edges, the weight of the emulator edge would update to $3$.  (Right) Under a different pair of failing edges, the weight of the emulator edge would update to $4$.}
\end{figure}

Why is this a reasonable model for fault-tolerant emulators?  It was introduced in~\cite{BDN22}, who gave both theoretical and practical justification for this model.  We refer the interested reader to~\cite{BDN22}, but also provide some brief justification.

\subsection{Theoretical Justification}
The main justification is that weight updates are the only natural option.  Suppose that we do not update edge weights, i.e., we fix the weight of an edge $\{u,v\}$ to something and we leave it at that value no matter what set of failures occur.  Then in order to satisfy the first inequality of emulators, the requirement that the distance between $u$ and $v$ in the emulator (post-failures) be at least the distance between $u$ and $v$ in $G$ (post-failures), we need to set this weight to be quite large.  For example, if there is a fault set which disconnects $u$ from $v$, then we have no choice but to set of the weight of $\{u,v\}$ to $\infty$.  This immediately implies that we cannot satisfy the second inequality of emulators, the stretch bound, for small (or empty) fault sets.

\subsection{Overlay Networks and Fault Tolerance}

The practical motivation for this definition is surprisingly strong.  Since we do not specify how automatic edge updates might actually be achieved, this definition might seem like ``cheating'' from a practical point of view.  But, in fact, the opposite is true: such edge updates can be thought of as abstracting away a process which \emph{should} be abstracted away!

While emulators are used in many different applications, they arise naturally in \emph{overlay networks}.  An overlay network is a logical network that lives ``on top of'' another network (usually the Internet): the vertices represent real nodes, but the edges are logical rather than physical, and sending data across an overlay edge is implemented by sending the data from one endpoint to the other across the underlying network.  This allows a network operator to abstract away the complications of the Internet (or other underlying network), and instead act as if they are operating on a simpler network.  It also allows for more innovative and interesting algorithms and protocols: at this point in time it is essentially impossible to change the structure and protocols of the Internet, so networking innovation often happens on top of the Internet on overlay networks.  Overlay networks are extremely useful (even though they simply run on top of the Internet), and have been extensively studied, often either directly or indirectly using spanners, emulators, or related objects (e.g., \cite{BWDA17,RON,ADGHT06,Detour,OverQos}).

Since a logical overlay edge is implemented through an underlying network, the distance that packets will travel when using such an edge is determined by the routing algorithm of the underlying network.  And the vast majority of routing algorithms implement shortest paths.  So each overlay edge actually represents a shortest path in the underlying network between the endpoints.  See, for example, Figure~\ref{fig:overlay}.

Thus after faults $F$, packets that are send along the $\{u,v\}$ logical edge will actually be routed using the underlying routing algorithm, which will be a shortest path between the endpoint in the underlying graph post-faults.   That is, the length of the logical edge $\{u,v\}$ after faults $F$ will automatically become $\dist_{G \setminus F}(u,v)$, thanks to the underlying routing algorithm reconverging to a shortest path.  So our ``automatic edge updates'' are a completely reasonable model for overlay networks.  And, of course, emulators are a natural fit for overlay networks: we want our overlay network to be simple (i.e., sparse), but with the property that when we actually use our overlay (abstracting away the underlying graph), distances (i.e., latencies) do not increase too much.

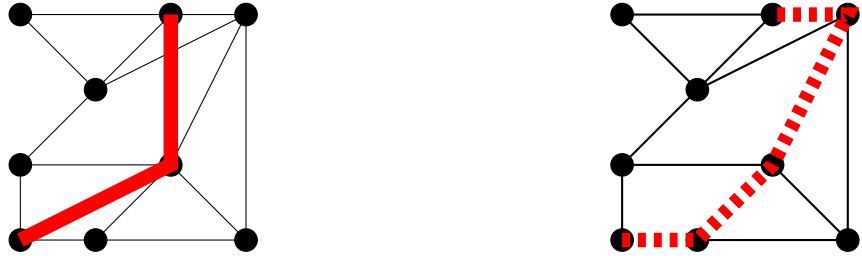
\begin{figure}[ht] \centering
\begin{tikzpicture}
\draw [fill=black] (0, 0) circle [radius=0.15];
\draw [fill=black] (1, 0) circle [radius=0.15];
\draw [fill=black] (0, 1) circle [radius=0.15];
\draw [fill=black] (2, 1) circle [radius=0.15];
\draw [fill=black] (3, 3) circle [radius=0.15];
\draw [fill=black] (2, 3) circle [radius=0.15];
\draw [fill=black] (0, 3) circle [radius=0.15];
\draw [fill=black] (1, 2) circle [radius=0.15];
\draw [fill=black] (3, 0) circle [radius=0.15];

\draw (0, 0) -- (1, 0);
\draw (0, 0) -- (0, 1);
\draw (0, 1) -- (1, 2);
\draw (1, 2) -- (0, 3);
\draw (1, 2) -- (3, 3);
\draw (1, 2) -- (2, 3);
\draw (3, 3) -- (2, 3);
\draw (0, 3) -- (2, 3);
\draw (2, 1) -- (3, 3);
\draw (2, 1) -- (3, 0);
\draw (1, 0) -- (3, 0);
\draw (1, 0) -- (2, 1);
\draw (0, 1) -- (2, 1);
\draw (3, 0) -- (3, 3);

\draw [line width = 0.5em, red] (0, 0) -- (2, 1) -- (2, 3);

\begin{scope}[shift={(8, 0)}]

\draw [fill=black] (0, 0) circle [radius=0.15];
\draw [fill=black] (1, 0) circle [radius=0.15];
\draw [fill=black] (0, 1) circle [radius=0.15];
\draw [fill=black] (2, 1) circle [radius=0.15];
\draw [fill=black] (3, 3) circle [radius=0.15];
\draw [fill=black] (2, 3) circle [radius=0.15];
\draw [fill=black] (0, 3) circle [radius=0.15];
\draw [fill=black] (1, 2) circle [radius=0.15];
\draw [fill=black] (3, 0) circle [radius=0.15];

\draw [thick] (0, 0) -- (0, 1);
\draw [thick] (0, 1) -- (1, 2);
\draw [thick] (1, 2) -- (0, 3);
\draw [thick] (1, 2) -- (3, 3);
\draw [thick] (1, 2) -- (2, 3);
\draw [thick] (0, 3) -- (2, 3);
\draw [thick] (2, 1) -- (3, 0);
\draw [thick] (1, 0) -- (3, 0);
\draw [thick] (0, 1) -- (2, 1);
\draw [thick] (3, 0) -- (3, 3);

\draw [line width = 0.5em, dashed, red] (0, 0) -- (1, 0) -- (2, 1) -- (3, 3) -- (2, 3);

\end{scope}
\end{tikzpicture}
\caption{\label{fig:overlay} (Left) An underlying network and a path of two edges from the overlay network on top of it.  (Right) The overlay network path will be automatically resolved into a path of physical links in the base graph, by resolving each edge into a shortest path between its endpoints.}
\end{figure}

\end{document}